\tikzstyle{state} = [circle,draw, inner sep = 0, minimum size = 15pt]
\theoremstyle{plain}
\newtheorem{thm}{Theorem}[section]
\newtheorem{prop}[thm]{Proposition}
\newtheorem{defn}[thm]{Definition}
\newtheorem{lem}[thm]{Lemma}
\newtheorem{cor}[thm]{Corollary}
\newtheorem{rem}[thm]{Remark}
\theoremstyle{remark}
\newtheorem{claim}{Claim}
\newtheorem*{claimproof}{Proof}
\theoremstyle{plain}
\newtheorem{corollary-restatable}[thm]{Corollary}
\newtheorem{theorem-restatable}[thm]{Theorem}
\newtheorem{lemma-restatable}[thm]{Lemma}
\newtheorem{proposition-restatable}[thm]{Proposition}
\theoremstyle{definition}
\newtheorem{ex}[thm]{Example}
\renewcommand{\O}{\mathcal{O}}
\newcommand{\N}{\mathbb{N}}
\newcommand{\shift}{\mathrm{shift}}
\newcommand{\cut}{\mathrm{cut}}
\newcommand{\eps}{\epsilon}
\newcommand{\last}{\mathrm{last}}
\newcommand{\dist}{\mathrm{dist}}
\newcommand{\pdist}{\mathrm{pdist}}
\newcommand{\Prob}{\mathrm{Prob}}
\DeclareMathOperator{\Acc}{Acc}
\DeclareMathOperator{\ps}{ps}
\title{Sliding window property testing\\ for regular languages}
\author[1]{Moses Ganardi}
\author[1]{Danny Hucke}
\author[1]{Markus Lohrey}
\author[2]{Tatiana Starikovskaya}
\affil[1]{Universit\"{a}t Siegen}
\affil[2]{DI/ENS, PSL Research University, France}
\date{}                     
\begin{document}

\maketitle

\begin{abstract}
We study the problem of recognizing regular languages in a variant of the streaming model of computation, called the sliding window model. In this model, we are given a size of the sliding window $n$ and a stream of symbols. At each time instant, we must decide whether the suffix of length $n$ of the current stream (``the active window'') belongs to a given regular language. 

Recent works \cite{GHKLM18, GHL16} showed that the space complexity of an optimal deterministic sliding window algorithm for this problem is either constant, logarithmic or linear in the window size $n$ and provided natural language theoretic characterizations of the space complexity classes. Subsequently,~\cite{GanardiHL18} extended this result to randomized algorithms to show that any such algorithm admits either constant, double logarithmic, logarithmic or linear space complexity. 

In this work, we make an important step forward and combine the sliding window model with the property testing setting, which results in ultra-efficient algorithms for all regular languages. Informally, a sliding window property tester must accept the active window if it belongs to the language and reject it if it is far from the language. We consider deterministic and randomized sliding window property testers with one-sided and two-sided errors. In particular, we show that for any regular language, there is a deterministic sliding window property tester that uses logarithmic space and a randomized sliding window property tester with two-sided error that uses constant space. 
\end{abstract}


\section{Introduction}
Regular expression search constitutes an important part of many search engines for biological data or code, such as, for example, Elasticsearch Service\footnote{\href{https://www.elastic.co}{https://www.elastic.co}}. In this paper, we consider the following formalization of this problem. We assume to be given an integer $n$, a regular language $L$, and a stream of symbols that we receive one symbol at a time. At each time instant, we have direct access only to the last arrived symbol, and must decide whether the suffix of length $n$ of the current stream (``the active window'') belongs to $L$.

The model described above is a variant of the streaming model and was introduced by Datar et al.~\cite{DatarGIM02}, where the authors proved that the number of $1$'s in a $0/1$-sliding window of size $n$ can be maintained in space $\O(\frac{1}{\eps} \cdot \log^2 n)$ if one allows a multiplicative error of $1\pm \eps$. The motivation for this model of computation is that in many streaming applications, data items are outdated after a certain time, and  the sliding window setting is a simple way to model this. In general, we aim to avoid storing the window content explicitly, and, instead, to work in considerably smaller space, e.g. polylogarithmic space with respect to the window length. For more details on the sliding window model see~\cite[Chapter~8]{Aggarwal07}.
	
The study of recognizing regular languages in the sliding window model was commenced in~\cite{GHKLM18,GHL16}. In~\cite{GHL16}, Ganardi et al.~showed that for every regular language $L$ the optimal space bound for a deterministic sliding window
algorithm is either constant, logarithmic or linear in the window size $n$. In~\cite{GHKLM18}, Ganardi et al.~gave characterizations for these space classes. More formally, they showed that a regular language has a deterministic sliding window algorithm with space $\O(\log n)$ (resp.,  $\O(1)$)  if and only if it is a Boolean combination of so-called 
regular left-ideals and regular length languages (resp., suffix-testable languages and regular length languages).
A subsequent work~\cite{GanardiHL18} studied the space complexity of randomized sliding window algorithms for regular languages. It was shown that for every regular language $L$ the optimal space bound of randomized sliding window algorithm is $\O(1)$, $\O(\log\log n)$, $\O(\log n)$, or~$\O(n)$. Moreover, complete characterizations of these space classes were provided. 

\subsection{Our results}
Previous study implies that even simple languages require linear space in the sliding window model, which gives the motivation to seek for novel approaches in order to achieve efficient algorithms for all regular languages. We take our inspiration from the property testing model introduced by Goldreich et. al~\cite{GoldreichGR98}. In this model, the task is to decide whether the input has a particular property~$P$, or is ``far'' from any input satisfying it. For a function $\gamma: \mathbb{N} \to \mathbb{R}_{\geq 0}$, we say that a word $w$ of length $n$ is $\gamma$-far from satisfying $P$, if the Hamming distance between $w$ and any word $w'$ satisfying $P$ is at least $\gamma(n)$. We will call the function $\gamma(n)$ the Hamming gap of the tester. We must make the decision by inspecting as few symbols of the input as possible, and the time complexity of the algorithm is defined to be equal to the number of inspected symbols. 
The motivation is that when working with large-scale data, accessing a data item is a very time-expensive operation.
The membership problem for a regular language in the property testing model was studied by  Alon et al.~\cite{AlonKNS00} who showed that for every regular language $L$ and every constant $\eps > 0$, there is a property tester with Hamming gap $\gamma(n) = \eps n$ for deciding membership in $L$ that can make the decision by inspecting a random constant-size sample of symbols of the input word. 

In this work, we introduce a class of algorithms called \emph{sliding window property testers}. Informally, at each time moment, a sliding window property tester must accept if the active window has the property $P$ and reject if it is far from satisfying $P$. The space complexity of a sliding window property tester is defined to be all the space used, including the space we need to store information about the input. We consider deterministic sliding window property testers and randomized sliding window property testers with one-sided and two-sided errors (for a formal definition, see Section~\ref{sec:swd_definition}). 
A similar but simpler model of streaming property testers, where the whole stream is considered, was introduced by 
Feigenbaum et al.~\cite{FeigenbaumKSV02}.  Fran{\c{c}}ois et al.~\cite{FrancoisMRS16} continued the study of this model in the context of language membership problems and
came up with a streaming property tester for visibly pushdown languages that uses polylogarithmic space. Note that deciding membership in a regular languages becomes trivial in this model (where
the active window is the whole stream): one can simply simulate a deterministic finite automaton on the stream. What makes the sliding window model more difficult is the fact 
that the oldest symbol in the active window expires in the next step.

While at first sight the only connection between property testers and sliding window property testers is that we must accept the input if it satisfies $P$ and reject if it is far from satisfying $P$, there is, in fact, a deeper link. In particular, the above mentioned result of Alon et al.~\cite{AlonKNS00}  combined with an optimal sampling algorithm for sliding windows~\cite{BravermanOZ12}, immediately yields a $\O(\log n)$-space, two-sided error sliding window property tester with Hamming gap $\gamma(n) = \eps n$ for every regular language. We will improve on this observation. 
Our main contribution are tight complexity bounds for each of the following classes of sliding window property testers for regular languages: 
deterministic sliding window property testers and randomized sliding window property testers with one-sided and two-sided error.

\smallskip
\noindent
\textbf{Deterministic sliding window property testers.} We call a language $L$ \emph{trivial}, if for some constant~$c > 0$ the following holds: For every word $w \in \Sigma^*$ such that $L$ contains a word of length $|w|$, the Hamming distance from $w$ to $L$ is at most $c$.  Every trivial regular language has a constant-space deterministic sliding window property tester with constant Hamming gap (Theorem~\ref{prop-trivial1}). For generic regular languages, we show a deterministic sliding window property tester with constant Hamming gap that uses $\O(\log n)$ space. This is particularly surprising, because for Hamming gap zero (i.e., the exact case) \cite{GanardiHL18} showed a space lower bound of $\Omega(n)$ for generic regular languages. In other words, a constant Hamming gap allows an exponential space improvement. We also show that for {\em non-trivial} regular languages, $\O(\log n)$ space is the best one can hope to achieve, even for Hamming gap $\gamma(n) = \eps n$ (Theorem~\ref{theorem:deterministic_lb}). 

\smallskip
\noindent
\textbf{Randomized sliding window property testers with two-sided error.} Next, we show that for every regular language, there is a randomized sliding window property tester with Hamming gap $\gamma(n) = \eps n$ and two-sided error that uses constant space (Theorem~\ref{theorem:two-sided}). This is an optimal bound and a considerable improvement compared to the tester that can be obtained by combining the property tester of Alon et al.~\cite{AlonKNS00} and an optimal sampling algorithm for sliding windows~\cite{BravermanOZ12}.

\smallskip
\noindent
\textbf{Randomized sliding window property testers with one-sided error.} While our randomized sliding window property tester with two-sided error is optimal, we believe that a two-sided error is a very strong relaxation and to be avoided in some applications. To this end, we study the one-sided error randomized setting. The general landscape for this setting is the most complex: In Theorems~\ref{theorem:one-sided_ub} and \ref{theorem:one-sided_lb}, we show that for every regular language $L$, the space complexity of an optimal randomized sliding window property tester with one-sided error is either $\O(1)$, $\O(\log \log n)$, or $\O(\log n)$, and we provide characterizations of these complexity classes.

In order to show our upper bound results, we demonstrate novel combinatorial properties of automata and regular languages and develop new streaming techniques, such as probabilistic counters, which can be of interest on their own. To show the lower bound results, we introduce a new methodology, which could potentially simplify further establishments of lower bounds in string processing tasks in the streaming setting: Namely, we view the testers as nondeterministic automata, and study their behaviour. 

\subsection{Related work}
The results above assume that the regular language admits a constant-space description and we will follow the same assumption in this work. Currently, there are few studies on the dependency of the complexity of sliding window algorithms on the size of the language description.
On the negative side, Ganardi et al.~\cite{GHKLM18} showed that there are regular languages such that any sliding window algorithm that achieves logarithmic space (in the window size) depends exponentially on the automata size.

On the positive side, there is an extensive study of the pattern matching problem and its variants that gives sub-exponential upper bounds for a class of (very simple) regular languages. In this problem, we are given a pattern and a streaming text $T$, and at each moment we must decide if the active window is equal to the pattern. This problem and its generalisations have been studied in~\cite{BG14,CFP+15,CFP+16,CKP19,CS16,GKP16,GKP18,GP17,PP09,S17}. 

Similar to regular languages, we can ask whether the current active window belongs to a given context-free language. This question was studied 
in~\cite{BabuLRV13,JN14,KrebsLS11,MagniezMN14} for the model where the active window is the complete stream
and in~\cite{G19,GanardiJL18} for the sliding-window model.

\section{Sliding window property tester}\label{sec:swd_definition}
We fix a finite alphabet $\Sigma$ for the rest of the paper.
We denote by $\Sigma^*$ the set of all words over $\Sigma$
and by $\Sigma^n$ the set of words over $\Sigma$ of length $n$.
The empty word is denoted by $\lambda$.
Let $w$ be a word.
We say that $v$ is a {\em prefix (suffix)} of $w$
if $w = xv$ ($w = vx$) for some word $x$.
We say that $v$ is a {\em factor} of $w$
if $w = xvy$ for some words $x,y$.
 The {\em Hamming distance} between two words $u = a_1 \cdots a_n$ and $v = b_1 \cdots b_n$ of equal length
is the number of positions where $u$ and $v$ differ, i.e. $\dist(u,v) = |\{ i :  a_i \neq b_i\}|$.
The distance of a word $u$ to a language $L$ is defined as
$\dist(u,L) = \inf \{ \dist(u,v) : v \in L \} \in \N \cup \{\infty\}$.

A {\em deterministic finite automaton} (DFA) is a tuple $A = (Q,\Sigma,q_0,\delta,F)$ where 
$Q$ is a finite set of states, $\Sigma$ is the input alphabet, $q_0$ is the initial state,
$\delta : Q \times \Sigma \to Q$ is the transition mapping and $F \subseteq Q$ is the set of 
final states. We extend $\delta$ to a mapping $\delta : Q \times \Sigma^* \to Q$ inductively in the usual way:
$\delta(q,\lambda) = q$ and $\delta(q,aw) = \delta( \delta(q,a),w)$.
The language accepted by $A$ is $L(A) = \{ w \in \Sigma^* : \delta(q_0,w) \in F \}$. 
A language is {\em regular} if it is accepted by a DFA.
For more background in automata theory see \cite{HoUl79}.

A {\em stream} is a word $a_1 a_2 \cdots a_m$ over $\Sigma$. A {\em sliding window algorithm}
is a family $\mathcal{A}=(A_n)_{n \geq 0}$ of streaming algorithms.
Given a window size $n \in \N$ and an input stream $a_1 a_2 \cdots a_m \in \Sigma^*$
the algorithm $A_n$ reads the stream symbol by symbol from left to right and thereby updates its memory content. 
After reading a prefix $a_1 \cdots a_t$ ($0 \le t \le m$) the algorithm is required to compute an output value that depends on the {\em active window} $\last_n(a_1 \cdots a_t) = a_{t-n+1}\cdots a_t$ at time $t$. For convenience, for $i < 0$ we define $a_i = \square$ where $\square\in\Sigma$ is an arbitrary fixed symbol. In other words, we assume an initial window $\square^n$ that is active at time $t=0$.
We consider {\em deterministic sliding window algorithms} (where every $A_n$ can be viewed as a DFA) and {\em randomized sliding window algorithms}
(where every $A_n$ can be viewed as a probabilistic finite automaton in the sense of Rabin \cite{Rabin63}). 
In the latter case, $A_n$ updates in each step its memory content according to a probability distribution that
depends on the current memory content and the current input symbol.
Let $\gamma : \N \to \mathbb{R}_{\geq 0}$ be a function such that $\gamma(n) \leq n$ for all $n \in \N$ and let $\alpha, \beta$ be probabilities.

\begin{defn}
A {\em deterministic sliding window (property) tester} for a language $L$ with Hamming gap $\gamma(n)$ is a deterministic sliding window algorithm $\mathcal{A}=(A_n)_{n \geq 0}$ such that for every input stream $w \in \Sigma^*$ and every window size $n$ the following properties hold:
\begin{itemize}
\item if $\last_n(w) \in L$, then $A_n$ accepts;
\item if $\dist(\last_n(w),L) > \gamma(n)$, then $A_n$ rejects.
\end{itemize}
\end{defn}

\begin{defn}
A {\em randomized sliding window (property) tester} for a language $L$ with Hamming gap $\gamma(n)$ and error $(\alpha,\beta)$ is a randomized sliding window algorithm $\mathcal{A}=(A_n)_{n \geq 0}$ such that for every input stream $w \in \Sigma^*$ and every window size $n$ the following properties hold:
\begin{itemize}
\item if $\last_n(w) \in L$, then $A_n$ accepts with probability at least $1-\alpha$;
\item if $\dist(\last_n(w),L) > \gamma(n)$, then $A_n$ rejects with probability at least $1-\beta$.
\end{itemize}
We say that $\mathcal{A}$ has \emph{one-sided error}  if $\mathcal{A}$ has error $(0,1/2)$ and \emph{two-sided error} if $\mathcal{A}$ has error $(1/3,1/3)$.
\end{defn}
Notice that our definition is non-uniform since we allow an arbitrary algorithm $A_n$ for each window size $n$. 
If the window size is not specified, then it is implicitly universally quantified. 
The space consumption of $\mathcal{A}$ is the mapping $s(n)$, where $s(n)$ is the space consumption of $A_n$, i.e., the maximal number of bits stored by $A_n$ while reading any input stream.
We can assume that $s(n) \in \O(n)$ since $A_n$ can store the active window in $\O(n)$ bits.
The goal is to devise algorithms which only use $o(n)$ space. Using  probability amplification (similar to \cite{GanardiHL18}) one can replace the error probability $1/3$ in the two-sided error setting (resp. $1/2$ in the one-sided error setting) by any probability $p<1/2$ (resp. $p<1$). This influences the space complexity only by a constant factor.
The case of Hamming gap $\gamma(n) = 0$ corresponds to exact membership testing to $L$ which was studied in \cite{GHKLM18,GHL16,GanardiHL18}. In this paper, we focus on the two cases $\gamma(n) = c$ for some constant $c > 0$ and $\gamma(n) = \eps n$ for some $\eps > 0$. 

\begin{rem} \label{rem-finite-union}
Assume that $L = \bigcup_{i=1}^k L_i$ and that for every $1 \leq i \leq k$ there exists a randomized sliding window tester for $L_i$ with 
Hamming gap $\gamma(n)$ and error $(\alpha,\beta)$ that uses space $s_i(n)$. We can combine these testers into 
a sliding window tester for $L$ with Hamming gap $\gamma(n)$ and error $(\alpha,\beta)$ that uses space 
$\O(\sum_{i=1}^k s_i(n))$: First, using probability amplification, we reduce the error of each given sliding window tester
to $(\alpha/k,\beta/k)$.
Then we run the sliding window testers for $L_i$ in parallel and accept if and only if one of them accepts.
\end{rem}



\section{Main results}

Our first main result is a deterministic logspace sliding window tester for every regular language,
together with a matching lower bound for so-called {\em nontrivial} regular languages (defined below).
\begin{restatable}[deterministic setting, upper bound]{theorem-restatable}{determub}
\label{theorem:deterministic_ub}
For every regular language $L$, there exists a deterministic sliding window tester
for $L$ with constant Hamming gap which uses $\O(\log n)$ space.
\end{restatable}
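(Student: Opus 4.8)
The plan is to reduce the construction of a gap-$c$ tester for $L$ to the \emph{exact} recognition of a nearby, structurally simpler language. Write $N_c(L) = \{w \in \Sigma^* : \dist(w,L) \le c\}$ for the $c$-neighbourhood of $L$ (distances taken length-wise, so the witness ranges over $L \cap \Sigma^{|w|}$). The first observation is that it suffices to find a constant $c$ and a language $L'$ with $L \subseteq L' \subseteq N_c(L)$ that admits a deterministic $\O(\log n)$-space sliding window algorithm deciding membership in $L'$ \emph{exactly}. Indeed, such an algorithm is already a gap-$c$ tester for $L$: if $\last_n(w) \in L$ then $\last_n(w) \in L'$ and the algorithm accepts, while if $\dist(\last_n(w),L) > c$ then $\last_n(w) \notin N_c(L) \supseteq L'$ and the algorithm rejects. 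Here the gap $c$ may be any constant, so I am free to enlarge it whenever convenient.

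The second ingredient is the characterization of the deterministic logarithmic-space class from \cite{GHKLM18}: a regular language has a deterministic sliding window algorithm using $\O(\log n)$ space if and only if it is a finite Boolean combination of regular left-ideals (languages $\Sigma^* K$, equivalently the prepend-closed regular languages) and regular length languages. Thus the whole problem becomes the following purely language-theoretic statement: \textbf{for every regular $L$ there is a constant $c$ and a Boolean combination $L'$ of regular left-ideals and length languages with $L \subseteq L' \subseteq N_c(L)$.} The role of the Hamming gap is precisely to destroy the prefix-sensitivity that forces linear space in the exact setting --- for instance $N_c(a\Sigma^*) = \Sigma^{\ge 1}$ already for $c \ge 1$ --- and the task is to show that what survives is always a prepend-closed (suffix or counting) condition together with a length condition.

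To build $L'$ I would analyse the minimal number of edits $\dist(w,L)$ through a DFA $A = (Q,\Sigma,q_0,\delta,F)$ for $L$ and argue that, up to an additive constant, it is a function of (i)~the length $|w|$, (ii)~a bounded-length suffix of $w$, and (iii)~the numbers of occurrences in $w$ of finitely many ``local defects'', each counted only up to the threshold $c+1$. Concretely, a bounded prefix can be rerouted, at cost $\O(|Q|)$, so as to steer $A$ from $q_0$ into any reachable state of a suitable recurrent strongly connected component (using reachability together with the freedom in the prefix length to meet parity/length constraints); symmetrically a bounded suffix is handled by brute force over $\O(1)$ symbols; and in the ``bulk'' the run can absorb an isolated edit and resynchronise within $\O(1)$ steps, so that distinct defects contribute additively to the repair cost without interacting over long distances. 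Each thresholded defect count ``$w$ contains fewer than $j$ occurrences of a factor in $K_i$'' is the complement of a regular left-ideal, and the suffix and length conditions are (bounded) left-ideals and length languages, so the resulting $L'$ lies in the class of \cite{GHKLM18}. For completeness one can give the algorithm directly: store the window length modulo the relevant period and a bounded suffix in $\O(1)$ space, and for each defect pattern track the ages of its most recent $c+1$ occurrences, each capped at $n$, using $\O(\log n)$ bits.

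The main obstacle is step (iii): proving the \emph{locality} of optimal repairs, i.e.\ that the global edit distance $\dist(w,L)$ really is, up to a constant, an additive sum of independently computable local costs plus boundary corrections. This is where the combinatorial structure of $A$ enters --- the behaviour of its strongly connected components and the idempotents of its transition monoid --- and where one must rule out long-range interactions between edits (for example edits that are individually cheap but jointly required to cross between components). I expect the cleanest route is to first reduce, via the standard decomposition of a long run into an entry into a terminal component followed by looping, to the case where $A$ is ``one strongly connected component plus a bounded transient part''; to dispose of the fully trivial length classes separately by appealing to Theorem~\ref{prop-trivial1}; and then to establish the resynchronisation and locality claim inside a single component.
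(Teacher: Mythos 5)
Your opening reduction is sound and is a genuinely different framing from the paper's: you propose to exhibit a regular $L'$ with $L \subseteq L' \subseteq N_c(L)$ lying in the class of Boolean combinations of regular left-ideals and length languages, and then invoke the exact-setting $\O(\log n)$ upper bound of \cite{GHKLM18}; the paper instead constructs the tester directly, by maintaining for each state $q$ of a right-to-left DFA the \emph{path summary} of the run on the active window (the lengths of the segments between SCC-transitions) and accepting iff the length of the last internal segment lies in the acceptance set $\Acc(q_m)$ of lengths accepted from its starting state. Your sandwich argument itself is correct, and your closing algorithmic sketch (track the window length modulo a period, a bounded suffix, and the ages of the last few ``defects'') is recognizably the same data structure as a path summary.

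The genuine gap is that the existence of $L'$ --- equivalently, your ``locality of optimal repairs'' claim --- is the entire mathematical content of the theorem, and your sketch does not establish it; you flag this yourself. Two concrete problems. First, rerouting a bounded prefix of $w$ changes the state of the DFA at some position $k$, after which the untouched remainder $w[k+1..n]$ drives the automaton deterministically; whether it then lands in $F$ is a \emph{global} constraint depending on all of $w[k+1..n]$, so ``resynchronisation within $\O(1)$ steps'' is not available. Making this work requires (i) passing to a right-to-left automaton so that the edited prefix of the word is the \emph{end} of the run, (ii) normalising the automaton so that all non-transient SCCs have a common period $g$ (the paper's Lemma~\ref{lem:uni-per}), (iii) proving that each $\Acc(q)$ is eventually $g$-periodic and shift-compatible within an SCC (Lemmas~\ref{lem:acc-per} and~\ref{lem:con-run}), and (iv) the $t$-simulation lemma (Lemma~\ref{lem:sim}) showing that an internal run can be completed to an accepting run of \emph{exactly} the right length at the cost of a length-$t$ prefix. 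None of this is replaced by anything in your sketch, and the ``parity/length constraints'' you mention in passing are precisely where it is needed. Second, your proposed normal form for $L'$ via thresholded counts of occurrences of factors from finite sets $K_i$ is not obviously adequate: whether a position of $w$ is a ``defect'' depends on the automaton's state there, hence on an unbounded suffix of $w$ rather than a bounded neighbourhood, and the assertion that distinct defects ``contribute additively'' to $\dist(w,L)$ is false in general (for $L = a^* \cup b^*$ the distance is a \emph{minimum} of two counts, not a sum). What the correct $L'$ actually records is the position of the last SCC-transition of the right-to-left run together with a residue modulo $g$, which is a left-ideal-plus-length condition but not a factor-counting one. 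So the plan is a reasonable alternative outline, but as a proof it is missing its central lemma.
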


\begin{restatable}[deterministic setting, lower bound]{theorem-restatable}{determlb}
\label{theorem:deterministic_lb}
For every non-trivial regular language $L$, there exist $\eps > 0$ and infinitely many window sizes $n \in \N$ on which every deterministic sliding window tester for $L$ with Hamming gap $\eps n$ uses space $\Omega(\log n)$.
\end{restatable}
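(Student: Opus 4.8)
The plan is to view, for each window size $n$, the tester $A_n$ as a DFA and to force it to have $\Omega(n)$ states, so that $s(n) = \Omega(\log n)$. The engine is an elementary observation about iterating a fixed block. Suppose we can find a prefix stream $u$, a nonempty word $y$, an integer $T = \Omega(n)$ and an index $j^\star > T$ such that $\dist(\last_n(u y^j),L) > \eps n$ for all $0 \le j \le T$ while $\last_n(u y^{j^\star}) \in L$. Writing $q_\star$ for the state of $A_n$ after reading $u$ and $r_j = \delta(q_\star, y^j)$, the tester definition forces $r_0,\dots,r_T$ to be rejecting and $r_{j^\star}$ to be accepting. If $r_i = r_{i'}$ for some $0 \le i < i' \le T$, then $(r_j)_j$ is periodic with period $d = i'-i$ from index $i$ onwards, so $r_{j^\star} = r_{i''}$ with $i'' = i + ((j^\star - i)\bmod d) \in [i,i'-1] \subseteq [0,T]$; hence $r_{j^\star}$ is rejecting, a contradiction. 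Thus $r_0,\dots,r_T$ are pairwise distinct and $A_n$ has at least $T+1 = \Omega(n)$ states.

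It remains to construct such a stream $u y^j$ for infinitely many $n$, and this needs two ingredients. First, a periodic accepting tail: a nonempty word $y$ and a phase such that some length-$n$ window of $y^\omega$ lies in $L$. I would obtain this by pumping an accepting run of a fixed DFA for $L$ and choosing $n$ in a suitable arithmetic progression of accepted lengths, so that $\last_n(y^{j^\star}) \in L$ for arbitrarily large $j^\star$ (at which point the prefix $u$ has fully slid out, so there is no conflict with the first ingredient). Second, a far-forcing factor: a word $G$ of length $\approx n/2$ such that every word of length $n$ containing $G$ as a factor is more than $\eps n$ far from $L$. Placing $G$ as a suffix of $u$, the window $\last_n(u y^j)$ still contains $G$ as a factor for all $j$ with $j|y| \le n - |G| = \Omega(n)$, which yields the required far-stretch $T = \Omega(n)$. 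The point of far-forcing is that the property is insensitive to the surrounding symbols: as long as the window contains $G$ the distance to $L$ exceeds $\eps n$, whatever the eroding left part and the incoming block $y^j$ are.

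The heart of the proof — and the step I expect to be the main obstacle — is producing a far-forcing factor from non-triviality alone, and two separate upgrades are required. (i) Non-triviality only supplies words of \emph{unbounded} distance from $L$ at valid lengths, and I first have to promote this to \emph{linear} distance, i.e. to find $\delta > 0$ and infinitely many valid $n$ carrying a length-$n$ word at distance $> \delta n$ from $L$. I plan to base this on a dichotomy for the maximal-distance function $D(n) = \max\{\dist(w,L) : |w| = n, \ L \cap \Sigma^{|w|} \neq \emptyset\}$: computed through the substitution (mismatch-cost) product of a DFA for $L$ with itself, $D$ is governed by a max-plus automaton and is therefore either bounded or $\Theta(n)$, and non-triviality rules out the bounded case. (ii) A single far word is not yet far-forcing, since its distance might be repaired by altering the context around a chosen factor; I therefore have to make the cost context-independent. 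Working in the mismatch-cost product automaton, the plan is to show that non-triviality yields a pumpable cycle reading a word $G$ for which the minimal mismatch cost of reading any length-$|G|$ word between any pair of states is $\Omega(|G|)$. Since there are only finitely many state pairs, this forces distance $> \eps n$ from $L$ regardless of how $G$ is embedded, which is exactly far-forcing; pumping the cycle then provides $G$, and the compatible values of $n$, of every required length. The complementary "cheap cycle everywhere" regime is precisely the bounded case, matching the fact (Theorem~\ref{prop-trivial1}) that trivial languages admit constant-space testers.
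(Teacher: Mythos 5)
Your counting engine is sound, and it is essentially the argument the paper uses: Theorem~\ref{theorem-lower-bound} runs the same reject-for-all-small-$j$ / accept-at-$j^\star$ pigeonhole (in fact for nondeterministic testers, via cutting cycles out of an accepting run). The two ingredients feeding the engine are where the proof actually lives, and both are deficient. First, the ``periodic accepting tail'' as you formulate it --- the accepting window $\last_n(u y^{j^\star})$ being a pure window of $y^\omega$ after $u$ has fully slid out --- does not exist for some non-trivial regular languages. Take $L = ab^*$: a length-$n$ suffix of $y^{j}$ with $j|y|\ge n$ and $|y|<n$ has the form $\tilde y\, y^m$ with $m\ge 1$, and for it to equal $ab^{n-1}$ one would need $y\in b^*$ (the tail $y^m$ sits in positions $\ge 2$) and then the word contains no $a$ at all; the case $|y|$ dividing $n$ fails similarly because $y^m$ with $m\ge 2$ repeats its first letter. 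So no choice of $y$ works for any large $n$, even though $ab^*$ is non-trivial. The accepting window must be allowed to retain a constant-length head and tail, as in the paper's streams $w_f^n x y^k z$, whose accepting window is $x y^l z$ rather than a factor of $y^\omega$. This is repairable (pigeonhole on the states reached after $uy^j$ and apply $\delta(\cdot,z)$ at the end), but as stated your construction fails already on $ab^*$.

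The second and larger gap is the far-forcing factor, which you rightly identify as the heart of the matter but do not actually produce. (i) The dichotomy ``$D(n)$ bounded or linear infinitely often'' is true --- it is Corollary~\ref{triv-alon} --- but your justification is unsound as stated: $\dist(w,L)$ is a \emph{min}-plus (distance-automaton) function, and $D(n)$ is a max over words of a min over runs, i.e.\ a max-min quantity, not a max-plus automaton function; genuine distance-automaton functions can have $\max_{|w|=n}$ growth of order $\Theta(\sqrt n)$ (e.g.\ $\min(\text{number of blocks},\min_i|u_i|)$ on $u_1\# \cdots \# u_k$), so ``weighted automaton, hence bounded or linear'' cannot be invoked without a nontrivial reduction that you do not supply. (ii) The upgrade from one far word to a context-independent far-forcing cycle $G$ --- every length-$|G|$ run between every useful pair of states costing $\Omega(|G|)$ against $G$ --- is exactly the hard combinatorial step, and ``the plan is to show'' is all you give; the claim that the complementary cheap-cycle regime coincides with triviality is asserted, not proved. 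You also still need $G$, the accepting lengths, and the valid window sizes to live on one arithmetic progression. The paper gets all of this at once from Proposition~\ref{prop-nontriv}: non-triviality yields a restriction $L|_N$ to an arithmetic progression that excludes a \emph{constant-length} factor $w_f$, after which $k$ disjoint occurrences of $w_f$ force distance at least $k$ by a one-line argument, with no weighted-automaton machinery. Until you prove an analogue of that proposition, your argument has a hole at its center.
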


Here the notion of (non-)trivial languages is defined as follows:
Let $\gamma : \mathbb{N} \to  \mathbb{R}_{\geq 0}$ be a mapping such that $\gamma(n) \leq n$ for all $n \geq 0$.
A language is $L \subseteq \Sigma^*$ is {\em $\gamma$-trivial}
if there exists a number $n_0$ such that for all $n \ge n_0$ with $L \cap \Sigma^n \neq \emptyset$
and all $w \in \Sigma^n$ we have $\dist(w,L) \le \gamma(n)$.
If $\gamma(n) \in \mathcal{O}(1)$, we say that $L$ is {\em trivial}.
Note that Alon et al.~\cite{AlonKNS00} call a language $L$ trivial
if $L$ is $(\eps n)$-trivial for all $\eps > 0$ according to our definition.
In fact, we will prove that both definitions coincide for regular languages (Corollary~\ref{triv-alon}).

Next we consider randomized sliding window property testers. 
Our second main result is a constant-space randomized sliding window property tester with two-sided error
for any regular language.

\begin{restatable}[two-sided error randomized setting, upper bound]{theorem-restatable}{twoside}
\label{theorem:two-sided}
For every regular language $L$ and every $\eps > 0$,
there exists a randomized sliding window tester for $L$
with two-sided error and Hamming gap $\gamma(n) = \eps n$ that uses space $\O(1/\eps)$. 
\end{restatable}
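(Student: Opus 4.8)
The plan is to estimate, via random sampling, how many positions of the active window are ``misplaced'' with respect to a fixed DFA $A$ for $L$, and to accept exactly when this number looks like $0$ rather than like $\eps n$. Since a regular language is a finite union of the pieces produced by the decomposition below, and since the number of pieces depends only on $A$ and not on $n$, it suffices by Remark~\ref{rem-finite-union} to build a constant-space two-sided tester for each piece; the extra $\O(1)$ factor incurred by the union is absorbed into the $\O(1/\eps)$ bound.

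First I would set up the structural reduction. Fix a DFA $A = (Q,\Sigma,q_0,\delta,F)$ and consider the decomposition of $Q$ into strongly connected components (SCCs). Any accepting run on a window of length $n$ visits a sequence $C_0, C_1, \dots, C_k$ of SCCs with $k < |Q|$, so there are only constantly many possible \emph{profiles}. I would fix one profile and show, in the spirit of Alon et al.~\cite{AlonKNS00}, that membership in $L$ up to Hamming distance $\eps n$ is governed by \emph{local} data: a window is $\eps n$-close to having this profile iff it can be cut into $k+1$ blocks, one per component, so that within each $C_i$ the block can be routed using only the edges of $C_i$ after correcting few symbols, the cut points and the handful of inter-component connector symbols cost only $\O(1)$ each, and the length constraints imposed by the periods of the components are met (the global length $n$ is known to $A_n$, so these modular constraints merely restrict the admissible cut points). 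The essential point is that the distance contributed by a block is, up to $\O(1)$, the number of positions carrying a symbol outside the alphabet available in its component, i.e. a \emph{counting} quantity. Summing over the (constantly many) blocks, the distance from the window to $L$ is, up to an additive constant and up to the freedom in choosing the $\O(1)$ cut points, the number of ``misplaced'' positions; so the whole task reduces to distinguishing ``$0$ misplaced positions'' from ``more than $\eps n$ misplaced positions'' in the active window.

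The heart of the argument is then to perform this count-versus-threshold test over a sliding window in space independent of $n$. With two-sided error I would subsample the stream and maintain only a constant number of ``slots'', each holding $\O(1)$ bits recording a sampled misplaced position together with enough information to tell whether it still lies inside the window. The delicate constraint is the \emph{recent} boundary of the window: if $\last_n(w) \in L$ we must accept with high probability even though arbitrarily many misplaced symbols may sit just outside the window (think of $L = a^*$, where a single leaked violation pushes the estimate over the threshold), so the algorithm must learn the position ``age $= n$'' to additive precision $\eps n$. Since we cannot even store a Morris-style approximation of a magnitude up to $n$ in $n$-independent space, the timing must be driven purely by coins: $A_n$ is allowed transition probabilities depending on $n$, so it can run clocks that fire after about $n$ steps and thereby expire stale samples without ever counting.

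The main obstacle, which I expect to be the technical core, is exactly the precision of this coin-driven timing. A single clock that fires with probability $1/n$ per step has standard deviation $\Theta(n)$, far too coarse; sharpening it to fire at $n \pm \eps n$ by summing independent geometric clocks (equivalently, partitioning the stream into probabilistically delimited blocks of expected length $\Theta(\eps n)$ and keeping the $\Theta(1/\eps)$ most recent ones) brings the boundary jitter down, but a direct variance computation shows that a single Bernoulli rate forces $\Theta(1/\eps^2)$ slots to reach jitter $\eps n$. Closing the gap to the claimed $\O(1/\eps)$ is where the \emph{probabilistic counter} must do real work: I would build a hierarchical, variance-reduced counter that locates the window boundary to within $\eps n$ using only $\O(1/\eps)$ bits, and then run the sampling test against this approximate boundary. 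Once the counter is in place, the correctness analysis (Chernoff bounds on the number of retained samples in the two cases $0$ versus $>\eps n$, and a union bound over the constantly many profiles) should be routine.
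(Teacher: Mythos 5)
There are two genuine gaps here. First, your structural reduction is incorrect as stated: the Hamming distance of a block to the language of an SCC is \emph{not}, up to $\O(1)$, the number of positions carrying a symbol outside the component's alphabet. Take $L = (ab)^*$: the single non-trivial SCC of its DFA uses both letters, so the word $a^n$ has zero ``misplaced'' positions in your sense, yet $\dist(a^n, L) = n/2$. The obstruction to membership is the routing of runs subject to the period of the component, not the presence of forbidden letters, so the quantity you propose to estimate by sampling is simply not the distance. This could in principle be repaired by counting disjoint \emph{infeasible constant-length factors} in the spirit of Alon et al., but that is a different (and more delicate) reduction than the one you wrote. Second, even granting a counting reduction, the algorithmic core --- a constant-space mechanism that samples violations and expires them as they leave the window, driven by a ``hierarchical, variance-reduced counter'' locating the window boundary to additive precision $\eps n$ --- is exactly the part you do not construct. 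You correctly identify it as the technical heart of your plan and then assume it; as it stands the proposal proves nothing beyond the $\O(\log n)$ bound already obtainable from \cite{AlonKNS00} plus sliding-window sampling, which the theorem is explicitly meant to beat.

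For contrast, the paper's proof avoids both issues by never sampling the window content at all. It fixes an rDFA $B$ (reading right to left) with uniform SCC period $g$ and maintains, for each state $q$, a \emph{compact summary} of the run of $B$ on the \emph{entire} stream read so far: the entry states of the SCC-factorization, the lengths of the corresponding prefixes modulo $g$, and, for each SCC-transition, a probabilistic $(h,\ell)$-counter with $h = n-t$ and $\ell = (1-\eps)n + t + 1$ recording only whether that transition happened ``recently'' or ``long ago''. Because the run on the whole stream crosses at most $|Q|$ SCC boundaries, there are only constantly many monotone events to time, and each needs only a threshold test with a multiplicative gap --- which a majority vote over Bernoulli counters achieves in $\O(\log(1/\eps))$ bits (Lemma~\ref{lem:bernoulli}). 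Acceptance then reads off the most recent SCC-entry point that is certified to lie well inside the window and checks a modular condition on $\Acc(q_i)$; Lemma~\ref{lem:sim} converts this into a prefix-distance bound of $\eps n$. No samples are ever expired, and no additive-precision boundary location is needed. If you want to salvage your route, the key missing ingredients are a correct local characterization of distance (via infeasible factors and period constraints) and an actual construction of the expiring-sample data structure; the paper's architecture is a way of restructuring the problem so that neither is required.
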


While the randomized setting with two-sided error allows efficient testers, we find that allowing a two-sided error is a very strong relaxation. To this end, we study the randomized setting with one-sided error. In this setting, only a small class of regular languages admits sliding window testers working in space $o(\log n)$.
A language $L \subseteq \Sigma^*$ is {\em suffix-free} if $xy \in L$ and $x \neq \lambda$ imply $y \notin L$.

\begin{restatable}[one-sided error randomized setting, upper bound]{theorem-restatable}{onesideub}
\label{theorem:one-sided_ub}
 If $L$ is a finite union of trivial regular languages and suffix-free regular languages, then there exists a randomized sliding window tester for $L$ with one-sided error and constant Hamming gap which uses $\O(\log \log n)$ space.
\end{restatable}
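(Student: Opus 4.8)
The plan is to first reduce to the two atomic cases using Remark~\ref{rem-finite-union}. Writing $L = \bigcup_{i=1}^k L_i$ as a finite union of trivial and suffix-free regular languages, it suffices to exhibit a one-sided error tester with constant Hamming gap and $\O(\log\log n)$ space for each $L_i$ and then combine them. One has to observe that the combination of Remark~\ref{rem-finite-union} preserves one-sided error: since here $\alpha = 0$, every window in $L$ lies in some $L_i$ and is accepted with probability $1$, while a window with $\dist(\last_n(w),L) > \gamma(n)$ is far from every $L_i$ and hence rejected by the parallel run with probability $\ge 1-\beta$. For the trivial pieces there is nothing further to do, as Theorem~\ref{prop-trivial1} already supplies a constant-space deterministic tester, which in particular is a one-sided error tester. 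Thus the entire content is the suffix-free case.

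The key structural observation for a suffix-free language $L$ is a \emph{uniqueness property}: for every stream $a_1 \cdots a_t$ there is at most one index $j$ with $a_j \cdots a_t \in L$. Indeed, if $a_j \cdots a_t \in L$ and $a_{j'} \cdots a_t \in L$ with $j < j'$, then $a_{j'} \cdots a_t$ is a proper suffix of $a_j \cdots a_t$, contradicting suffix-freeness. Hence at each time $t$ there is a well-defined (possibly undefined) ``anchor'' $j(t)$, and $\last_n(a_1 \cdots a_t) \in L$ holds exactly when $j(t)$ is defined with $t - j(t) + 1 = n$. Passing to the reversal, $L^{\rev}$ is prefix-free, and I would fix a DFA $B$ for $L^{\rev}$ in the normal form in which every final state is sent to a single absorbing dead state on all inputs; then any run of $B$ visits a final state at most once, which is the automaton-level incarnation of the uniqueness property. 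Reading the active window from right to left, $B$ reaches a final state precisely after consuming all $n$ symbols iff the window lies in $L$, and strictly earlier iff a proper suffix of the window lies in $L$.

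With this set up, the tester maintains a compressed description of the run of $B$ on the reversed active window. Because only the single final-state hit is relevant, the length-sensitive part of this information collapses to the distinguished anchor, i.e. the length $\ell(t) = t - j(t) + 1$ of the candidate $L$-suffix, and the task becomes: accept iff $\ell(t) = n$ (together with an $\O(1)$-size local check read off the run of $B$). Storing $\ell(t)$ exactly costs $\log n$ bits; to reach $\O(\log\log n)$ I would replace it by a \emph{probabilistic counter} whose state fingerprints $\ell(t)$, concretely the residue $\ell(t) \bmod p$ for a prime $p$ of magnitude $\Theta(\log n)$ chosen at random, occupying $\O(\log\log n)$ bits. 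The tester accepts when this fingerprint equals $n \bmod p$ and the local check succeeds. If the window genuinely lies in $L$ then $\ell(t) = n$, the fingerprints coincide, and the tester accepts with probability $1$, which secures one-sided error. If the window is far from $L$, then $\ell(t) \ne n$ (a hit at length exactly $n$ would place the window in $L$), and a false acceptance requires $p \mid (\ell(t) - n)$, an event a random prime of size $\Theta(\log n)$ avoids with the constant probability we need.

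The main obstacle is twofold. The genuinely algorithmic difficulty is maintaining the anchor fingerprint under the sliding window: when a new symbol arrives the anchor may jump to an unrelated position, so updating $\ell(t) \bmod p$ is not a mere increment but requires simulating the reversed run of $B$; the heart of the argument is to show that uniqueness confines this to a single surviving run that a probabilistic counter can track in $\O(\log\log n)$ bits. Interwoven with this is a combinatorial bridge to the actual distance $\dist(\last_n(w),L)$, used in three places. First, I would bound the \emph{magnitude} of $\ell(t)$: an anchor far longer than $n$ (an $L$-suffix strictly containing the window) forces the window to be far from $L$, so such cases may be rejected outright and the counter maintained only on a range capped at $\O(n)$, where $|\ell(t) - n|$ has few prime factors and the modular test is reliable. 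Second, windows that are close to $L$ but not in $L$ fall in the don't-care zone of the constant Hamming gap, so any decision on them is permissible; this is precisely where the gap earns its keep and lets the counter be sloppy. Third, the reduction of the full reversed run of $B$ to one tracked anchor plus constant local information must be justified for an arbitrary suffix-free regular language, not only for marker prototypes such as $b(\Sigma \setminus \{b\})^*$; I expect this to follow from the prefix-free normal form of $B$ together with a pumping argument localising all length-sensitive behaviour to the unique surviving run.
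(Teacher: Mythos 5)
Your high-level architecture coincides with the paper's: reduce to the suffix-free case via Remark~\ref{rem-finite-union} and Theorem~\ref{prop-trivial1}, use suffix-freeness to obtain a unique candidate accepting suffix, and fingerprint its length modulo a random prime with $\O(\log\log n)$ bits, one-sided error coming from the fact that membership forces the length to equal $n$ exactly. But the two issues you defer to "the main obstacle" are genuine gaps, and for one of them your proposed resolution points the wrong way. On the update problem: a single anchor $\ell(t)$ attached to the initial state is not maintainable, since after one new symbol the new anchor depends on the entire history and not on the old anchor. The fix is not a "single surviving run" argument but simply to maintain $\ell_w(q) \bmod p$ for \emph{every} state $q$, which makes the update local ($\ell_{wa}(q) = 1 + \ell_w(\delta(a,q))$, and $0$ at the final state). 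The paper does exactly this, after first splitting the rDFA into partial automata $B_P$ along SCC-chains (Definition~\ref{def:path-descript}); that decomposition is not cosmetic, as it is what makes the acceptance set of the initial state eventually an arithmetic progression (Lemma~\ref{lemma-si-ri}), which the correctness argument needs.

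The more serious gap is the case $\ell(t) > n$. Your claim that an anchor longer than $n$ forces the window to be far from $L$ is false: for the suffix-free language $L = ba^*$ and a stream ending in $ba^{N}$ with $N \gg n$, the active window is $a^n$, at Hamming distance $1$ from $L$, while the unique suffix of the stream lying in $L$ has length $N+1$. Rejecting such windows would still be \emph{safe} (they are never in $L$), but your mechanism for doing so---capping the counter at $\O(n)$---cannot be implemented in $\O(\log\log n)$ bits without risking a spurious cap that would break the probability-$1$ acceptance of genuine members; and without the cap, $|\ell(t) - n|$ is bounded only by the stream length, which is unbounded in $n$, so it can have too many prime factors for a $\mathrm{polylog}(n)$-size prime to catch. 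The paper resolves this in the opposite direction: Lemma~\ref{lm:acceptingconditiononesided} proves that whenever $n \in \Acc(q_0)$ and $\ell_w(q_0) > n$, the window is within \emph{constant prefix distance} of $L(B_P)$, so whatever the modular test outputs in that regime is permissible under the constant Hamming gap. That lemma, proved via the chain structure of the partial automata together with the run-simulation Lemma~\ref{lem:sim}, is the technical core of the theorem and is absent from your proposal; so is the unconditional rejection when $n \notin \Acc(q_0)$, which the lemma's hypotheses require.
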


\begin{restatable}[one-sided error randomized setting, lower bound]{theorem-restatable}{onesidelb}
\label{theorem:one-sided_lb}
Let $L$ be a regular language.
\begin{itemize}
\item If $L$ is not a finite union of trivial regular languages and suffix-free regular  languages, there exist $\eps > 0$ and infinitely many window sizes $n$ on which every randomized sliding window tester for $L$ with one-sided error and Hamming gap $\eps n$ uses space $\Omega(\log n)$.
\item If $L$ is non-trivial, then there exist $\eps > 0$ and infinitely many window sizes $n$
on which every sliding window tester for $L$ with one-sided error and Hamming gap $\eps n$
uses space $\Omega(\log \log n)$.
\end{itemize}
\end{restatable}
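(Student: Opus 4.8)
The plan is to treat both items with a single methodology, namely the ``tester as a nondeterministic automaton'' view announced in the introduction: I would turn a space-$s(n)$ one-sided error tester into a nondeterministic automaton with at most $2^{s(n)}$ states, and then lower-bound its state count by a fooling set whose size is dictated by the structural failure of $L$. The reduction will cover deterministic testers as the special case of a degenerate distribution, so both parts of the statement are handled uniformly.

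\emph{From testers to NFAs.} Fix a window size $n$ and a one-sided error tester $A_n$ using space $s(n)$, so $A_n$ has at most $2^{s(n)}$ memory configurations. Since the error is $(0,1/2)$, if $\last_n(w)\in L$ then $A_n$ accepts with probability $1$, so every positive-probability computation ends in an accepting configuration; if $\dist(\last_n(w),L)>\eps n$ then $A_n$ rejects with probability at least $1/2$, so at least one positive-probability computation ends in a rejecting configuration. Let $N_n$ be the support automaton of $A_n$ (states are the memory configurations, transitions are those taken with positive probability) with the rejecting configurations declared final. Then $N_n$ has at most $2^{s(n)}$ states, it accepts every stream $w$ with $\dist(\last_n(w),L)>\eps n$, and it rejects every stream $w$ with $\last_n(w)\in L$. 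This is the concrete content of viewing the tester as a nondeterministic automaton.

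\emph{Fooling sets in the sliding window.} I would then apply the extended fooling-set bound for NFAs. Suppose that for a given $n$ one can exhibit pairs of streams $(x_i,y_i)_{i=1}^m$ such that (a) $\dist(\last_n(x_iy_i),L)>\eps n$ for every $i$, and (b) for all $i\neq j$ at least one of $\last_n(x_iy_j)$ and $\last_n(x_jy_i)$ lies in $L$. By (a), each $x_iy_i$ is accepted by $N_n$; fix an accepting run and let $q_i$ be the state reached after reading $x_i$. If $q_i=q_j$ for some $i\neq j$, splicing the two runs yields accepting runs of $N_n$ on both $x_iy_j$ and $x_jy_i$, while (b) forces one of these windows into $L$ and hence to be rejected by $N_n$, a contradiction. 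Thus the $q_i$ are pairwise distinct, $N_n$ has at least $m$ states, and $s(n)\ge\log_2 m$. It therefore suffices to construct, for some fixed $\eps>0$ and infinitely many $n$, a fooling set of size $m=n^{\Omega(1)}$ for the first item and of size $m=(\log n)^{\Omega(1)}$ for the second, since $\log_2 n^{\Omega(1)}=\Omega(\log n)$ and $\log_2(\log n)^{\Omega(1)}=\Omega(\log\log n)$.

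\emph{Extracting the fooling sets, and the main obstacle.} The remaining and central work is to read these pairs off the minimal DFA of $L$ from the hypotheses. For the first item I would argue contrapositively: the impossibility of polynomially large fooling sets should force the transition structure of $L$ to decompose into finitely many parts, each trivial or suffix-free, contradicting the assumption. Concretely I expect to isolate a pumpable pattern --- states $p,q$ and words realizing, inside a window of length $n$, a linearly long block of a ``locally uncorrectable'' factor that pushes the window to distance $>\eps n$ from $L$, together with suffixes $y_j$ that re-synchronise membership on the off-diagonal pairs --- and to fit $\Theta(n)$ shifted copies into the window. For the second item, non-triviality already supplies, at infinitely many admissible lengths, words at super-constant distance from $L$; pumping the responsible cycle and placing the distinguishing features at geometrically growing offsets should yield $\Theta(\log n)$ pairwise inequivalent pairs, which is all the weaker bound needs and is tight against the suffix-free upper bound. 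The whole difficulty lies here, and specifically in achieving a \emph{linear} Hamming gap: the pumpable factor must be uncorrectable by $O(1)$ edits (otherwise every window sits within constant distance of $L$ and no gap-$\eps n$ separation exists), while the complementary suffixes must still land every off-diagonal window exactly back in $L$. This tension between ``far on the diagonal'' and ``inside $L$ off the diagonal'' is exactly what separates the $\log n$ regime from the suffix-free $\log\log n$ regime, and I expect the bulk of the proof to be the structural lemma that extracts such an uncorrectable pumpable factor from the failure of the ``trivial $\cup$ suffix-free'' decomposition.
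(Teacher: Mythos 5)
Your reduction from a one-sided error tester to a nondeterministic automaton $N_n$ (the support automaton with the rejecting configurations declared final) is correct and is precisely the paper's notion of a co-nondeterministic sliding window tester, so the first step matches the paper. The gap is in the second step: the extended fooling set method is provably too weak to deliver the claimed bounds. Take $L=a^*\subseteq\{a,b\}^*$, which falls under the first item. For a cross term $\last_n(x_iy_j)$ to lie in $L$ it must equal $a^n$; one first rules out $|y_j|\ge n$ (then the window of $x_jy_j$ would equal that of $x_iy_j$ and also lie in $L$, contradicting (a)), so $y_j=a^{|y_j|}$ and the last $n-|y_j|$ letters of $x_i$ are all $a$. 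Combined with the diagonal requirement that the last $n-|y_i|$ letters of $x_i$ contain more than $\eps n$ occurrences of $b$, every pair $i\ne j$ forces $\bigl||y_i|-|y_j|\bigr|>\eps n$, so any fooling set in your sense has size at most $1/\eps+O(1)$ and certifies only constant space, while the truth is $\Omega(\log n)$. The reason is structural: a fooling set licenses a single splice of two runs, whereas the paper's argument (Theorem~\ref{theorem-lower-bound} and its co-nondeterministic analogue) pumps a cycle an \emph{unbounded} number of times. On the streams $v_k=w_f^ny^kz$, with $L|_N$ excluding the factor $w_f$ and containing $y^*z$ (Lemma~\ref{lem:not-sf}), a repeated state among the $r+1$ positions of a single rejecting run on $v_r$ yields rejecting runs on $v_{r+(j-i)t}$ for all $t$, eventually reaching a window $y^lz\in L$ --- a contradiction no single splice can produce. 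So the states $p_0,\dots,p_r$ are pairwise distinct and the $\Omega(n)$ state bound follows; you need this iterated pumping, not condition (b).

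For the second item your plan diverges even further from what is needed. The paper does not argue on the co-nondeterministic automaton directly at all: it determinizes $A_n$ by a powerset construction with the coNFA acceptance condition $Q\subseteq F_n$ (Lemma~\ref{lemma-powerset}), obtaining a deterministic tester with space $2^{s(n)}$, and then applies the $\Omega(\log n)$ lower bound for nontrivial languages to that deterministic tester; the $\log\log n$ comes entirely from the exponential blowup of determinization. A direct fooling set of size $(\log n)^{\Omega(1)}$ in $N_n$ is again unobtainable for, e.g., the nontrivial suffix-free language $ba^*$, where the unique length-$n$ member $ba^{n-1}$ forces the same collapse of any fooling set to size $O(1/\eps)$ --- consistent with the fact that the matching upper bound here is only $O(\log\log n)$, so no $\Omega(\log n)$-size state bound on $N_n$ can exist. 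In short, both constructions you defer to ``the remaining work'' are impossible within the framework you set up; the missing ideas are the iterated pumping argument for item one and the determinization detour for item two.
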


\medskip
We provide the proofs of Theorem~\ref{theorem:deterministic_ub},~\ref{theorem:two-sided}, and~\ref{theorem:one-sided_ub} in
Sections~\ref{sec:deterministic_ub},~\ref{sec:two-sided}, and~\ref{sec:one-sided_ub}, respectively. The proofs of Theorems~\ref{theorem:deterministic_lb} and~\ref{theorem:one-sided_lb} can be found in Section~\ref{section-lower-bounds}. We would like to emphasize that the lower bounds from Section~\ref{section-lower-bounds} are stronger than those stated in Theorems~\ref{theorem:deterministic_lb} and~\ref{theorem:one-sided_lb}. More precisely,  we show space lower bounds for nondeterministic and co-nondeterministic sliding window testers; see Section~\ref{section-lower-bounds} for definitions. 

\section{Trivial languages} \label{sec-trivial}

Let us start by analyzing trivial regular languages.
The reason we introduce trivial languages the way we do (and a justification to call them ``trivial'') is stated in the following theorem:

\begin{restatable}{theorem-restatable}{theoremtrivial}
\label{prop-trivial1}
If $L$ is a trivial language (not necessarily regular), then there is a deterministic sliding window tester for $L$ with constant Hamming gap which uses constant space. The converse is also true: If for a language $L$ there is a deterministic constant-space sliding window tester with Hamming gap $\gamma(n)$, then there exists a constant $c$ such that $L$ is $(\gamma(n)+c)$-trivial. 
\end{restatable}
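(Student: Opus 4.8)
My plan splits along the two implications.

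\medskip
\noindent\textbf{Forward direction (trivial $\Rightarrow$ constant space).} Let $c$ and $n_0$ witness that $L$ is trivial, i.e.\ for all $n \ge n_0$ with $L \cap \Sigma^n \neq \emptyset$ and all $w \in \Sigma^n$ we have $\dist(w,L) \le c$. I would take Hamming gap $\gamma(n) = c$ and define the family $(A_n)$ non-uniformly by cases. For $n \ge n_0$ with $L \cap \Sigma^n \neq \emptyset$, \emph{no} window $w \in \Sigma^n$ satisfies $\dist(w,L) > c = \gamma(n)$, so the ``reject if far'' clause is vacuous and $A_n$ may simply always accept; this is correct and uses $\O(1)$ space. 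For $n \ge n_0$ with $L \cap \Sigma^n = \emptyset$, every $w \in \Sigma^n$ has $\dist(w,L) = \infty > \gamma(n)$ and none lies in $L$, so $A_n$ may always reject. For the finitely many $n < n_0$, I let $A_n$ store the active window verbatim ($\O(n) = \O(n_0) = \O(1)$ bits) and decide membership exactly. The space is bounded by a constant independent of $n$, which proves the first statement.

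\medskip
\noindent\textbf{Converse (constant space $\Rightarrow$ $(\gamma+c)$-trivial).} Constant space means there is a uniform bound $m$ on the number of states of each DFA $A_n$. The key structural fact is: for every state $q$ reachable in $A_n$ (say $q = \delta_n(q_0,u)$) and every $x \in L \cap \Sigma^n$, the state $\delta_n(q,x)$ is accepting. Indeed, feeding the stream $ux$ makes the active window $\last_n(ux) = x \in L$, so $A_n$ must accept, and $\delta_n(q_0,ux) = \delta_n(q,x)$. In particular every reachable state can reach an accepting state. I would then reduce the goal as follows: to bound $\dist(w,L)$ it suffices to find $v \in \Sigma^n$ with $\dist(w,v) \le c$ that $A_n$ \emph{accepts}, because feeding $v$ as a length-$n$ stream makes the window $v$, and the ``reject if far'' clause contrapositively gives $\dist(v,L) \le \gamma(n)$, whence $\dist(w,L) \le \dist(w,v) + \dist(v,L) \le c + \gamma(n)$.

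\medskip
\noindent\textbf{Construction of $v$ and the main obstacle.} Running $A_n$ on $w = w_1 \cdots w_n$ produces states $p_0, \dots, p_n$, and replacing a suffix of length $\ell$ by $y \in \Sigma^\ell$ yields an accepted word exactly when $\delta_n(p_{n-\ell}, y)$ is accepting, i.e.\ when there is a path of length \emph{exactly} $\ell$ from the reachable state $p_{n-\ell}$ to an accepting state. By the structural fact such a state has \emph{some} accepting path, and a shortest one has length $d < m$; taking $c = \O(m)$ would suffice for the edit budget. The main obstacle, which I expect to be the technical heart of the proof, is matching the length \emph{exactly} $\ell$ while changing only $\O(1)$ symbols: the shortest accepting length $d$ need not equal the chop length, and one cannot freely pad it. I would resolve this by pigeonhole on the last $m+1$ states $p_{n-m}, \dots, p_n$ of the run of $w$, which produces a repeated state $p_a = p_b$ and hence a short cycle (of length $\le m$) on $w$'s own run that can absorb the slack between $d$ and $\ell$ modulo the cycle length; combined with the ultimate periodicity of the set of accepting-path lengths from a fixed state, this lets me hit a chop length $\ell \le c$ that actually lies in that set. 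Carrying out this length/residue bookkeeping cleanly — choosing the cycle and the chop point so that the residues align — is where the real work lies, and it fixes the final constant $c$ as a function of $m$.
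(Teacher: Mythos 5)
Your forward direction is correct and matches the paper's (the paper absorbs the finitely many small $n$ into the gap constant rather than storing the window, but that is immaterial). Your converse, however, takes a genuinely different route, and the step you flag as ``the real work'' is exactly where your plan is still open. It can be closed, and in fact more simply than your cycle-plus-pigeonhole sketch suggests: for an automaton with at most $m$ states, each set $\mathrm{AccLen}(q)=\{\ell : \exists y\in\Sigma^\ell,\ \delta_n(q,y)\in F\}$ is eventually periodic with a period $\Pi$ and threshold $T$ that depend only on $m$, so one may fix a \emph{common} $\Pi$ and $T$ for all states of all $A_n$. Your structural fact gives $n\in \mathrm{AccLen}(q)$ for \emph{every} reachable state $q$. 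Hence if you pick the chop length $\ell^*$ with $T\le \ell^* < T+\Pi$ and $\ell^*\equiv n \pmod{\Pi}$, then $\ell^*\in \mathrm{AccLen}(q)$ for every reachable $q$ simultaneously --- in particular for $q=p_{n-\ell^*}$, whichever state that happens to be. So no alignment of a cycle with the chop point is needed; the residue alignment is automatic because you compare $\ell^*$ with $n$ itself rather than with a shortest accepting path. This yields an accepted $v$ with $\dist(w,v)\le T+\Pi=c$ and hence $\dist(w,L)\le \gamma(n)+c$ for all $n\ge T+\Pi$, which suffices since triviality is only required above a threshold. (Be aware that $c$ is then of the order of $T+\Pi$, not $\O(m)$ as you guess; $\Pi$ can be as large as $\operatorname{lcm}(1,\dots,m)$, but it is still a constant.)

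The paper avoids this combinatorics entirely by exploiting constant space \emph{across} window sizes: since only finitely many distinct DFAs occur among the $A_n$, each $A_n$ coincides with some $A_{n_i}$ for a bounded $n_i\le n$ with $L\cap\Sigma^{n_i}\neq\emptyset$; picking $u_i\in L\cap\Sigma^{n_i}$, correctness of $A_{n_i}$ at window size $n_i$ forces the common automaton to accept every stream in $\Sigma^* u_i$, so $A_n$ accepts $xu_i$ for every $x\in\Sigma^{n-n_i}$ and one concludes $\dist(w,L)\le\gamma(n)+\max_i n_i$ directly. That argument is shorter and produces a universal accepting suffix of constant length in one stroke, whereas your argument stays inside a single $A_n$ and needs the uniform eventual-periodicity input; both are valid, but you should either carry out the periodicity argument as above or switch to the cross-window-size argument to make the converse complete.
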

\begin{proof}
Assume first that $L$ is trivial. Let $n \in \N$ be a window size. If $L\cap \Sigma^n=\emptyset$, then the algorithm always rejects, which is obviously correct since any active window of length $n$ has infinite Hamming distance to $L$. Otherwise, the algorithm always accepts. In this case, we use the fact that $L$ is trivial, i.e., there is a constant $c$ such that the Hamming distance between an arbitrary active window of length $n$ and $L$ is at most~$c$.

We now show the converse statement. Let $\mathcal{A} = (A_n)$ be a deterministic sliding window tester for $L$ with Hamming gap $\gamma(n)$ which uses constant space.
Assume that every $A_n$ works on at most $s$ bits for a constant $s$.
Let $N \subseteq \mathbb{N}$ be the set of all $n$ such that $L \cap \Sigma^n \neq \emptyset$.
Note that every $A_n$ with $n \in N$ can be viewed as a DFA with at most $2^{s+1}$ states
that accepts a non-empty language.
The number of DFAs of size at most $2^{s+1}$ over the input alphabet $\Sigma$ is bounded by a fixed constant $d$ (up to isomorphism).
Hence, at most $d$ different DFAs can appear in the list $(A_n)_{n \in N}$. We therefore can choose 
numbers $n_1 < n_2 < \cdots < n_e$ from $N$ with $e \leq d$ such that
for every $n \in N$ there exists a unique $n_i \leq n$ with $A_n = A_{n_i}$
(here and in the following we do not distinguish between isomorphic DFAs). 
Let us choose for every $1 \le i \le e$ a word $u_i \in L$ of length $n_i$.
Now take any $n \in N$. Assume that $A_n = A_{n_i}$ where $n_i \leq n$. Consider any word $u \in \Sigma^* u_i$.
Since $\last_{n_i}(u) = u_i \in L$, $A_{n_i}$ has to accept $u$. Hence, $A_n$ accepts all words from $\Sigma^* u_i$.
In particular, for every word $x$ of length $n-n_i$, $A_n$ accepts $xu_i$. This implies that
$\dist(x u_i,L) \leq \gamma(n)$ for all $x \in \Sigma^{n-n_i}$.
Recall that this holds for all $n \in N$ and that $N$ is the set of all lengths realized by $L$.
Hence, if we define $c := \max\{n_1, \ldots, n_e\}$ (which is a constant that only depends on our deterministic sliding window tester),
then every word $w$ of length $n \in N$ has Hamming distance at most $\gamma(n)+c$ from a word in $L$. Therefore $L$ is $(\gamma+c)$-trivial. 
\end{proof}

In the rest of the section we show that every nontrivial regular language $L$ is already not 
$\eps n$-trivial for some $\eps>0$. For this we first show some auxiliary results that will be also
used in Section~\ref{section-lower-bounds}.
Given $i,j \ge 0$ and a word $w$ of length at least $i+j$ we define
$\cut_{i,j}(w) = y$ such that $w = xyz$, $|x| = i$ and $|z| = j$.
If $|w| < i+j$, then $\cut_{i,j}(w)$ is undefined.
For a language $L$ we define the {\em cut-language} $\cut_{i,j}(L) = \{ \cut_{i,j}(w) \mid w \in L \}$.

\begin{lem}
	If $L$ is regular, then there are finitely many languages $\cut_{i,j}(L)$.
\end{lem}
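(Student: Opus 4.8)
The plan is to fix a DFA $A = (Q,\Sigma,q_0,\delta,F)$ accepting $L$ and to argue that each cut-language $\cut_{i,j}(L)$ is completely determined by a pair of subsets of the finite state set $Q$. Since there are only finitely many such pairs, only finitely many distinct cut-languages can arise.

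Concretely, I would introduce for each $i \ge 0$ the set of states reachable from $q_0$ by words of length exactly $i$,
\[ R_i = \{ \delta(q_0,x) : x \in \Sigma^i \} \subseteq Q, \]
and for each $j \ge 0$ the set of states from which some final state is reachable by a word of length exactly $j$,
\[ C_j = \{ q \in Q : \delta(q,z) \in F \text{ for some } z \in \Sigma^j \} \subseteq Q. \]
By definition $y \in \cut_{i,j}(L)$ iff there are $x \in \Sigma^i$ and $z \in \Sigma^j$ with $xyz \in L$, i.e.\ $\delta(\delta(\delta(q_0,x),y),z) \in F$. Unfolding this through $R_i$ and $C_j$, it holds iff $\delta(p,y) \in C_j$ for some $p \in R_i$, so that
\[ \cut_{i,j}(L) = \{ y \in \Sigma^* : \delta(p,y) \in C_j \text{ for some } p \in R_i \}. \]
The right-hand side depends on $i$ and $j$ only through the pair $(R_i,C_j)$.

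From this it follows at once that $R_i = R_{i'}$ and $C_j = C_{j'}$ imply $\cut_{i,j}(L) = \cut_{i',j'}(L)$. As $R_i$ and $C_j$ are subsets of the finite set $Q$, there are at most $2^{|Q|}$ possible values for each, hence at most $2^{2|Q|}$ distinct pairs and therefore at most $2^{2|Q|}$ distinct cut-languages.

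I expect no genuinely hard step here: the entire content is the observation that membership of $y$ in $\cut_{i,j}(L)$ factors through the reachable-state set $R_i$ on the left and the co-reachable-state set $C_j$ on the right, each of which takes only finitely many values. The only mild point to check is that the case $|w| < i+j$ (where $\cut_{i,j}(w)$ is undefined) causes no trouble, which is automatic since any $y$ in the displayed set comes from a word $xyz \in L$ of length $i + |y| + j \ge i+j$.
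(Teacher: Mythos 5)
Your proof is correct and is essentially the paper's own argument: the paper also fixes a DFA for $L$, takes $I$ to be the states reachable from $q_0$ in exactly $i$ steps and $F'$ the states co-reachable from $F$ in exactly $j$ steps, and observes that $\cut_{i,j}(L)$ is the language of the NFA $(Q,\Sigma,I,\delta,F')$, giving the same $2^{2|Q|}$ bound. Your set-theoretic phrasing of the dependence on $(R_i,C_j)$ is just an unfolding of that NFA description, so there is nothing further to add.
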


\begin{proof}
Let $A = (Q,\Sigma,q_0,\delta,F)$ be a DFA for $L$.
Given $i,j \ge 0$, let $I$ be the set of states reachable from $q_0$ via $i$ symbols
and let $F'$ be the set of states from which $F$ can be reached via $j$ symbols.
Then the nondeterministic finite automaton $(Q,\Sigma,I,\delta,F')$ recognizes $\cut_{i,j}(L)$
(see Section~\ref{section-lower-bounds} for the definition of nondeterministic finite automata).
Since there are at most $2^{2|Q|}$ such choices for $I$ and $F'$,
the number of languages of the form $\cut_{i,j}(L)$ must be finite.
\end{proof}

A language $L$ is a {\em length language} if for all $n \in \N$
either $\Sigma^n\subseteq L$ or $\Sigma^n\cap L= \emptyset$.

\begin{lem} \label{cuts}
	If $\cut_{i,j}(L)$ is a length language for some $i,j \ge 0$, then $L$ is trivial.
\end{lem}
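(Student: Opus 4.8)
The plan is to prove directly that $L$ is $(i+j)$-trivial, which by the definition given above immediately yields that $L$ is trivial. Write $M = \cut_{i,j}(L)$ and assume $M$ is a length language. I will take the threshold $n_0 = i+j$ and the constant $c = i+j$, and verify that for every $n \ge n_0$ with $L \cap \Sigma^n \ne \emptyset$ and every $w \in \Sigma^n$ one has $\dist(w,L) \le i+j$.

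First I would extract the relevant length information from a single witness. Since $L \cap \Sigma^n \neq \emptyset$, pick some $v \in L$ of length $n$ and factor it as $v = x_0 y_0 z_0$ with $|x_0| = i$ and $|z_0| = j$; then $y_0 = \cut_{i,j}(v) \in M$ and $|y_0| = n-i-j$. Because $M$ is a length language and contains a word of length $n-i-j$, the alternative $\Sigma^{n-i-j} \cap M = \emptyset$ is ruled out, so in fact the entire block $\Sigma^{n-i-j} \subseteq M$.

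The core of the argument is a grafting step. Given an arbitrary $w \in \Sigma^n$, factor $w = xyz$ with $|x| = i$ and $|z| = j$, so that the middle factor $y$ has length $n-i-j$ and therefore lies in $\Sigma^{n-i-j} \subseteq M$. By the definition of the cut-language there is a word $w' \in L$ with $\cut_{i,j}(w') = y$, that is, $w' = x' y z'$ for some $x' \in \Sigma^i$ and $z' \in \Sigma^j$. Now $w = xyz$ and $w' = x'yz'$ share the same middle block $y$, hence agree on all positions $i+1, \dots, n-j$ and can differ only on the first $i$ and the last $j$ positions. Consequently $\dist(w,w') \le i+j$, and since $w' \in L$ we obtain $\dist(w,L) \le i+j$.

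Since this bound holds for every $w \in \Sigma^n$ whenever $n \ge i+j$ and $L \cap \Sigma^n \neq \emptyset$, the language $L$ is $(i+j)$-trivial, and in particular trivial. I do not expect a genuine obstacle here; the only point needing a little care is the threshold $n_0 = i+j$, below which $\cut_{i,j}$ is undefined, but this is harmless because triviality only constrains lengths $n \ge n_0$. The conceptual heart is the observation that the length-language hypothesis lets us keep an \emph{arbitrary} interior of $w$ while replacing its two ends by the ends of some member of $L$, paying only $i+j$ Hamming errors at the boundary.
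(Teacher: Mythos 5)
Your proof is correct and follows essentially the same route as the paper's: both use the length-language hypothesis to conclude that the middle block $\cut_{i,j}(w)$ of an arbitrary $w\in\Sigma^n$ lies in $\cut_{i,j}(L)$, and then graft it between suitable end blocks $x'\in\Sigma^i$, $z'\in\Sigma^j$ to obtain a word of $L$ within Hamming distance $i+j$ of $w$. The extra care you take about the threshold $n_0=i+j$ and the equal lengths of $w$ and $w'$ is fine but not a substantive difference.
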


\begin{proof}
	Assume that $\cut_{i,j}(L)$ is a length language.
	Let $n \in \N$ such that $L \cap \Sigma^n \neq \emptyset$
	and $n \ge i+j$. We claim that $\dist(w,L) \le i+j$ for all $w \in \Sigma^n$.
	Let $w \in \Sigma^n$ and $w' \in L \cap \Sigma^n$. Then $\cut_{i,j}(w') \in \cut_{i,j}(L)$
	and hence also $\cut_{i,j}(w) \in \cut_{i,j}(L)$.
	Therefore there exist $x \in \Sigma^i$ and $z \in \Sigma^j$ such that $x \, \cut_{i,j}(w) \, z \in L$
	satisfying $\dist(w, x \, \cut_{i,j}(w) \, z) \le i+j$.
\end{proof}

The {\em restriction} of a language $L$ to a set of lengths $N \subseteq \N$
is $L|_N = \{ w \in L : |w| \in N \}$.
A language~$L$ {\em excludes a word $w$ as a factor} if $w$ is not a factor of any word in $L$.
A simple but important observation is that if $L$ excludes $w$ as a factor
and $v$ contains $k$ disjoint occurrences of $w$,
then $\dist(v,L) \ge k$: If we change at most $k-1$ many symbols in $v$, then the resulting
word $v'$ must still contain $w$ as a factor and hence $v' \notin L$. 

\begin{prop}
	\label{prop-nontriv}
	Let $L$ be regular. If $\cut_{i,j}(L)$ is not a length language for all $i, j \ge 0$,
	then $L$ has an infinite restriction $L|_N$ to an arithmetic progression $N = \{ a+bn \mid n \in \N \}$
	which excludes a factor.
\end{prop}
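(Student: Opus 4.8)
The plan is to fix a DFA $A=(Q,\Sigma,q_0,\delta,F)$ for $L$ and exploit the two families already appearing in the cut-language lemma: $I_i$ (states reachable from $q_0$ in exactly $i$ steps) and $F'_j$ (states from which $F$ is reachable in exactly $j$ steps). Each of the sequences $(I_i)_i$ and $(F'_j)_j$ is eventually periodic, so I fix a common period $p$ that is also a multiple of the period of the ultimately periodic length set $N_0=\{\ell : L\cap\Sigma^\ell\neq\emptyset\}$, and write $\hat I_r,\hat F_s$ ($r,s\in\mathbb{Z}_p$) for the eventual values. Using the NFA description $\cut_{i,j}(L)=\{x:\delta(I_i,x)\cap F'_j\neq\emptyset\}$, the hypothesis ``no $\cut_{i,j}(L)$ is a length language'' becomes: for all residues $r,s$ the language $\hat C_{r,s}=\{x:\delta(\hat I_r,x)\cap\hat F_s\neq\emptyset\}$ is not a length language (specialise $\cut_{i,j}(L)$ to large $i\equiv r$, $j\equiv s$). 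In particular each pair $(r,s)$ admits a \emph{blocker}: a word $y$ with $\delta(\hat I_r,y)\cap\hat F_s=\emptyset$. Note also that the hypothesis forces $L$ infinite, since otherwise $\cut_{i,j}(L)=\emptyset$ (a length language) for $i+j$ larger than the maximal length in $L$; hence $N_0$ is infinite.

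I would then fix the candidate progression. Pick a realized residue $\rho$ (i.e.\ $\rho\in N_0\bmod p$) and set $N=\{a+pn:n\in\N\}$ with $a\equiv\rho\pmod p$ and $a$ large, so $L|_N$ is infinite. The conclusion reduces to finding a single word $w$ excluded as a factor from every word of $L|_N$. An occurrence of $w$ at position $i$ inside $u\in L|_N$ (so $u=u_1wu_2$ with $|u_1|=i$, $|u_2|=j$, $i+|w|+j\in N$) is possible iff $\delta(I_i,w)\cap F'_j\neq\emptyset$; thus $w$ is excluded iff $\delta(I_i,w)\cap F'_j=\emptyset$ for \emph{all} $i,j\ge 0$ with $i+|w|+j\in N$. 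By periodicity, for $i,j$ both large these conditions collapse to the $p$ conditions $\delta(\hat I_r,w)\cap\hat F_{\rho-|w|-r}=\emptyset$ ($r\in\mathbb{Z}_p$), i.e.\ $w$ must ``miss'' the co-reachable target along the whole diagonal $\{(\hat I_r,\hat F_{\rho-|w|-r}):r\in\mathbb{Z}_p\}$; the remaining finitely many boundary splits (one margin small) give finitely many extra conditions, each again a non-length-language pair by hypothesis, which I fold into the same requirement. Equivalently, writing $G_\rho$ for the phase-tracking NFA on $Q\times\mathbb{Z}_p$ recognizing the (large-margin) factors of $L|_N$, such a $w$ exists iff $G_\rho$ is \emph{not} universal.

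The hard part will be producing this simultaneous blocker. Each diagonal pair has a blocker individually, but naive concatenation fails: a word blocking $(\hat I_r,\hat F_{\rho-|w|-r})$ need not block $(\hat I_{r'},\hat F_{\rho-|w|-r'})$, and the targets shift with the total length, so blocker lengths and targets are coupled across residues. My plan is to attack this through the transition monoid. Let $e=\delta(\,\cdot\,,u^\omega)$ be an idempotent of minimal rank reachable by a word $u$ with $|u|\equiv 0\pmod p$; reading $u^\omega$ then sends every $\hat I_r$ into the common minimal image $J=e(Q)$ and stabilizes the subset construction of $G_\rho$. Working in that subset construction, I would argue that if \emph{every} $G_\rho$ (over realized $\rho$) were universal, then tracing the stabilized subsets obtained from such minimal-rank idempotents pins down a residue pair $(r,s)$ on which $\delta(\hat I_r,\cdot)$ meets $\hat F_s$ in a length-determined, all-or-nothing fashion; that is, $\hat C_{r,s}$ would be a length language, contradicting the hypothesis. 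This non-universality is the crux and is exactly the ``novel combinatorial property of automata'' the introduction advertises.

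Finally, once a simultaneous blocker $w$ is obtained for some realized $\rho$, I would assemble the statement: take $N=\{a+pn\}$ with $a\equiv\rho\pmod p$ and $a$ large, note $L|_N$ is infinite, and verify via the equivalence above that $w$ is a factor of no word in $L|_N$ (periodic positions by the diagonal conditions, boundary positions by the folded-in conditions). This produces the required infinite restriction $L|_N$ to an arithmetic progression excluding a factor. The only genuinely nontrivial ingredient is the non-universality/simultaneous-blocking step; the eventual periodicity of $I_i,F'_j$, the finiteness of the cut-languages, and the reduction of ``excludes a factor'' to the diagonal-plus-boundary conditions are all routine bookkeeping.
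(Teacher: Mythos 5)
Your setup (eventual periodicity of the reachable sets $I_i$ and co-reachable sets $F'_j$, reduction of ``excludes a factor'' to blocking a family of residue-indexed pairs) is sound, and you correctly identify the crux: a blocker for one pair $(\hat I_r,\hat F_s)$ need not block the others, and the target sets shift with the length of the candidate factor. But that crux is exactly the step you do not prove. The passage ``tracing the stabilized subsets obtained from such minimal-rank idempotents pins down a residue pair $(r,s)$ on which $\delta(\hat I_r,\cdot)$ meets $\hat F_s$ in a length-determined, all-or-nothing fashion'' is a plan, not an argument; nothing in the proposal shows why universality of every phase-tracking NFA $G_\rho$ would force some $\hat C_{r,s}$ to be a length language, and the ``folding in'' of the boundary conditions (small $i$ or small $j$) only adds constraints to an already unestablished simultaneous-blocking claim. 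As written, the proof has a hole precisely where the content is.

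The paper resolves this with a much lighter device that your proposal is missing. Since the cut-languages are finitely many, one finds a single pair $(i,j)$ and a single $h>0$ with $\cut_{i,j}(L)=\cut_{i+h,j}(L)=\cut_{i,j+h}(L)=\cut_{i+h,j+h}(L)$, takes one word $y$ of length $k$ outside $\cut_{i,j}(L)$ (with some $y'$ of length $k$ inside, which also gives the infinite restriction along $N=\{k+i+j+hn\}$), and builds $u$ containing an occurrence of $y$ at a starting position in \emph{every} residue class mod $h$. The excluded factor is $a^iua^j$: the padding $a^i,a^j$ forces any occurrence inside $w\in L|_N$ to sit at cut-indices at least $(i,j)$, the length constraint $|\cut_{i,j}(w)|=k+hn$ forces one copy of $y$ to split $\cut_{i,j}(w)$ as $xyz$ with $|x|,|z|$ both multiples of $h$, and the stabilization then yields $y\in\cut_{i,j}(L)$, a contradiction. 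In other words, the simultaneous blocker is manufactured by concatenating shifted copies of a \emph{single} blocker rather than by a monoid/idempotent analysis. If you want to salvage your route you would need to actually prove your non-universality claim; alternatively, adopt the shifted-copies construction, which closes the gap with only the stabilization identity you already have.
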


\begin{proof}
	First notice that $\cut_{i,j}(L)$ determines $\cut_{i+1,j}(L)$ and $\cut_{i,j+1}(L)$: we have
	$\cut_{i+1,j}(L) = \{ w \mid \exists a \in \Sigma : aw \in \cut_{i,j}(L) \}$ and similarly for $\cut_{i,j+1}(L)$.
	Since the number of cut-languages $\cut_{i,j}(L)$ is finite there exist numbers $i \ge 0$ and $d > 0$ such that
	$\cut_{i,0}(L) = \cut_{i+d,0}(L)$. Hence, we have $\cut_{i,j}(L) = \cut_{i+d,j}(L)$ for all $j \ge 0$. 
	By the same argument, there exist numbers $j \ge 0$ and $e > 0$ such that
	$\cut_{i,j}(L) = \cut_{i,j+e}(L) = \cut_{i+d,j}(L) = \cut_{i+d,j+e}(L)$, which implies
	$\cut_{i,j}(L) = \cut_{i,j+h}(L) = \cut_{i+h,j}(L) = \cut_{i+h,j+h}(L)$ for some $h>0$ (we can take $h = ed$).
	This implies that $\cut_{i,j}(L)$ is closed under removing prefixes and suffixes of length $h$.
	
	By assumption $\cut_{i,j}(L)$ is not a length language,
	i.e. there exist words $y' \in \cut_{i,j}(L)$ and $y \notin \cut_{i,j}(L)$
	of the same length $k$.
	Let $N = \{ k + i + j + hn \mid n \in \N \}$.
	For any $n \in \N$ the restriction $L|_N$ contains a word of length $k+i+j+hn$
	because $y' \in \cut_{i,j}(L) = \cut_{i+hn,j}(L)$.
	This proves that $L|_N$ is infinite.
	
	Let $u$ be an arbitrary word which contains for every remainder $0 \le r \le h-1$
	an occurrence of $y$ as a factor starting at a position which is congruent to $r$ mod $h$.
	We claim that $L|_N$ excludes $a^i u a^j$ as a factor where $a$ is an arbitrary symbol.
	Assume that there exists a word $w \in L|_N$ which contains $a^i u a^j$ as a factor.
	Then $\cut_{i,j}(w)$ contains $u$ as a factor, has length $k+hn$ for some $n \ge 0$,
	and belongs to $\cut_{i,j}(L)$.
	Therefore $\cut_{i,j}(w)$ also contains $h$ many occurrences of $y$, one per remainder $0 \le r \le h-1$.
	Consider the occurrence of $y$ in $\cut_{i,j}(w)$ which starts at a position which is divisible by $h$,
	i.e. we can factorize $\cut_{i,j}(w) = xyz$ such that $|x|$ is a multiple of $h$.
	Since $\cut_{i,j}(w)$ has length $k+hn$ also $|z|$ is a multiple of~$h$.
	Therefore $y \in \cut_{i+|x|,j+|z|}(L) = \cut_{i,j}(L)$, which is a contradiction.
\end{proof}

\begin{cor}
\label{triv-alon}
If $L$ is a nontrivial regular language, then there exists $\eps > 0$
such that $L$ is not $\eps n$-trivial.
\end{cor}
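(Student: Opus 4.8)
The plan is to chain together the three preceding results, so that all the real work is done by Proposition~\ref{prop-nontriv}. First I would argue, by contraposition of Lemma~\ref{cuts}, that nontriviality of $L$ forces \emph{every} cut-language to fail to be a length language: if $\cut_{i,j}(L)$ were a length language for some $i,j \ge 0$, then Lemma~\ref{cuts} would make $L$ trivial, contradicting the hypothesis. Hence $\cut_{i,j}(L)$ is not a length language for any $i,j \ge 0$, which is precisely the hypothesis required to invoke Proposition~\ref{prop-nontriv}.

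Applying Proposition~\ref{prop-nontriv}, I obtain an infinite restriction $L|_N$ to an arithmetic progression $N = \{ a + bn \mid n \in \N \}$ that excludes some word $p$ as a factor, with $\ell = |p| \ge 1$. The quantitative engine is the factor-counting observation stated just before the proposition: since $L|_N$ excludes $p$, any word containing $k$ disjoint occurrences of $p$ has Hamming distance at least $k$ from $L|_N$. So for each $n \in N$ with $n > 2\ell$ I would take $v_n \in \Sigma^n$ to be the length-$n$ prefix of $p^\omega$; this packs at least $\lfloor n/\ell\rfloor$ disjoint copies of $p$ into $v_n$, giving $\dist(v_n, L|_N) \ge \lfloor n/\ell\rfloor$.

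The step that needs a moment of care is passing from $L|_N$ back to $L$. Because $\dist$ only ever compares words of equal length, and because $n \in N$ means the length-$n$ members of $L$ coincide with the length-$n$ members of $L|_N$, I get $\dist(v_n, L) = \dist(v_n, L|_N) \ge \lfloor n/\ell\rfloor$. Moreover $L \cap \Sigma^n \neq \emptyset$ for every $n \in N$, since the proof of Proposition~\ref{prop-nontriv} exhibits a word of $L$ of each such length (this is exactly why $L|_N$ is infinite). Matching the window length $n$ to the progression $N$ is the only genuinely delicate point; everything else is a direct assembly of the quoted statements, so I do not expect a real obstacle here.

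Finally I would fix $\eps = 1/(2\ell)$. For all sufficiently large $n \in N$ we have $\lfloor n/\ell\rfloor > \eps n$, hence $\dist(v_n, L) > \eps n$ while $L \cap \Sigma^n \neq \emptyset$; and since $N$ is infinite these witnesses occur for arbitrarily large $n$. This contradicts the definition of $\eps n$-triviality, which would require $\dist(w,L) \le \eps n$ for all sufficiently long admissible lengths, and so $L$ is not $\eps n$-trivial, as claimed.
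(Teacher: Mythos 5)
Your proposal is correct and follows essentially the same route as the paper: combine the contrapositive of Lemma~\ref{cuts} with Proposition~\ref{prop-nontriv} to get an infinite restriction $L|_N$ excluding a factor, then pack $\lfloor n/\ell\rfloor$ disjoint copies of that factor into a word of length $n \in N$ and invoke the factor-counting observation. The only difference is that you spell out the (correct) details the paper leaves implicit, namely that $\dist(v_n,L)=\dist(v_n,L|_N)$ for $n\in N$ and that lengths in $N$ are realized by $L$.
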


\begin{proof}
	Let $L$ be nontrivial and regular.
	By Lemma~\ref{cuts} and Proposition~\ref{prop-nontriv}
	there exists an infinite restriction $L|_N$ of $L$ which excludes a factor $w$.
	Hence if $n \in N$ and $v$ is any word of length $n$, which contains at least $\lfloor n/|w| \rfloor$
	many disjoint occurrences of $w$, then $\dist(v,L) \ge \lfloor n/|w| \rfloor$, which proves the claim.
\end{proof}

\section{More background on automata} \label{prel}

\subparagraph{Right-deterministic finite automata.}
For Section~\ref{sec-rand},
it is convenient to work with DFAs which read the input word from right to left.
A {\em right-deterministic finite automaton (rDFA)} is a tuple $B = (Q,\Sigma,F,\delta,q_0)$,
where $Q$, $\Sigma$, $q_0$ and $F$ are as in a DFA, and $\delta \colon \Sigma \times Q \to Q$
is the transition function. We extend $\delta$ to a mapping $\delta : Q \times \Sigma^* \to Q$ analogously to DFAs:
$\delta(q,\lambda) = q$ and $\delta(q,wa) = \delta( \delta(q,a),w)$. The regular language recognized by the rDFA $B$ is $L(B) = \{ w \in \Sigma^* : \delta(w,q_0) \in F \}$.
A run from $p_0 \in Q$ to $p_n \in Q$ on a word $x = a_n \cdots a_2 a_1 \in \Sigma^*$
is a sequence $\pi = (p_n,a_n, p_{n-1}, \dots, p_2,a_2,p_1,a_1,p_0)$
such that $p_i = \delta(a_i,p_{i-1})$ for all $1 \le i \le n$. The {\em length} of $\pi$ is $|\pi| = n$.
We visualize $\pi$ in the form
\[
	\pi \colon p_n \xleftarrow{a_n} p_{n-1} \xleftarrow{a_{n-1}} \cdots \xleftarrow{a_2} p_1 \xleftarrow{a_1} p_0 .
\]
If $p_n \in F$, then $\pi$ is an \emph{accepting run}.  A run of length $1$ is a {\em transition}.
If $\pi$ is a run from $p$ to $q$ on a word $v$,
and $\rho$ is a run from $q$ to $r$ on a word $u$,
then $\rho\pi$ denotes the unique run from $p$ to $r$ on $uv$.
We denote by $\pi_{w,q}$ the unique run on $w$ from $q$.

\subparagraph{Strongly connected graphs.}
With a DFA $A = (Q,\Sigma,q_0,\delta,F)$ we associate the directed graph $(Q,E)$
with edge set $E = \{ (p,\delta(p,a)) \mid p \in Q, a \in \Sigma\}$.
Similarly, with an rDFA $A = (Q,\Sigma,F,\delta,q_0)$  we associate the directed graph $(Q,E)$
with edge set $E = \{ (p,\delta(a,p)) \mid p \in Q, a \in \Sigma\}$.
Let $A$ be a DFA or an rDFA.
Two states $p,q$ in $A$ are {\em strongly connected} if there exists
a path in $(Q,E)$ from $p$ to $q$, and vice versa.
The {\em strongly connected components (SCCs)} of $A$ with state set $Q$
are the maximal subsets $C \subseteq Q$
in which all states $p,q \in C$ are strongly connected.
A state $q \in Q$ is {\em transient} if there exists no nonempty path from $q$ to $q$.
An SCC $C$ is {\em transient} if it only contains a single transient state.
There is a natural partial order on the SCCs, called the 
{\em SCC-ordering}, where the SCC $C_1$ is smaller than the SCC $C_2$ if there
exists a path in $(Q,E)$ from a state in $C_1$ to a state in $C_2$.

The following combinatorial result from \cite{AlonKNS00} will be used in this paper.
Consider a directed graph $G = (V,E)$. 
The period of $G$ is the greatest common divisor of all cycle lengths in $G$. If $G$ is acyclic
we define the period to be $\infty$.

\begin{lem}[c.f.~\cite{AlonKNS00}] \label{lemma-alon}
Let $G = (V, E)$ be a strongly connected directed graph with $E \neq \emptyset$ and finite period $g$. 
Then there exist a partition $V = \bigcup_{i=0}^{g-1} V_i$ and a constant $m(G) \leq  3|V|^2$ with the following properties:
\begin{itemize}
\item For every $0 \leq i,j \leq g-1$ and for every $u \in V_i$, $v \in V_j$ the length of every directed path from $u$ to 
$v$ in $G$ is congruent to $j-i$ modulo $g$.
\item For every $0 \leq i,j \leq g-1$, for every $u \in V_i$, $v \in V_j$ and every 
integer $r \geq m(G)$, if $r$ is congruent to $j-i$ modulo $g$, then there exists a directed path from $u$ to $v$ in $G$ of length $r$.
\end{itemize}
\end{lem}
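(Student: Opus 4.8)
The plan is to build the partition from a ``height function'' that records path lengths modulo $g$, to read off the first property immediately, and to reduce the second (reachability) property to a Frobenius-type estimate on the lengths of closed walks at a single vertex.

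First I would fix a base vertex $v_0$ and argue that for any two vertices $u,v$ all directed walks from $u$ to $v$ have the same length modulo $g$. The key observation is that every closed walk in $G$ has length divisible by $g$: it decomposes into simple cycles, each of length divisible by $g$ by definition of the period. Hence, given two $u$--$v$ walks $P_1,P_2$ and any $v$--$u$ walk $Q$ (which exists by strong connectivity), the concatenations $P_1Q$ and $P_2Q$ are closed walks, so $|P_1|+|Q|\equiv|P_2|+|Q|\pmod g$ and therefore $|P_1|\equiv|P_2|\pmod g$. This lets me define $d(u,v)\in\{0,\dots,g-1\}$ as the common length modulo $g$ of all $u$--$v$ walks; it is additive, $d(u,v)+d(v,w)\equiv d(u,w)\pmod g$. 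Setting $V_i=\{v : d(v_0,v)=i\}$ gives the partition, and the first bullet follows at once: for $u\in V_i$, $v\in V_j$, additivity gives $d(u,v)\equiv j-i\pmod g$, and every $u$--$v$ path has length $\equiv d(u,v)$.

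For the second bullet I would reduce the existence of walks of \emph{every} large admissible length to a statement about closed walks at one vertex. Fix $u\in V_i$, $v\in V_j$ and a shortest $u$--$v$ path $P$, of length $p\le|V|-1$ with $p\equiv j-i\pmod g$. Prepending to $P$ a closed walk of length $c$ at $u$ produces a $u$--$v$ walk of length $p+c$. Hence it suffices to show that the set $S_u$ of lengths of closed walks at $u$ contains every sufficiently large multiple of $g$, with an explicit threshold. Here $S_u$ is a sub-semigroup of $g\mathbb{N}$, and I would check $\gcd(S_u)=g$ by routing cycles through $u$: for a simple cycle $C$ through a vertex $x$, the walk $u\to x\to u$ and the walk $u\to x\to(\text{around }C)\to x\to u$ both lie in $S_u$ and differ by $|C|$, so $\gcd(S_u)$ divides every cycle length, hence divides $g$; as $S_u\subseteq g\mathbb{N}$ this forces $\gcd(S_u)=g$.

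The quantitative heart, and the step I expect to be the main obstacle, is bounding the threshold (the \emph{conductor} of $S_u$, i.e.\ the Frobenius number of $S_u/g$ rescaled by $g$) by roughly $|V|^2$, so that together with the overhead $p\le|V|-1$ the constant $m(G)\le 3|V|^2$ comes out. For this I would exhibit generators of $S_u$ of small length---each obtained as a shortest path into a simple cycle, once around, and a shortest path back, hence of length $O(|V|)$---whose gcd is $g$, and then invoke the classical fact that a numerical semigroup generated by integers of size $O(|V|/g)$ with gcd $1$ has conductor $O((|V|/g)^2)$, giving conductor $O(|V|^2/g)$ for $S_u$. Extracting the precise constant $3$ is the delicate part: a crude application of the two-generator Frobenius formula $ab-a-b$ overshoots, so one must either choose the two routed cycles carefully (so that their rescaled lengths are coprime and as small as possible, using the interval-covering identity that closed walks of rescaled lengths $s$ and $s+1$ realise all rescaled lengths $\ge s(s-1)$) or account more tightly for how the short connecting paths interact with the cycles. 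Finally I would take $m(G)$ to be the maximum over all pairs $(u,v)$ of $p$ plus the conductor of $S_u$, verify it is at most $3|V|^2$, and confirm that every $r\ge m(G)$ with $r\equiv j-i\pmod g$ is realised as $p+c$ for some $c\in S_u$.
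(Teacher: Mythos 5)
The paper does not prove this lemma -- it is imported from Alon et al.~\cite{AlonKNS00} without proof -- so there is no in-paper argument to compare against; I evaluate your proposal on its own. Your first half is correct and complete: closed walks decompose into simple cycles, each cycle length is a multiple of the period $g$, so all $u$--$v$ walks agree modulo $g$, the function $d$ is well defined and additive, and the partition $V_i = \{v : d(v_0,v)=i\}$ yields the first bullet. Your reduction of the second bullet to the conductor of the semigroup $S_u$ of closed-walk lengths at $u$, and your verification that $\gcd(S_u)=g$ by routing each simple cycle through $u$, are also sound.

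The genuine gap is the quantitative bound $m(G) \le 3|V|^2$, which you flag but do not close. Your generators of $S_u$ (shortest path in, once around a simple cycle, shortest path back) have length up to about $3|V|$, so the conductor bound you invoke gives roughly $9|V|^2$, not $3|V|^2$; and of your two proposed repairs, the first -- picking two routed cycles whose rescaled lengths are coprime -- is not always available (e.g.\ cycle lengths $6,10,15$ have gcd $1$ but no coprime pair), while the second is not carried out. A clean fix that stays within your framework: take a single closed walk $W$ at $u$ that visits every vertex, built from at most $|V|$ shortest-path legs, so $|W| \le |V|(|V|-1)$. Since $W$ meets every simple cycle, $|W| + \sum_i a_i c_i \in S_u$ for all nonnegative integers $a_i$, where the $c_i \le |V|$ are the simple cycle lengths with $\gcd_i c_i = g$. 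The standard residue argument modulo the smallest generator shows that the semigroup generated by the $c_i$ contains every multiple of $g$ that is at least $c_{\min}c_{\max}/g \le |V|^2$. Adding the offset $|W| \le |V|^2 - |V|$ and the shortest $u$--$v$ path of length at most $|V|-1$ gives $m(G) \le 2|V|^2 - 1 \le 3|V|^2$, as required. (For the purposes of this paper the precise constant is immaterial -- $m(G)$ is only ever used as ``some finite threshold'' -- but the statement as given does assert it.)
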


If $G = (V,E)$ is strongly connected with $E \neq \emptyset$ and finite period $g$,
and $V_0, \ldots, V_{g-1}$ satisfy the properties from Lemma~\ref{lemma-alon}, then
we define the {\em shift} from $u \in V_i$ to $v \in V_j$
by
\begin{equation}\label{shift}
\shift(u,v) = j-i \pmod g \in \{0, \dots, g-1\}.
\end{equation}
Notice that this definition is independent of the partition $\bigcup_{i=0}^{g-1} V_i$ 
since any path from $u$ to $v$ has length $\ell \equiv \shift(u,v) \pmod g$
by Lemma~\ref{lemma-alon}.  Also note that $\shift(u,v) + \shift(v,u) \equiv 0 \pmod g$.
In the following let $g(C)$ denote the period of the SCC $C$.

\begin{restatable}[Uniform period]{lemma-restatable}{uniformperiod}
\label{lem:uni-per}
For every regular language $L$ there exists an rDFA $A$ for $L$ and a number~$g$ such that every non-transient SCC $C$ in $A$ has period $g(C) = g$.
\end{restatable}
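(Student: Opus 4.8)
The plan is to begin with an arbitrary rDFA $B = (Q,\Sigma,F,\delta,q_0)$ for $L$ and to \emph{synchronize} the periods of its non-transient SCCs by forming a product with a cyclic length counter. Let $g$ be the least common multiple of the periods $g(C)$ over all non-transient SCCs $C$ of $B$; each such period is finite because a non-transient SCC contains a cycle, and if $B$ has no non-transient SCC at all we simply take $g=1$ and $A=B$, so that the claim holds vacuously. I would then let $A$ be the rDFA with state set $Q \times \mathbb{Z}/g\mathbb{Z}$, initial state $(q_0,0)$, final states $F \times \mathbb{Z}/g\mathbb{Z}$, and transition function $\delta'(a,(q,i)) = (\delta(a,q),\,(i+1) \bmod g)$, so that the second component merely counts the number of symbols read modulo $g$.

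First I would verify that $A$ is again an rDFA for $L$: acceptance in $A$ depends only on the first component, which evolves exactly as in $B$, so $L(A)=L(B)=L$. The crucial observation about the graph of $A$ is that any cycle through a state $(q,i)$ projects to a closed walk through $q$ in $B$ of length $\ell$, and returning to the same counter value forces $\ell \equiv 0 \pmod g$; conversely every closed walk through $q$ whose length is a multiple of $g$ lifts to a cycle through $(q,i)$. In particular $(q,i)$ lies on a cycle in $A$ if and only if $q$ lies on a cycle in $B$, so every non-transient SCC of $A$ lies above (under the projection to $Q$) some non-transient SCC of $B$.

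The heart of the proof is the period computation. Since the length of \emph{every} cycle in $A$ is a multiple of $g$, the period of each non-transient SCC of $A$ is a multiple of $g$. For the converse divisibility, fix a non-transient SCC of $A$, choose a state $(q,i)$ in it, and let $C$ be the non-transient SCC of $B$ containing $q$, of period $d:=g(C)$; note $d \mid g$. Applying Lemma~\ref{lemma-alon} to $C$ with $\shift(q,q)=0$ yields a closed walk through $q$ of every sufficiently large length $r$ with $r \equiv 0 \pmod d$. Among these, the lengths divisible by $g$ include $Ng$ and $(N+1)g$ for all large $N$ (using $d \mid g$), and these lift to cycles through $(q,i)$ in the same SCC of $A$. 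Hence the period of this SCC divides $\gcd(Ng,(N+1)g)=g$, and combined with the previous divisibility it equals $g$. Thus every non-transient SCC of $A$ has period $g$, as required.

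I expect the main obstacle to be precisely this period computation, and in particular the careful application of Lemma~\ref{lemma-alon} to manufacture closed walks of two consecutive multiples of $g$, which is what pins the period down to exactly $g$ rather than to some larger multiple. The SCC bookkeeping of the second paragraph---matching cycles in $A$ with closed walks in $B$ of $g$-divisible length---is routine but should be stated with care, so that no product SCC can acquire a shorter cycle that would lower its period below $g$.
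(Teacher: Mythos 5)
Your proof is correct and follows essentially the same route as the paper's: form the product of $B$ with a cyclic counter modulo a common multiple $g$ of the non-transient SCC periods, observe that every cycle in the product has length divisible by $g$, and use Lemma~\ref{lemma-alon} to manufacture cycles of two lengths with greatest common divisor $g$. The only (immaterial) differences are that the paper resets the counter on transitions leaving an SCC, so that the non-transient SCCs of the product are exactly the sets $C\times\mathbb{Z}_g$, and it takes two prime multiples of $g$ rather than consecutive ones; your always-incrementing counter yields finer SCCs, but, as you correctly note, the period computation is unaffected.
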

\begin{proof}
Let $B = (Q,\Sigma,F,\delta,q_0)$ be an rDFA for $L$.
Let $g$ be the product of all periods $g(C)$ over all non-transient SCCs $C$.
As usual, we consider $\mathbb{Z}_g = \{0,\ldots,g-1\}$ with arithmetic operations modulo $g$.
Then $A = B \times \mathbb{Z}_g = (Q \times \mathbb{Z}_g, \Sigma, F \times \mathbb{Z}_g, \delta', (q_0,0))$,
where for all $(p,i) \in Q \times \mathbb{Z}_g$ and $a \in \Sigma$ we set
\[
	\delta'(a,(p,i)) = \begin{cases}
	(\delta(a,p),i+1), & \text{if } p \text{ and } \delta(a,p) \text{ are strongly connected}, \\
	(\delta(a,p),0), & \text{otherwise.}
	\end{cases}
\]
Clearly, $A$ is equivalent to $B$.
We show that every non-transient SCC of $A$ has period $g$.
The non-transient  SCCs of $A$ are the sets $C \times  \mathbb{Z}_g$,
where $C$ is a non-transient  SCC of $B$.
Let $C$ be a non-transient  SCC of $B$.
Clearly, every cycle length in $C \times \mathbb{Z}_g$ is a multiple of $g$.
Moreover, by Lemma~\ref{lemma-alon} the SCC $C$ contains a cycle of length $k \cdot g(C)$
for every sufficiently large $k \in \N$ ($k \geq m(C)$ suffices).
Since $g$ is a multiple of $g(C)$, $C$ also contains a cycle of length 
$k \cdot g$ for every sufficiently large $k$. But every such cycle induces
a cycle of the same length $k \cdot g$ in $C \times \mathbb{Z}_g$.
Hence, there exist primes $p_1 \neq p_2$ such that $p_1$ and $p_2$ are not divisors of $g$ and 
$C \times \mathbb{Z}_g$ contains cycles of length $p_1 \cdot g$ and $p_2 \cdot g$.
It follows that the period of  $C \times \mathbb{Z}_g$ divides $\gcd(p_1 \cdot g, p_2 \cdot g) = g$.
This proves that the period of $C \times \mathbb{Z}_g$ is exactly~$g$.
\end{proof}

\section{Upper bounds}
\label{sec-rand}
In this section we provide proofs of Theorems~\ref{theorem:deterministic_ub},~\ref{theorem:two-sided}, and~\ref{theorem:one-sided_ub} that give upper bounds for deterministic and (one-sided and two-sided error)  randomized sliding window testers. All algorithms in this section satisfy the stronger property that words with large prefix distance are rejected by the algorithm with high probability (probability one in the deterministic setting). The {\em prefix distance} between words $u = a_1 \cdots a_n$ and $v = b_1 \cdots b_n$ is 
$\pdist(u,v) = \min\{ i \in \{0, \dots, n\} : a_{i+1} \cdots a_n = b_{i+1} \cdots b_n \}$.
Clearly, we have $\dist(u,v) \le \pdist(u,v)$. We extend the definition to languages: for a language $L$, let $\pdist(u,L) = \min \{ \pdist(u,v) : v \in L \}$. The prefix distance between two runs $\pi = (q_0, a_1, \dots, q_{n-1},a_n, q_n)$ and $\rho = (p_0, b_1, \dots, p_{n-1},b_n, p_n)$ is defined as $\pdist(\pi,\rho) = \min\{ i \in \{0, \dots, n\} : (q_{i},a_{i+1}, \dots, q_{n-1}, a_n,q_n) = (p_{i},b_{i+1}, \dots, p_{n-1},b_n,p_n) \}$.

\subparagraph*{Path summaries.} \label{sec-ps}
We start by recalling the notion of a path summary from~\cite{GHKLM18}, where 
it was used in order to prove a logspace upper bound for regular left-ideals 
(in the exact setting where the Hamming gap is zero). For the rest of Section~\ref{sec-rand} we fix a regular language $L \subseteq \Sigma^*$
and an rDFA $B = (Q,\Sigma,F,\delta,q_0)$ which recognizes $L$. By Lemma~\ref{lem:uni-per}, we can assume that every non-transient SCC $C$ of $B$ has period $g(C) = g$.
Consider a run $\pi = (p_n,a_n,\dots,a_1,p_0)$ on $x = a_n \cdots a_1$. If all states $p_n, \dots, p_0$ are contained in a single SCC we call $\pi$ {\em internal}. We can decompose $\pi = \pi_m \tau_{m-1} \pi_{m-1} \cdots \tau_1 \pi_1$, where each $\pi_i$ is a possibly empty internal run and each $\tau_i$ is a single transition connecting two distinct SCCs. We call this unique factorization the {\em SCC-factorization} of $\pi$, which is illustrated in Figure~\ref{fig:scc-fact}.
The {\em path summary} of $\pi$ is
\[
	\ps(\pi) = ( |\pi_m|,q_m ) (|\tau_{m-1} \pi_{m-1}|,q_{m-1} ) \cdots (|\tau_2 \pi_2|,q_2 ) ( |\tau_1 \pi_1|,q_1 ),
\]
where $q_i$ is the first state in $\pi_i$ ($1 \leq i \leq m$). 
Note that $m$ is bounded by the number of states of $B$, which is a constant in our setting. Hence, a path summary can be stored with $\O(\log |\pi|)$ bits.

\begin{figure}[t]
\centering
\begin{tikzpicture}
  \tikzset{every edge/.style={draw,->,>=stealth'}}
  \tikzstyle{small} = [circle,draw=black,fill=black,inner sep=.5mm]
  
  \node[small,label=below:$q_1$] (q1) {};
  
  \node[small, left = 4em of q1] (p1) {};
  \node[small, label=below:$q_2$,left = 1em of p1] (q2) {};
  \node[small, left = 4em of q2] (p2) {};
  \node[small, label=below:$q_3$,left = 1em of p2] (q3) {};

  \node[small, left = 7em of q3] (pm1) {};
  \node[small, label=below:$q_m$,left = 1em of pm1] (qm) {};
  \node[small,left = 4em of qm] (pm) {};
  
  \draw (q1) edge[decorate,decoration={snake,amplitude=1pt}] node[above] {\footnotesize $\pi_1$} (p1);
  \draw (p1) edge node[above] {\footnotesize $\tau_1$} (q2);
  \draw (q2) edge[decorate,decoration={snake,amplitude=1pt}] node[above] {\footnotesize $\pi_2$} (p2);
  \draw (p2) edge node[above] {\footnotesize $\tau_2$} (q3);

  \draw (pm1) edge node[above] {\footnotesize $\tau_{m-1}$} (qm);
  \draw (qm) edge[decorate,decoration={snake,amplitude=1pt}] node[above] {\footnotesize $\pi_m$} (pm);
  
  \draw (q3) edge[-,dotted] (pm1);
\end{tikzpicture}

\caption{The SCC-factorization of a run}\label{fig:scc-fact}
\end{figure}
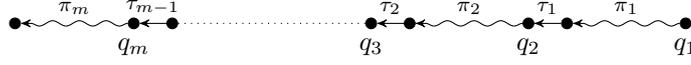

\subparagraph*{Periodic acceptance sets.} \label{sec-periodic}

For $a \in \N$ and $X \subseteq \N$ we use the standard notation
$X + a = \{a+x : x \in X\}$.
For a state $q \in Q$ we define 
\[ 
\Acc(q) = \{ n \in \N : \exists w \in \Sigma^n : \delta(w,q) \in F \}.
\]
A set $X \subseteq \N$ is {\em eventually $d$-periodic},
where $d \ge 1$ is an integer,
if there exists a {\em threshold} $t \in \N$ such that for all $x \ge t$
we have $x \in X$ if and only if $x + d \in X$.
If $X$ is eventually $d$-periodic for some $d \ge 1$, then $X$ is {\em eventually periodic}.

\begin{restatable}{lemma-restatable}{eventuallyperiodic}
\label{lem:acc-per}
For every $q \in Q$ the set $\Acc(q)$ is eventually $g$-periodic.
\end{restatable}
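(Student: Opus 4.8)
The plan is to reduce the statement to a purely graph-theoretic fact about the directed graph $(Q,E)$ associated with $B$ and then apply Lemma~\ref{lemma-alon} inside each SCC. First I would observe that $\Acc(q)$ is exactly the set of lengths of directed paths from $q$ to the final set $F$ in $(Q,E)$: every word $w$ of length $n$ with $\delta(w,q) \in F$ traces a path of length $n$ from $q$ to a state of $F$, and conversely every such path can be labelled by letters (each edge $p \to \delta(a,p)$ carries at least one symbol $a$) to produce a witnessing word. Thus it suffices to show that the set of path lengths from $q$ to $F$ is eventually $g$-periodic.

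Next I would decompose an arbitrary path from $q$ to $f \in F$ along its SCC-factorization, exactly as for path summaries: the path visits a strictly increasing (in the SCC-ordering) sequence of SCCs $C_0, C_1, \ldots, C_k$ with $q \in C_0$ and $f \in C_k$, connected by $k$ single transitions, and inside each $C_i$ it performs an internal walk from an entry state $u_i$ to an exit state $v_i$ (with $u_0 = q$ and $v_k = f$). There are only finitely many such SCC-sequences and finitely many choices of entry/exit states, so $\Acc(q)$ is a finite union of sets, one per fixed sequence and fixed choice of $u_i, v_i$. For a transient $C_i$ the internal walk has length $0$. For a non-transient $C_i$, Lemma~\ref{lemma-alon} (applied with the uniform period $g$ granted by Lemma~\ref{lem:uni-per}) guarantees that every internal walk from $u_i$ to $v_i$ has length $\equiv \shift(u_i,v_i) \pmod g$, and that all sufficiently large lengths in this residue class are realised. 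Hence the set of internal lengths in $C_i$ is contained in one residue class mod $g$ and contains all of its large enough elements.

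Finally I would assemble the length of the whole path as $k + \sum_{i=0}^{k} \ell_i$, where $\ell_i$ ranges over the internal-length set of $C_i$. The sumset of finitely many sets, each contained in a fixed residue class mod $g$ and containing all sufficiently large elements of it, is again contained in the residue class $\sum_i \shift(u_i,v_i) \pmod g$ and contains all sufficiently large elements of it: one fixes every summand but one at its smallest admissible value and lets the remaining summand absorb the slack. Consequently each set in the finite union coincides, above some threshold, with a single residue class mod $g$; taking the maximum of the finitely many thresholds shows that $\Acc(q)$ agrees above that threshold with a union of residue classes mod $g$, i.e.\ is eventually $g$-periodic. The one point that needs care is precisely this combination step: it closes cleanly only because Lemma~\ref{lem:uni-per} has already normalised all non-transient SCCs to the common period $g$, so the internal segments do not combine into an $\mathrm{lcm}$ of periods exceeding $g$; the word-to-path correspondence and the sumset claim are the remaining routine verifications.
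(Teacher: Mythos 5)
Your proof is correct, but it takes a different route from the paper's. The paper argues per residue class: it fixes $0 \le r \le g-1$, assumes $S_r = \{ i : r + ig \in \Acc(q)\}$ is infinite, picks a single accepting run of length $r + ig \ge |Q|$, notes that such a run must visit a non-transient SCC, and pumps cycles of length $j\cdot g$ (available for all large $j$ by Lemma~\ref{lemma-alon}) to conclude that $S_r$ is co-finite. You instead prove a full structure theorem: $\Acc(q)$ is a finite union, over SCC-chains and entry/exit states, of sumsets of internal-length sets, each eventually a full residue class mod $g$. Both arguments hinge on the same two ingredients (the uniform period from Lemma~\ref{lem:uni-per} and the realizability of all large cycle lengths from Lemma~\ref{lemma-alon}); the paper's version is shorter because it only needs one witness run per residue class, while yours yields the stronger explicit description of $\Acc(q)$ as an eventual union of arithmetic progressions --- essentially what the paper later re-derives for the partial automata in Lemma~\ref{lemma-si-ri}. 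One small point to tidy up: your combination step (``fix all summands but one and let the remaining one absorb the slack'') presupposes a non-transient SCC in the chain; for an all-transient chain the corresponding set is a finite singleton, which is harmless for eventual $g$-periodicity but is not ``eventually a single residue class'' as your closing sentence asserts.
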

\begin{proof}
		It suffices to show that for all $0 \le r \le g-1$
        the set $S_r = \{ i \in \N : r + i \cdot g \in \Acc(q) \}$
        is either finite or co-finite.
        Consider a remainder $0 \le r \le g-1$ where $S_r$ is infinite.
        We need to show that $S_r$ is indeed co-finite.
        Let $i \in S_r$ with $i \ge |Q|$,
        i.e. there exists an accepting run $\pi$ from $q$ of length $r + i \cdot g$.
        Since $\pi$ has length at least $|Q|$ it must traverse a state $q$ in a non-transient SCC $C$.
        Choose $j_0$ such that $j_0 \cdot g \ge m(C)$
        where $m(C)$ is the reachability constant from Lemma~\ref{lemma-alon}.
        By Lemma~\ref{lemma-alon} for all $j \ge j_0$
        there exists a cycle from $q$ to $q$ of length $j \cdot g$.
        Therefore we can prolong $\pi$ to a longer accepting run by $j \cdot g$ symbols
        for any $j \ge j_0$.
        This proves that $x \in S_r$ for every $x \geq i + j_0$
        and that $S_r$ is co-finite.
\end{proof}

\begin{lem}
	\label{lem:ae}
	A set $X \subseteq \N$ is eventually $d$-periodic iff $X$ and $X+d$ are almost equal.
\end{lem}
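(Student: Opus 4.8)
The plan is to prove the two implications separately, using the standard reading of ``almost equal'': two subsets of $\N$ are almost equal when their symmetric difference is finite. So the goal is to show that $X$ is eventually $d$-periodic if and only if $X \triangle (X+d)$ is finite.

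First I would treat the forward direction. Assume $X$ is eventually $d$-periodic with threshold $t$, meaning $x \in X \iff x+d \in X$ for all $x \ge t$. I claim every element of $X \triangle (X+d)$ lies below $t+d$, which immediately gives finiteness. Indeed, take any $y \ge t+d$. Then $y-d \ge t$, so applying the periodicity condition at $x = y-d$ yields $y-d \in X \iff y \in X$. Since $y \ge d$, membership $y \in X+d$ is by definition equivalent to $y-d \in X$, and therefore $y \in X \iff y \in X+d$, i.e.\ $y \notin X \triangle (X+d)$. Hence $X \triangle (X+d) \subseteq \{0, \dots, t+d-1\}$ is finite.

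For the converse, assume $X \triangle (X+d)$ is finite and choose a threshold $t \ge d$ large enough that no element $\ge t$ lies in the symmetric difference, so that $y \in X \iff y \in X+d$ for all $y \ge t$. To verify eventual periodicity, fix any $x \ge t$ and set $y = x+d$, noting $y \ge t+d \ge t$. Then $y \in X \iff y \in X+d \iff y-d \in X \iff x \in X$, where the penultimate step uses $y \ge d$ and the last uses $y-d = x$. Since $y \in X$ means $x+d \in X$, we conclude $x \in X \iff x+d \in X$ for all $x \ge t$, which is exactly eventual $d$-periodicity with threshold $t$.

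The whole argument is elementary and symmetric in shape, so I do not expect a genuine obstacle. The only place requiring care is the boundary bookkeeping: ensuring that $y-d$ is a bona fide natural number before identifying $y \in X+d$ with $y-d \in X$, and choosing the threshold in the converse large enough (at least $d$, and shifted by $d$ when passing from $x$ to $y=x+d$) so that all invoked memberships fall in the periodic range. Keeping these index ranges consistent is the main thing to watch.
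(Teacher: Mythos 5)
Your proof is correct and follows essentially the same route as the paper's: in the forward direction the paper likewise shows $X$ and $X+d$ agree above threshold $t+d$, and in the converse it applies the agreement at $x+d$ and uses $x+d\in X+d \iff x\in X$, exactly as you do. Your phrasing via finite symmetric difference is equivalent to the paper's ``equal up to a threshold'' definition for subsets of $\N$, and your extra boundary bookkeeping (requiring $t\ge d$) is harmless though not strictly needed.
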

\begin{proof}
Let $t \in \N$ be such that for all $x \ge t$ we have $x \in X$ if and only if $x + d \in X$. Then $X$ and $X + d$ are equal up to threshold $t+d$. Conversely, if $X =_t X+d$, then for all $x \ge t$ we have $x + d \in X$ if and only if $x + d \in X+d$, which is true if and only if $x \in X$.
\end{proof}

Two sets $X,Y \subseteq \N$ are {\em equal up to a threshold $t \in \N$}, in symbol $X =_t Y$, if for all $x \geq t$:
$x \in X$ iff $x \in Y$. Two sets $X,Y \subseteq \N$ are {\em almost equal} if they are equal up to some threshold $t \in \N$. 

\begin{restatable}{lemma-restatable}{almostequal}
\label{lem:con-run}
Let $C$ be a non-transient SCC in $B$, $p,q \in C$ and $s = \shift(p,q)$.
Then $\Acc(p)$ and $\Acc(q) + s$ are almost equal.
\end{restatable}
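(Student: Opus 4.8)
The plan is to connect $\Acc(p)$ and $\Acc(q)$ by gluing accepting runs to short connecting runs inside the SCC $C$, using the structural information provided by Lemma~\ref{lemma-alon}, and then to correct the inevitable length mismatch with the eventual $g$-periodicity of $\Acc(\cdot)$ supplied by Lemma~\ref{lem:acc-per}.

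First I would record the basic gluing principle. If $v \in \Sigma^*$ is a word with $\delta(v,p) = q$ (equivalently, a path from $p$ to $q$ of length $|v|$ in the graph associated with $B$), and $w \in \Sigma^*$ satisfies $\delta(w,q) \in F$, then $\delta(wv,p) = \delta(w,\delta(v,p)) = \delta(w,q) \in F$, so $|w| \in \Acc(q)$ yields $|w| + |v| \in \Acc(p)$; symmetrically, a path from $q$ to $p$ transports membership in the other direction. Applying Lemma~\ref{lemma-alon} to the subgraph induced by $C$ (which is strongly connected, has nonempty edge set since $C$ is non-transient, and has period $g$), I obtain for every sufficiently large $\ell \equiv s \pmod g$ a path from $p$ to $q$ of length $\ell$, and, since $\shift(q,p) \equiv -s \pmod g$, for every sufficiently large $\ell' \equiv -s \pmod g$ a path from $q$ to $p$ of length $\ell'$. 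I fix two such lengths $\ell$ and $\ell'$ with associated words, and write $\ell = s + gm$ and $\ell' + s = g m'$ with $m, m' \ge 0$.

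The difficulty is that the connecting runs guaranteed by Lemma~\ref{lemma-alon} have length at least $m(C)$, so they overshoot the desired offset $s$ by a multiple of $g$; gluing them alone would only prove $\Acc(q) + \ell \subseteq \Acc(p)$ rather than $\Acc(q) + s \subseteq \Acc(p)$. This is exactly where eventual $g$-periodicity enters. For the inclusion $\Acc(q)+s \subseteq \Acc(p)$ up to a threshold, I take a large $x$ with $x - s \in \Acc(q)$; by Lemma~\ref{lem:acc-per} I may subtract $m$ periods to get $x - s - gm = x - \ell \in \Acc(q)$, and gluing the path of length $\ell$ from $p$ to $q$ produces an accepting run from $p$ of length $(x-\ell) + \ell = x$, so $x \in \Acc(p)$. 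For the reverse inclusion I take a large $x \in \Acc(p)$, glue the path of length $\ell'$ from $q$ to $p$ to obtain $x + \ell' \in \Acc(q)$, write $x + \ell' = (x-s) + gm'$, and again invoke Lemma~\ref{lem:acc-per} to subtract $m'$ periods, yielding $x - s \in \Acc(q)$, that is, $x \in \Acc(q)+s$. Choosing the threshold above all the periodicity thresholds of $\Acc(p)$ and $\Acc(q)$ and above the quantities $\ell, \ell', gm, gm'$ makes both inclusions valid for all large $x$, so $\Acc(p)$ and $\Acc(q)+s$ are equal up to a threshold, hence almost equal.
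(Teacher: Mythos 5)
Your proposal is correct and follows essentially the same route as the paper: connect $p$ and $q$ by runs inside $C$ whose lengths are controlled modulo $g$ via Lemma~\ref{lemma-alon}, glue them to accepting runs to transport membership between $\Acc(p)$ and $\Acc(q)$, and absorb the overshoot (a multiple of $g$) using the eventual $g$-periodicity from Lemma~\ref{lem:acc-per}. The only difference is presentational — the paper phrases the argument as a chain of set inclusions combined with transitivity of almost-equality, while you argue element-wise for sufficiently large $x$ — so there is nothing to fix.
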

\begin{proof}
	Let $k \in \N$ such that $k \cdot g \ge m(C)$ where $m(C)$ is the large enough constant from Lemma~\ref{lemma-alon}.
	By Lemma~\ref{lemma-alon} there exists a run from $p$ to $q$ of length $s + k \cdot g$,
	and a run from $q$ to $p$ of length $(k+1) \cdot g - s$ (the latter number is congruent 
	to $\shift(q,p)$ modulo $g$).
	By prolonging accepting runs we obtain
	\[
		\Acc(q) + s + k \cdot g \subseteq \Acc(p) \mbox{ and } \Acc(p) + (k+1) \cdot g - s \subseteq \Acc(q).
	\]
	Adding $s + k \cdot g$ to both sides of the last inclusion yields
	\[
		\Acc(p) + (2k+1) \cdot g \subseteq \Acc(q) + s + k \cdot g \subseteq \Acc(p).
	\]
	By Lemmas~\ref{lem:acc-per} and~\ref{lem:ae} the three sets above are almost equal.
	Also $\Acc(q) + s + k \cdot g$ is almost equal to $\Acc(q) + s$
	by Lemmas~\ref{lem:acc-per} and~\ref{lem:ae}. Since almost equality is a transitive relation,
	this proves the statement.
\end{proof}

\begin{cor}
\label{cor:gt}
There exists a threshold $t \in \N$ such that
\begin{enumerate}
	\item $\Acc(q) =_t \Acc(q) + g$ for all $q \in Q$, and
	\item $\Acc(p) =_t \Acc(q) + \shift(p,q)$ for all non-transient SCCs $C$ and $p,q \in C$.
\end{enumerate}
\end{cor}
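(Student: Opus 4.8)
The plan is to observe that both claims are merely ``uniform'' (single-threshold) restatements of almost-equality facts that have already been established, and that uniformity is free because $Q$ is finite. The one structural remark I would record first is monotonicity: the relation $X =_t Y$ only constrains elements $x \ge t$, so if $X =_{t'} Y$ and $t \ge t'$ then also $X =_t Y$. Consequently, once each individual instance is known to hold beyond \emph{some} threshold, I may simply take the maximum of the finitely many thresholds involved and obtain a common one.

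For the first claim, I would fix $q \in Q$ and invoke Lemma~\ref{lem:acc-per} to conclude that $\Acc(q)$ is eventually $g$-periodic. By Lemma~\ref{lem:ae}, this is equivalent to saying that $\Acc(q)$ and $\Acc(q) + g$ are almost equal, i.e.\ equal up to some threshold $t_q$. Since $Q$ is finite, setting $t_1 = \max_{q \in Q} t_q$ yields $\Acc(q) =_{t_1} \Acc(q) + g$ simultaneously for all $q \in Q$, using the monotonicity remark above.

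For the second claim, I would apply Lemma~\ref{lem:con-run} directly: for every non-transient SCC $C$ and all $p,q \in C$, the sets $\Acc(p)$ and $\Acc(q) + \shift(p,q)$ are almost equal, hence equal up to some threshold $t_{p,q}$. There are only finitely many non-transient SCCs and finitely many pairs $(p,q)$ within each of them, so $t_2 = \max t_{p,q}$ over this finite index set gives $\Acc(p) =_{t_2} \Acc(q) + \shift(p,q)$ for all such $C$ and $p,q$ at once.

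Finally I would set $t = \max(t_1,t_2)$; by the monotonicity of $=_t$ in $t$, both statements hold with this single common threshold, which is exactly the assertion. I expect the proof to be entirely routine: the only thing that must be checked is the finiteness of the two index sets, and this is immediate because $Q$ — and therefore the collection of its SCCs and of pairs of states within a common SCC — is finite. There is thus no genuine obstacle here beyond the bookkeeping of passing from per-instance thresholds to a uniform one.
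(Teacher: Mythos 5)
Your proposal is correct and matches the paper's intent exactly: the corollary is stated without proof precisely because it follows from Lemma~\ref{lem:acc-per}, Lemma~\ref{lem:ae}, and Lemma~\ref{lem:con-run} by taking the maximum of the finitely many per-instance thresholds, which is what you do. No gaps.
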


We fix the threshold $t$ from Corollary~\ref{cor:gt} for the rest of Section~\ref{sec-rand}.
The following lemma is the main tool to prove the correctness of our sliding window testers.
It states that if a word of length $n$ is accepted from $p$ and $\rho$ is any internal run from $p$ of length at most $n$, then, up to a bounded length prefix, $\rho$ can be extended to an accepting run of length $n$.
Formally, a run~$\pi$ {\em $k$-simulates} a run $\rho$
if one can factorize $\rho = \rho_1 \rho_2$ and $\pi = \pi' \rho_2$ where $|\rho_1| \le k$.

\begin{restatable}{lemma-restatable}{runsimulation}
\label{lem:sim}
If $\rho$ is an internal run starting from $p$ of length at most $n$ and $n \in \Acc(p)$,
then there exists an accepting run $\pi$ from $p$ of length $n$
which $t$-simulates $\rho$.
\end{restatable}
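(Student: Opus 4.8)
The plan is to locate a single cut point $q_j$ on $\rho$ that lies within the last $t$ transitions, from which an accepting run of the correct residual length $n-j$ exists; gluing this accepting run onto the initial segment of $\rho$ up to $q_j$ then yields the desired $\pi$. Write $\rho \colon q_\ell \leftarrow \cdots \leftarrow q_1 \leftarrow q_0$ with $q_0 = p$ and $\ell = |\rho| \le n$, and for $0 \le j \le \ell$ let $\rho_2^{(j)}$ be the initial segment from $q_0$ to $q_j$ (of length $j$) and $\rho_1^{(j)}$ the remaining segment from $q_j$ to $q_\ell$, so that $\rho = \rho_1^{(j)}\rho_2^{(j)}$. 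I will choose
\[
	j = \max\bigl(0, \min(\ell, n-t)\bigr),
\]
where $t$ is the threshold fixed in Corollary~\ref{cor:gt}. A short calculation using $\ell \le n$ and $t \ge 0$ shows $0 \le j \le \ell$ and $\ell - j \le t$, so $|\rho_1^{(j)}| \le t$, exactly as required by the definition of $t$-simulation. It therefore remains to produce an accepting run from $q_j$ of length $n-j$, i.e.\ to establish $n-j \in \Acc(q_j)$; then, taking any such accepting run $\pi'$ and setting $\pi = \pi'\rho_2^{(j)}$ gives an accepting run from $p$ of length $(n-j)+j = n$ that $t$-simulates $\rho$.

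The core step is thus the membership $n-j \in \Acc(q_j)$, and I would argue it as follows. If $n < t$, then $n-t < 0$ forces $j = 0$, and $n-j = n \in \Acc(p) = \Acc(q_0)$ holds by hypothesis; likewise if $\ell = 0$ then $j = 0$ and the hypothesis again suffices. So assume $n \ge t$ and $\ell \ge 1$, whence $j = \min(\ell, n-t)$, $n-j \ge t$, and $\rho$ is a nonempty internal run, so that its SCC $C$ is non-transient and $q_0, q_j \in C$.

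Set $s = \shift(q_0,q_j) \in \{0,\dots,g-1\}$. Since the segment of $\rho$ from $q_0$ to $q_j$ is a path of length $j$, Lemma~\ref{lemma-alon} gives $j \equiv s \pmod g$. From $n \in \Acc(q_0)$, the relation $\Acc(q_0) =_t \Acc(q_j)+s$ of Corollary~\ref{cor:gt}(2), and $n \ge t$, I obtain $n \in \Acc(q_j)+s$, that is, $n-s \in \Acc(q_j)$. Now $j$ and $s$ are congruent modulo $g$ with $s \in \{0,\dots,g-1\}$ and $j \ge 0$, which forces $j \ge s$ (they coincide when $j \le g-1$, and $j > s$ when $j \ge g$); hence $n-s \ge n-j \ge t$. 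The two numbers $n-s$ and $n-j$ are therefore both at least $t$ and congruent modulo $g$, so the eventual $g$-periodicity of $\Acc(q_j)$ above threshold $t$ from Corollary~\ref{cor:gt}(1) transfers membership from $n-s$ to $n-j$, giving $n-j \in \Acc(q_j)$, as desired.

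I expect the only delicate part to be the threshold bookkeeping: one must guarantee that every quantity to which the ``up to threshold $t$'' statements of Corollary~\ref{cor:gt} are applied is genuinely $\ge t$. This is precisely why the cut point is taken as $j = \min(\ell, n-t)$, which forces $n-j \ge t$, and why the auxiliary observation $j \ge s$, which forces $n-s \ge n-j \ge t$, is needed; the two degenerate regimes $n < t$ and $\ell = 0$, where the periodicity machinery is unavailable, are absorbed into the direct use of the hypothesis $n \in \Acc(p)$.
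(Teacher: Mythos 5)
Your proof is correct and follows essentially the same route as the paper's: factor $\rho$ at a state within $t$ transitions of its end, use the shift congruence from Lemma~\ref{lemma-alon} together with both parts of Corollary~\ref{cor:gt} to show the residual length lies in the acceptance set of the cut state, and prepend the resulting accepting run. The only differences are cosmetic (you cut at $\min(\ell,n-t)$ and shift forward from $q_0$ to $q_j$, whereas the paper cuts at $|\rho|-t$ and shifts backward), and your threshold bookkeeping is sound.
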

\begin{proof}
If $|\rho| \le t$, then we choose any accepting run $\pi$ from $p$ of length $n \in \Acc(p)$.
Otherwise, if $|\rho| > t$, then the SCC $C$ containing $p$ is non-transient
and we can factor $\rho = \rho_1\rho_2$ such that $|\rho_1| = t$
where $\rho_2$ leads from $p$ to $q$.
Set $s := \shift(q,p)$, which satisfies $s + |\rho_2| \equiv 0 \pmod g$
by the properties in Lemma~\ref{lemma-alon}.
Since $\Acc(q) =_t \Acc(p) + s$ by Corollary~\ref{cor:gt}, $n > t$ and $n \in \Acc(p)$,
we have $n + s \in \Acc(q)$.
Finally since $n + s \equiv n-|\rho_2| \pmod g$ and $n-|\rho_2| = n-|\rho|+t \ge t$
we know $n-|\rho_2| \in \Acc(q)$.
This yields an accepting run $\pi'$ from $q$ of length $n-|\rho_2|$.
Then $\rho$ is $t$-simulated by $\pi = \pi' \rho_2$.
\end{proof}

\subsection{Deterministic logspace tester}\label{sec:deterministic_ub}

\begin{proof}[Proof of Theorem~\ref{theorem:deterministic_ub}]
Let $n \in \N$ such that $n \ge |Q|$ (for $n < |Q|$ we use a trivial streaming algorithm which stores the window explicitly).
The algorithm maintains the set $\{ \ps(\pi_{w,q}) \mid q \in Q \}$ where $w \in \Sigma^n$ is the active window.
Initially this set is $\{ \ps(\pi_{w,q}) \mid q \in Q \}$ for $w = \Box^n$.
Now suppose $w = av$ for some $a \in \Sigma$ and the next symbol of the stream is $b \in \Sigma$, i.e. the new active window is $vb$.
For each transition $q \xleftarrow{b} p$ in $B$
we can compute $\ps(\pi_{vb,p})$ from $\ps(\pi_{av,q})$ as follows.
Suppose that $\ps(\pi_{av,q}) = (\ell_m,q_m) \cdots (\ell_1,q_1)$ where $q = q_1$.
\begin{itemize}
	\item If $p$ and $q$ belong to the same SCC, then we increment $\ell_1$ by one,
	else we append a new pair $(1,p)$.
	\item If $\ell_m > 0$ we decrement $\ell_m$ by one. If $\ell_m=0$ we remove the pair $(\ell_m,q_m)$
	and we decrement $\ell_{m-1}$ by one (in this case we must have $m > 1$ and $\ell_{m-1} > 0$).
\end{itemize}
The obtained path summary is $\ps(\pi_{vb,p})$.
This data structure can be stored with $\O(\log n)$
bits since it contains $|Q|$ path summaries, each of which can be stored in $\O(\log n)$ bits.

It remains to define a proper acceptance condition.
Consider the run $\pi = \pi_{w,q_0}$ such that $\pi_m \tau_{m-1} \pi_{m-1} \cdots \tau_1 \pi_1$ is the corresponding SCC-factorization
and $(\ell_m,q_m) \cdots (\ell_1,q_1)$ is the corresponding path summary.
The algorithm accepts if and only if $\ell_m = |\pi_m| \in \Acc(q_m)$.
If $w \in L$, then clearly $|\pi_m| \in \Acc(q_m)$.
If $|\pi_m| \in \Acc(q_m)$,
then the internal run $\pi_m$ can be $t$-simulated by an accepting run $\pi_m'$ of equal length by Lemma~\ref{lem:sim}.
The run $\pi_m' \tau_{m-1} \pi_{m-1} \cdots \tau_1 \pi_1$ is accepting
and witnesses that $\pdist(w,L) \le t$.
\end{proof}

\subsection{Randomized constant-space tester with two-sided error}
\label{sec:two-sided}
Let us first define a probabilistic counter. Consider a probabilistic data structure $Z$ representing a counter. 
Its operations are incrementing the counter (using random coins)
and querying whether the state of the counter is {\em low} or {\em high}.
Initially $Z$ is in a low state.
The random state reached after $k$ increments is denoted by $Z(k)$.
Given numbers $0 \le \ell < h$ (they will depend on our window size $n$) we say that $Z$ is an
{\em $(h,\ell)$-counter with error probability $\delta < \frac{1}{2}$}
if for all $k \in \N$ we have:
\begin{itemize}
	\item If $k \le \ell$, then $\Prob[Z(k) \text{ is high}] \le \delta$.
	\item If $k \ge h$, then $\Prob[Z(k) \text{ is low}] \le \delta$.
\end{itemize}
\begin{restatable}{lemma-restatable}{probabilisticcounter}
\label{lem:bernoulli}
For all $h,\ell,\eps > 0$ with $\ell \le (1-\eps) h + \O(1)$ there exists an $(h,\ell)$-counter $Z$ with error probability $1 / 3|Q|$ which internally stores $\O(\log(1/\eps))$ bits.
\end{restatable}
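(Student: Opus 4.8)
The goal is to build a randomized data structure that after $k$ increments reports "high" with high probability when $k \ge h$ and "low" with high probability when $k \le \ell$, where the gap satisfies $\ell \le (1-\eps)h + \O(1)$, using only $\O(\log(1/\eps))$ bits and achieving error probability $1/(3|Q|)$. Since $|Q|$ is a constant in our setting, the target error is a constant, and the real content is the interplay between the multiplicative gap $\eps$ and the space bound $\O(\log(1/\eps))$.

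\medskip
\noindent\textbf{The approach.}
The plan is to use a \emph{subsampling counter}: each increment is "accepted" with some small probability $p$, and the counter simply records (approximately) how many increments have been accepted. Concretely, I would maintain a bounded integer counter $c$ that can hold values $0, 1, \dots, s$ for a threshold $s = \Theta(1/\eps^2)$ (or $\Theta(1/\eps)$ after a sharper analysis), together with the probability $p$ chosen roughly as $p \approx s / h$ so that the expected value of $c$ after $k$ increments is $pk \approx ks/h$. The counter reports \emph{high} once $c$ reaches a fixed threshold $\theta$ lying strictly between the expected count at $\ell$ and the expected count at $h$, for instance $\theta = \tfrac{1}{2}(p\ell + ph)$. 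Because each accepted increment requires only comparing a random coin against $p$ and then incrementing a counter capped at $s$, the total space is $\O(\log s) = \O(\log(1/\eps))$ bits, as required. The choice of $p$ does depend on $n$ (through $h$), but $p$ itself need not be stored at full precision; it can be realized by an appropriate coin with $\O(\log(1/\eps))$ bits of randomness per step, which does not affect the persistent memory.

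\medskip
\noindent\textbf{Key steps, in order.}
First I would fix the acceptance probability $p$ and the cap $s$ so that the expected accepted-count is a moderate multiple of $1/\eps$ at the relevant thresholds; the point of the condition $\ell \le (1-\eps)h + \O(1)$ is precisely that the expectations $p\ell$ and $ph$ differ by a multiplicative factor bounded away from $1$ (a relative gap of order $\eps$), so there is room to place a threshold $\theta$ between them. Second, I would analyze the two failure events separately. The number of accepted increments is a sum of independent Bernoulli random variables, so I would apply a Chernoff bound in each direction: for $k \le \ell$ the probability that $c \ge \theta$ (false high) is exponentially small in $p\ell \cdot \eps^2 = \Theta(1)$ after the parameter choices, and symmetrically for $k \ge h$ the probability that $c < \theta$ (false low) is exponentially small. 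Third, I would verify that choosing the constants inside $s = \Theta(1/\eps^2)$ (and hence the additive $\O(1)$ slack in the hypothesis) large enough drives both error probabilities below the target $1/(3|Q|)$. Finally, I would confirm the monotonicity-in-$k$ phrasing of the two bullet points: the Chernoff tail estimates are uniform over all $k$ in the respective ranges because the relevant tail is worst at the boundary $k = \ell$ (resp.\ $k = h$).

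\medskip
\noindent\textbf{Main obstacle.}
The delicate point is reconciling the \emph{constant} error probability with the \emph{logarithmic-in-$1/\eps$} space bound. A naive capped counter would need $\Theta(1/\eps^2)$ distinct values, which costs $\O(\log(1/\eps))$ bits — this is exactly on budget, so the analysis must be careful not to demand more resolution than $s = \poly(1/\eps)$ provides. The real subtlety is ensuring that the subsampling probability $p$, which scales like $1/h$ and thus depends on the (large) window size $n$, can be simulated step-by-step without storing $\Theta(\log h)$ persistent bits; the resolution is that $p$ is a fixed parameter of the algorithm $A_n$ (baked into its transition distribution) rather than a stored value, so only the capped counter $c$ contributes to the persistent memory. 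Once this separation between "fixed randomness parameters" and "stored state" is made explicit, the Chernoff computation is routine, and the hypothesis $\ell \le (1-\eps)h + \O(1)$ supplies exactly the multiplicative separation the tail bounds need.
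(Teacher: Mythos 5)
Your construction is correct and achieves the stated bounds, but it is not the construction the paper uses, and the difference is worth spelling out. The paper builds on a single-bit \emph{Bernoulli counter} $Z_p$ (imported from the earlier randomized sliding-window work): one sticky bit that switches from low to high with probability $p$ at each increment, so that after $i$ increments it is high with probability $1-(1-p)^i$. The key step there is that with $p = 1-(1/2-\xi/8)^{1/h}$ this single bit is already an $(h,\ell)$-counter with error probability $1/2-\xi/8$, i.e., bounded away from $1/2$ by $\Theta(\eps)$; proving this requires verifying the inequality $\tfrac{1}{2}+\tfrac{\xi}{8} \le \bigl(\tfrac{1}{2}-\tfrac{\xi}{8}\bigr)^{1-\xi}$ via explicit logarithm estimates. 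The final counter is a majority vote over $m = \Theta(\log(3|Q|)/\eps^2)$ independent copies, storing only the sum of the $m$ bits in $\O(\log m)=\O(\log(1/\eps))$ bits, with a Chernoff bound doing the amplification. Your scheme instead keeps a single capped binomial counter ($c \mapsto \min(c+1,s)$ with probability $p=s/h$, where $s = \Theta(\log|Q|/\eps^2)$) and a threshold test, analyzed by one two-sided Chernoff bound on $\mathrm{Bin}(k,p)$; your identification of the worst cases at $k=\ell$ and $k=h$ and the observation that the cap is harmless (the threshold lies below $s$) are both correct, as is the point that $p$ is baked into the transition distribution of $A_n$ rather than stored. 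Both routes spend the same space and exploit the same multiplicative separation between $p\ell$ and $ph$, but yours avoids the per-bit inequality and the majority-vote layer, so it is arguably more elementary. Two details to fix in a full write-up: your parenthetical hope that $s=\Theta(1/\eps)$ might suffice does not survive the Chernoff computation (you need $s\eps^2=\Omega(\log|Q|)$, though this does not affect the $\O(\log(1/\eps))$ space bound); and when $h=\O(1/\eps)$ the additive $\O(1)$ slack in $\ell\le(1-\eps)h+\O(1)$ can swallow the multiplicative gap, so that regime should be dispatched separately by storing the count exactly in $\O(\log h)=\O(\log(1/\eps))$ bits --- the paper's own passage from $\eps$ to $\xi$ glosses over the same point.
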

\begin{proof}
Since $\ell \le (1-\eps) h + \O(1)$, we can choose $\xi = \eps - \O(1)$ such that $\ell \le (1-\xi) h$.

We use the following probabilistic data structure from \cite{GanardiHL18}:
A {\em Bernoulli counter} $Z_p$ is parameterized by a probability $0 < p < 1$ and stores a single bit $x$. Initially we set $x = 0$, representing the low state. On every increment the bit $x$ is set to $1$ (representing the high state) with probability $p$, and is unchanged with probability $1-p$. After $i$ increments the bit has value $0$ with probability $(1-p)^i$, and value $1$ with probability $1 - (1-p)^i$.
Let us first show the following claim:

\begin{claim} \label{claim-bernoulli}
 For all $h,\ell,\xi > 0$ with $\xi < 1$ and $\ell \le (1-\xi) h$ there exists $0 < p < 1$ such that $Z_p$ is an $(h,\ell)$-counter with error probability $1/2 - \xi/8$. 
\end{claim}
\begin{claimproof}
We need to choose $p$ such that (i) $1-(1-p)^{(1-\xi) h} \le 1/2-\xi/8$, or equivalently, $1/2+\xi/8 \le (1-p)^{(1-\xi)h}$, and (ii) $(1-p)^h \le 1/2-\xi/8$, or  equivalently, $(1-p)^{(1-\xi)h} \le (1/2-\xi/8)^{1-\xi}$. It suffices to show
\begin{equation}
\label{eq:pow}
\frac{1}{2}+\frac{\xi}{8} \le \left( \frac{1}{2}-\frac{\xi}{8} \right)^{1-\xi},
\end{equation}
then one can pick $p = 1-(1/2-\xi/8)^{1/h}$. Note that (ii) holds automatically for this value of $p$.
Taking logarithms shows that \eqref{eq:pow} is equivalent to $\ln(4+\xi) - \ln 8 \le (1-\xi) \cdot (\ln(4-\xi) - \ln 8)$, and by rearranging we obtain $\ln(4+\xi) \le \ln(4-\xi) + \xi (\ln 8 - \ln(4-\xi))$. Since $\ln 8 - \ln(4-\xi) \ge \ln 8 - \ln 4 = \ln 2$, it suffices to prove
\begin{equation}
	\label{eq:log}
	\ln(4+\xi) \le \ln(4-\xi) + \xi \ln 2.
\end{equation}
One can verify $3\ln 2 \approx 2.0794 \ge 2$. We have:
\begin{align*}
	4+\xi &\le 4 + (3\ln 2 - 1) \xi = 4 + (4\ln 2 - 1) \xi - \xi \ln 2 \le \\
	&\le 4 + (4\ln 2 - 1) \xi - \xi^2 \ln 2 = (4-\xi)(\xi \ln 2 + 1)
\end{align*}
By taking logarithms and plugging in $\ln x \le x-1$ for all $x > 0$, we obtain
\[
	\ln(4+\xi) \le \ln(4-\xi) + \ln(\xi \ln 2 + 1) \le \ln(4-\xi) + \xi \ln 2
\]
This proves \eqref{eq:log} and hence \eqref{eq:pow}, and hence Claim~\ref{claim-bernoulli}.
\end{claimproof}
We now show the main claim of the lemma by probability amplification.
Let $Z$ be the counter which uses $m$ copies of $Z_p$ in parallel with independent random bits and returns the majority vote of the $m$ outputs. Notice that it suffices to store the sum of all bits, which takes $\O(\log m)$ bits of space.
	
Let us now estimate the error probability and choose $m$ suitably. Let $X_1, \dots, X_m$ be independent Bernoulli variables with 
$\Prob[X_i = 1] = 1/2 - \xi/8$. By Claim~\ref{claim-bernoulli}, $\Prob[X_i = 1]$ is an upper bound on the error probability of the $i$-th copy of $Z_p$.
Let $X = \sum_{i=1}^m X_i$. Then $\Prob[X \ge m/2]$ is an upper bound on the error probability of the probabilistic counter $Z$. 
We have $\mu = \mathbf{E}[X] = m(1/2 - \xi/8) =  \frac{m (4-\xi)}{8}$. Choosing $\delta = \frac{\xi}{4-\xi} \ge \frac{\xi}{4}$ we have $(1+\delta)\mu = m/2$ and $\mu \delta^2 = \xi m \delta/8 \ge \xi^2 m / 32$. The Chernoff bound \cite[Theorem~4.4]{MiUp17} states that
\[
\Prob[X \ge m/2] = \Prob\left[X \ge (1+\delta)\mu \right] \le \exp(-\mu \delta^2/3) \le \exp(-\xi^2 m/96).
\]
To enforce $\Prob[X \ge m/2] \le 1/(3|Q|)$ we choose $m = \left\lceil 96 \ln(3|Q|) / \xi^2 \right\rceil$. Hence the algorithm has space complexity $\O(\log m) = \O(\log (1/\xi) ) = \O(\log(1/\eps))$.
\end{proof}

Fix a parameter $0 < \eps < 1$ and a window length $n \in \N$.
Based on the previous concepts, we are now able to describe a randomized sliding window tester for a regular language $L$ with Hamming gap $\epsilon n$ that uses $\O(\log(1/\eps))$ bits.
Let $Z$ be the $(h,\ell)$-counter with error probability $1/(3|Q|)$ from Lemma~\ref{lem:bernoulli}
where $h = n - t$ and $\ell = (1-\eps)n + t + 1$. The counter is used to define so-called compact summaries of runs. 


\begin{defn}
\label{def:rep}
A {\em compact summary} $\text{cs} = (q_m, r_m, c_m) \cdots (q_2, r_2, c_2) (q_1, r_1, c_1)$ is a sequence of triples, where each triple $(q_i, r_i, c_i)$ consists of a state $q_i \in Q$, a remainder $0 \le r_i \le g-1$, and a state $c_i$ of the $(h,\ell)$-counter $Z$. The state $c_1$ is always set to low, and $r_1 = 0$.

A compact summary $(q_m, r_m, c_m) \cdots (q_1, r_1, c_1)$ {\em represents} a run $\pi$ if
the SCC-factorization of $\pi$ has the form $\pi_m \tau_{m-1} \pi_{m-1} \cdots \tau_1 \pi_1$, and the following properties hold:
\begin{enumerate}
\item for all $1 \le i \le m$, $\pi_i$ starts in $q_i$; 
\item for all $2 \le i \le m$, if $|\tau_{i-1} \pi_{i-1} \cdots \tau_1 \pi_1| \le (1-\eps)n + t + 1$, then $c_i$ is a low state; and if $|\tau_{i-1} \pi_{i-1} \cdots \tau_1 \pi_1| \ge n-t$, then $c_i$ is a high state;
\item for all $2 \le i \le m$, $r_i = |\tau_{i-1} \pi_{i-1} \cdots \tau_1 \pi_1| \pmod {g}$.
\end{enumerate}
\end{defn}

The idea of a compact summary is visualized in Figure~\ref{fig:compact}. If $m > |Q|$ then the above compact summary cannot represent a run.
Therefore, we can assume that $m \leq |Q|$.
For every triple $(q_i, r_i, c_i)$, the entries $q_i$ and $r_i$ only depend on the rDFA $B$, and hence can be stored with $\O(1)$ bits. Every state $c_i$ of the probabilistic counter needs $\O(\log(1/\eps))$ bits. Hence, a compact summary can be stored in $\O(\log(1/\eps))$ bits. In contrast to Theorem~\ref{theorem:deterministic_ub}, we maintain a set of compact summaries which represent all runs of $B$ on the {\em complete} stream read so far (not only on the active window) with high probability.

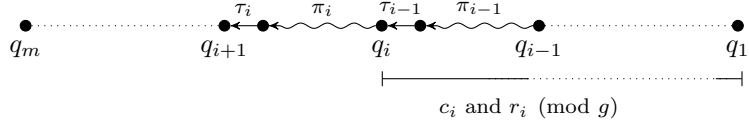
\begin{figure}[t]
\centering
\begin{tikzpicture}
  \tikzset{every edge/.style={draw,->,>=stealth'}}
  \tikzstyle{small} = [circle,draw=black,fill=black,inner sep=.5mm]
  
  \node[small,label=below:$q_{i-1}$] (q1) {};
  \node[small, left = 4em of q1] (p1) {};
  \node[small, label=below:$q_i$,left = 1em of p1] (q2) {};
  \node[small, left = 4em of q2] (p2) {};
  \node[small, label=below:$q_{i+1}$,left = 1em of p2] (q3) {};

  \node[small, right = 7em of q1, label=below:$q_1$] (q0) {};
  \node[small, left = 7em of q3, label=below:$q_m$] (qm) {};

  \draw (q1) edge[decorate,decoration={snake,amplitude=1pt}]  node[above] {\footnotesize $\pi_{i-1}$} (p1);
  \draw (p1) edge node[above] {\footnotesize $\tau_{i-1}$} (q2);
  \draw (q2) edge[decorate,decoration={snake,amplitude=1pt}]  node[above] {\footnotesize $\pi_i$} (p2);
  \draw (p2) edge node[above] {\footnotesize $\tau_i$} (q3);
  
  \draw (q1) edge[-,dotted] (q0);
  \draw (q3) edge[-,dotted] (qm);
 
  \draw[|-] (q2) ++(0,-2em) -- ++(5.5em,0) node[label=below:{\footnotesize $c_i$ and $r_i\pmod {g}$}] {};
  \draw[dotted] (q2) ++(4em,-2em) -- ++(9em,0) node {};
  \draw[-|] (q2) ++(12.5em,-2em) -- ++(1em,0);
\end{tikzpicture}
\caption{A compact summary of a path $\pi$.}\label{fig:compact}
\end{figure}

\begin{restatable}{proposition-restatable}{compactsummaries}
\label{prop:ps-alg}
For a given input stream $w \in \Sigma^*$, we can maintain a set of compact summaries $S$ containing for each $q \in Q$ a compact summary $\text{cs}_q \in S$ starting in $q$ such that $\text{cs}_q$ represents the unique run $\pi_{w,q}$ 
with probability at least $2/3$.
\end{restatable}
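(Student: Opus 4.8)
The plan is to maintain the set $S$ incrementally, processing the stream one symbol at a time from left to right, and to show that at each step we can update every compact summary $\text{cs}_q \in S$ so that it continues to represent $\pi_{w,q}$ with high probability. The update mirrors the path-summary update in the proof of Theorem~\ref{theorem:deterministic_ub}, except that in place of exact lengths $\ell_i$ we now carry, for each block boundary, a remainder $r_i$ modulo $g$ and a state $c_i$ of the $(h,\ell)$-counter $Z$. When the active stream $w$ is extended by a new symbol $b$, each run $\pi_{w,q}$ is prolonged to a run $\pi_{wb,p}$ for every transition $q \xleftarrow{b} p$; correspondingly we update $\text{cs}_q$ by either (i) incrementing the remainder $r_1$ and advancing the counter $c_1$ when $p$ lies in the same SCC as the first state of $\pi_1$, or (ii) prepending a new block $(p, 0, \text{low})$ when $\tau$ crosses into a new SCC. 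Each remainder is updated modulo $g$ deterministically, and each counter state is advanced by a single (randomized) increment of $Z$.

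First I would verify, by induction on the length of the stream, that the deterministic components of the update — the states $q_i$, the remainders $r_i$, and the block structure determined by the SCC-factorization — are maintained exactly correctly; this is essentially the same bookkeeping as in Theorem~\ref{theorem:deterministic_ub} and requires no probabilistic reasoning. The crucial observation is that the quantity $|\tau_{i-1}\pi_{i-1}\cdots\tau_1\pi_1|$ governing block $i$ is precisely the number of increments that the counter $c_i$ has received since the block boundary between $\pi_{i-1}$ and $\pi_i$ was created, so each $c_i$ behaves exactly as $Z(k)$ where $k = |\tau_{i-1}\pi_{i-1}\cdots\tau_1\pi_1|$. Thus the counter semantics of Definition~\ref{def:rep} match the intended combinatorial meaning automatically.

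The heart of the argument is the probabilistic correctness of property~(2): for each block $i$ we need $c_i$ to be low whenever $k \le \ell = (1-\eps)n + t + 1$ and high whenever $k \ge h = n - t$. By the definition of an $(h,\ell)$-counter (Lemma~\ref{lem:bernoulli}), each individual counter $c_i$ errs with probability at most $1/(3|Q|)$. Since a single compact summary $\text{cs}_q$ has at most $|Q|$ blocks, and each counter $c_i$ uses its own independent random bits, a union bound over the at most $|Q|$ counters gives that $\text{cs}_q$ fails to represent $\pi_{w,q}$ with probability at most $|Q| \cdot \frac{1}{3|Q|} = \frac{1}{3}$, hence represents it correctly with probability at least $2/3$, as claimed.

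The main obstacle I expect is bookkeeping the shared randomness correctly across the whole set $S$ and across all time steps: because the counters are probabilistic data structures whose states persist and are incremented once per symbol, I must be careful that the failure event for a fixed $q$ is a union over only the counters actually occurring in $\text{cs}_q$ at the final time, and that each such counter has received exactly the right number of increments. One subtlety is that a counter created at a block boundary is incremented on every subsequent symbol regardless of which run it belongs to, so I should argue that the number of increments it has received equals the length of the suffix of the run after that boundary — which is exactly $k$ above — independently of the randomness, so that the error bound from Lemma~\ref{lem:bernoulli} applies verbatim. Once this is pinned down, the union bound delivers the stated probability cleanly, and the $\O(\log(1/\eps))$ space bound follows since we store $|Q|$ compact summaries, each of $\O(\log(1/\eps))$ bits.
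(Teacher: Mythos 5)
Your overall strategy---maintain one compact summary per state, update it incrementally with each stream symbol, and apply a union bound over at most $|Q|$ counters each erring with probability $1/(3|Q|)$---is exactly the paper's. However, the update rule you state is incorrect, and it contradicts the invariant you yourself rely on in your second paragraph. By Definition~\ref{def:rep}, the pair $(r_i,c_i)$ of block $i$ encodes the quantity $|\tau_{i-1}\pi_{i-1}\cdots\tau_1\pi_1|$, i.e.\ the length of the portion of the run lying strictly between the start state and block $i$; moreover $r_1=0$ and $c_1$ is low by fiat. Since the rDFA reads the window right-to-left, a newly arrived stream symbol prepends one transition at the \emph{start} of every run $\pi_{w,q}$, so this quantity grows by one \emph{for every block $i\ge 2$ simultaneously}. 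The correct update therefore increments $c_i$ and $r_i$ for all $i\ge 2$ in the same-SCC case (leaving $(r_1,c_1)$ fixed at $(0,\text{low})$ and replacing $q_1$ by $p$), and increments all existing counters and remainders before prepending the fresh triple $(p,0,\text{low})$ in the new-SCC case.

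Your rule (i) instead increments only $r_1$ and $c_1$---precisely the components that must stay fixed---and your rule (ii) omits the increments of the existing blocks entirely. Under your rule a counter $c_i$ with $i\ge 2$ would never be incremented after its creation, so it would remain low forever (with high probability), Property (2) of Definition~\ref{def:rep} would fail for every non-initial block once $|\tau_{i-1}\pi_{i-1}\cdots\tau_1\pi_1|\ge n-t$, and your claimed identification of $c_i$ with $Z(k)$ for $k=|\tau_{i-1}\pi_{i-1}\cdots\tau_1\pi_1|$ would be false. Once the update rule is repaired as above, the remainder of your argument (the deterministic bookkeeping of the states $q_i$ and remainders $r_i$, and the union bound giving failure probability at most $m/(3|Q|)\le 1/3$) goes through and coincides with the paper's proof.
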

\begin{proof}
For each state in $Q$, we initialize the compact summary so that it represents the run $\pi_{\lambda,q}$ (recall that $\lambda$ is the empty word). Consider a compact summary $\text{cs} = ( q_m, r_m, c_m ) \cdots (q_1, r_1, c_1)$, which represents a run $\pi_{x,q_1}$. We prolong $\text{cs}$ by a transition $q_1 \xleftarrow{a} p$ in $B$ as follows:
\begin{itemize}
\item if $p$ and $q$ are not in the same SCC, then we increment all counter states $c_i$, increment all remainders $r_i$ mod $g$, and append a new triple $(p, 0, c_1)$;
\item if $p$ and $q$ belong to the same SCC, then we increment all counter states $c_i$ for $2 \le i \le m$, increment the remainder $r_i$ mod $g$ for $2 \le i \le m$, and replace $q_1$ by $p$.
\end{itemize}
If $a \in \Sigma$ is the next input symbol of the stream, then $S$ is updated to the new set $S'$ of compact summaries by iterating over all transition $q \xleftarrow{a} p$ in $B$ and prolonging the compact summary starting in $q$ by the transition.

To verify correctness, consider $\text{cs} = (q_m, r_m, c_m) \cdots (q_1, r_1, c_1)$ as a compact summary computed by the algorithm. Properties (1) and (3) from Definition~\ref{def:rep} are satisfied by construction. Furthermore, since $m \le |Q|$ the probability that Property (2) or (4) is violated is at most $m/(3|Q|) \le 1/3$ by the union bound.
\end{proof}

It remains to define an acceptance condition on compact summaries.
For every $q \in Q$ we define $\Acc_{\mathit{mod}}(q) = \{ \ell \pmod {g} : \ell \in \Acc(q) \text{ and } \ell \ge t\}$, which is intuitively speaking the set of accepting remainders.
Consider a compact summary $\text{cs} = (q_m, r_m, c_m) \cdots (q_1, r_1, c_1)$. Since $c_1$ is the low initial state of the probabilistic counter, there exists a maximal index $i \in \{1, \dots, m\}$ such that $c_i$ is low.
We say that $\text{cs}$ is {\em accepting} if $n - r_i \pmod g \in \Acc_{\mathit{mod}}(q_i)$.

\begin{restatable}{proposition-restatable}{onesideaccept}
\label{prop:correctness}
Assume that $\eps n \ge t$. Let $w \in \Sigma^*$ with $|w| \ge n$ and let $\text{cs}$ be a compact summary which represents $\pi_{w,q_0}$.
\begin{enumerate}
\item If $\last_n(w) \in L$, then $\text{cs}$ is accepting.
\item If $\text{cs}$ is accepting, then $\pdist(\last_n(w),L) \le \eps n$.
\end{enumerate}
\end{restatable}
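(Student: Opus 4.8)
\textbf{Proof plan for Proposition~\ref{prop:correctness}.}

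The plan is to analyze the compact summary $\text{cs} = (q_m, r_m, c_m) \cdots (q_1, r_1, c_1)$ representing $\pi_{w,q_0}$ by locating where the active window boundary falls within the SCC-factorization. The key object is the maximal index $i$ with $c_i$ low. Since $c_i$ tracks whether $|\tau_{i-1}\pi_{i-1}\cdots \tau_1\pi_1|$ exceeds the thresholds of the $(h,\ell)$-counter with $h = n-t$ and $\ell = (1-\eps)n + t + 1$, the low/high state of $c_i$ locates the suffix of length $n$ (the active window) relative to the SCC boundaries. I would first establish the crucial length estimate: because $c_i$ is low, Property~(2) forces $|\tau_{i-1}\pi_{i-1}\cdots\tau_1\pi_1| < n-t$, i.e. the prefix of $\pi_{w,q_0}$ below the start of $\pi_i$ is \emph{shorter} than $n-t$; and because $c_{i+1}$ is high (by maximality of $i$, when $i<m$), we get $|\tau_i\pi_i\cdots\tau_1\pi_1| > (1-\eps)n + t+1$, so the portion above includes most of the window. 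Together these pin down that the boundary of $\last_n(w)$ lies inside the internal run $\pi_i$ (or its SCC), up to a prefix of bounded length~$t$.

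For part~(1), suppose $\last_n(w) \in L$. Then there is an accepting run on $\last_n(w)$ of length $n$, which is a suffix-run of $\pi_{w,q_0}$ starting at the window boundary. I would argue that this window boundary lies within $\pi_i$'s SCC, and that the starting state of the accepting suffix-run is strongly connected to $q_i$. Using Lemma~\ref{lem:con-run} (almost equality of $\Acc(p)$ and $\Acc(q)+\shift(p,q)$) together with Corollary~\ref{cor:gt}, the acceptance data transports to $q_i$: the relevant length $n - |\tau_{i-1}\pi_{i-1}\cdots\tau_1\pi_1|$ shifted appropriately lands in $\Acc(q_i)$, and reducing modulo $g$ shows $n - r_i \bmod g \in \Acc_{\mathit{mod}}(q_i)$, so $\text{cs}$ is accepting. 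The threshold hypothesis $\eps n \ge t$ ensures all the involved lengths exceed $t$, so the eventual periodicity of $\Acc$ and the modular acceptance condition $\Acc_{\mathit{mod}}$ correctly capture membership.

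For part~(2), suppose $\text{cs}$ is accepting, so $n - r_i \bmod g \in \Acc_{\mathit{mod}}(q_i)$. By definition of $\Acc_{\mathit{mod}}$ there is some $\ell \ge t$ with $\ell \equiv n - r_i \pmod g$ and $\ell \in \Acc(q_i)$. The goal is to build an accepting run whose suffix agrees with $\pi_{w,q_0}$ on the window except for a prefix of length at most $\eps n$. Here the main obstacle — and the technical heart of the argument — is to splice together three pieces: the high-up portion $\tau_{m-1}\pi_{m-1}\cdots\tau_i$ of the actual run that lies entirely inside the window above $\pi_i$, a freshly constructed accepting internal run from $q_i$ whose length matches the window contribution modulo $g$, and the observation that the mismatched prefix is confined to the first few positions. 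I would invoke Lemma~\ref{lem:sim} to $t$-simulate the relevant internal portion of $\pi_i$ by an accepting internal run of the correct length from $q_i$, which guarantees the modification touches only a prefix of length $t$ within $\pi_i$, plus whatever part of the window lies strictly below $q_i$. The length bookkeeping must show that the total changed prefix has length at most $\eps n$: the part of the window below $q_i$ has length $< \eps n - t + \O(1)$ because $c_i$ is low forces $|\tau_{i-1}\pi_{i-1}\cdots\tau_1\pi_1| \le \ell$-region while the window endpoint sits above $(1-\eps)n$, and the additional $t$ from the simulation is absorbed by $\eps n \ge t$. Assembling this into an accepting run of length $n$ witnesses $\pdist(\last_n(w),L) \le \eps n$, completing the proof.
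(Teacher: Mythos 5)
Your overall skeleton matches the paper's proof: both arguments hinge on the maximal index $i$ with $c_i$ low, use Definition~\ref{def:rep} to place $q_i$ more than $t$ steps inside the window while everything beyond $\pi_i$ is confined to the last $\eps n - t$ positions (via $c_{i+1}$ high), and for part~(2) combine $\Acc_{\mathit{mod}}$ with Corollary~\ref{cor:gt}(1) to recover $|\pi_k'\tau_{k-1}\cdots\pi_i|\in\Acc(q_i)$ and then apply Lemma~\ref{lem:sim} to $t$-simulate $\pi_i$. That is exactly the paper's route for part~(2).

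There is, however, a genuine misstep in your part~(1). You claim the window boundary ``lies within $\pi_i$'s SCC'' and that the state reached at the boundary is strongly connected to $q_i$, and you then propose to transport acceptance information to $q_i$ via Lemma~\ref{lem:con-run}. Neither claim is available: the constraints from $c_i$ low and $c_{i+1}$ high only say that $q_i$ is reached strictly inside the window (with more than $t$ window symbols still to come) and that $q_{i+1}$ is reached only within the last $\eps n - t - 1$ window positions, if at all; the run on $\last_n(w)$ may well end in a strictly higher SCC $C_k$ with $k>i$, whose states are not strongly connected to $q_i$. No transport is needed, though: the run of $B$ on $\last_n(w)$ factors as $\pi_k'\tau_{k-1}\cdots\tau_1\pi_1$, and its tail $\pi_k'\tau_{k-1}\cdots\pi_i$ is \emph{itself} an accepting run starting at $q_i$ of length $n-|\tau_{i-1}\pi_{i-1}\cdots\tau_1\pi_1|>t$, so $n-r_i \bmod g\in\Acc_{\mathit{mod}}(q_i)$ directly from the definition of $\Acc_{\mathit{mod}}$. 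Two smaller points in part~(2): the bound on the replaced portion comes from $c_{i+1}$ being high (giving $|\pi_i\tau_{i-1}\cdots\pi_1|\ge(1-\eps)n+t$, hence $|\pi_k'\tau_{k-1}\cdots\tau_i|\le\eps n-t$), not from $c_i$ being low as your bookkeeping suggests; and you must treat the case $i=m$ separately, where there is no $c_{i+1}$ and the inequality $|\pi_m\tau_{m-1}\cdots\pi_1|\ge n\ge(1-\eps)n+t$ is where the hypothesis $\eps n\ge t$ is actually used.
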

\begin{proof}
Consider the SCC-factorization of $\pi = \pi_{w,q_0} = \pi_m \tau_{m-1} \cdots \tau_1 \pi_1$. Let $\text{cs} = (q_m, c_m, r_m) \cdots (q_1, c_1, r_1)$ be a compact summary representing $\pi$. Thus, $q_1 = q_0$.
Consider the maximal index $1 \le i \le m$ where $c_i$ is low,
which means that $|\tau_{i-1} \pi_{i-1} \cdots \tau_1 \pi_1| < n-t$ by Definition~\ref{def:rep}(4).
The run of $B$ on $\last_n(w)$ has the form $\pi_k' \tau_{k-1} \pi_{k-1} \cdots \tau_1 \pi_1$
for some suffix $\pi_k'$ of $\pi_k$.
We have $|\pi_k' \tau_{k-1} \cdots \pi_i| = n - |\tau_{i-1} \pi_{i-1} \cdots \tau_1 \pi_1| > t$.
By Definition~\ref{def:rep}(2) we know that
\[
	r_i = |\tau_{i-1} \pi_{i-1} \cdots \tau_1 \pi_1| \pmod {g} = n - |\pi_k' \tau_{k-1} \cdots \pi_i| \pmod {g}.
\]
For point 1~assume that $\last_n(w) \in L$. Thus, $\pi_k' \tau_{k-1} \pi_{k-1} \cdots \tau_1 \pi_1$
is an accepting run starting in $q_0$.
By Definition~\ref{def:rep}(1), the run 
$\pi_k' \tau_{k-1} \cdots \pi_i$ starts in $q_i$.
Hence, $\pi_k' \tau_{k-1} \cdots \pi_i$ is an accepting run from $q_i$ of length at least $t$.
By definition of $\Acc_{\mathit{mod}}(q_i)$ we have
$|\pi_k' \tau_{k-1} \cdots \pi_i| \pmod {g} = n - r_i \pmod {g} \in \Acc_{\mathit{mod}}(q_i)$,
and therefore $\text{cs}$ is accepting.

For point 2~assume that $\text{cs}$ is accepting,
i.e.
\[
	n - r_i \pmod g = |\pi_k' \tau_{k-1} \cdots \pi_i|  \pmod g \in \Acc_{\mathit{mod}}(q_i).
\]
Recall that $|\pi_k' \tau_{k-1} \cdots \pi_i| > t$.
By definition of $\Acc_{\mathit{mod}}(q_i)$ there exists an accepting run from $q_i$
whose length is congruent to $|\pi_k' \tau_{k-1} \cdots \pi_i|$ mod $g$
and at least $t$.
By Corollary~\ref{cor:gt}(1) we derive that $|\pi_k' \tau_{k-1} \cdots \pi_i| \in \Acc(q_i)$.
We claim that $|\pi_i \tau_{i-1} \pi_{i-1} \cdots \tau_1 \pi_1| \ge (1-\eps)n+t$ by a case distinction.
If $i = m$, then clearly $|\pi_i \tau_{i-1} \pi_{i-1} \cdots \tau_1 \pi_1| \ge n \ge (1-\eps)n+t$.
If $i < m$, then $c_{i+1}$ is high by maximality of $i$,
which implies $|\tau_i \pi_i \cdots \tau_1 \pi_1| > (1-\eps)n + t+1$ by Definition~\ref{def:rep}(3).
Since $\tau_i$ has length one, we have $|\pi_i \tau_{i-1} \pi_{i-1} \cdots \tau_1 \pi_1| > (1-\eps)n+t$.

Since $|\pi_k' \tau_{k-1} \cdots \pi_i| \in \Acc(q_i)$,
we can apply Lemma~\ref{lem:sim} and obtain an accepting run $\rho$
of length $|\pi_k' \tau_{k-1} \cdots \pi_i| \in \Acc(q_i)$ starting in $q_i$
which $t$-simulates the internal run $\pi_i$.
The prefix distance from $\rho$ to $\pi_k' \tau_{k-1} \cdots \pi_i$ is at most
\[
	|\pi_k' \tau_{k-1} \cdots \tau_i| + t = n - |\pi_i \tau_{i-1} \pi_{i-1} \cdots \tau_1 \pi_1| + t \le n - (1-\eps)n = \eps n.
\]
Therefore the accepting run $\rho \tau_{i-1} \pi_{i-1} \cdots \tau_1 \pi_1$
and $\pi_k' \tau_{k-1} \pi_{k-1} \cdots \tau_1 \pi_1$ have prefix distance at most $\eps n$ as well.
This implies  $\pdist(\last_n(w),L) \le \eps n$.
\end{proof}



\begin{proof}[Proof of Theorem~\ref{theorem:two-sided}]
Assume that $\eps n \ge t$, otherwise we use a trivial streaming algorithm that stores the window explicitly with $\O(1/\eps)$ bits.
We use the algorithm from Proposition~\ref{prop:ps-alg} for each incoming symbol from the stream. To initialize, we run the algorithm on $\square^n$.
The algorithm accepts if the computed compact summary starting in $q_0$ is accepting.
From Proposition~\ref{prop:ps-alg} and~\ref{prop:correctness} we get:
\begin{itemize}
\item If $\pdist(\last_n(w),L) > \eps n$, then the algorithm rejects with probability at least $2/3$.
\item If $\last_n(w) \in L$, then the algorithm accepts with probability at least $2/3$.
\end{itemize}
This concludes the proof of the theorem.
\end{proof}
Comparing Theorems~\ref{theorem:deterministic_ub} and \ref{theorem:two-sided} leads to the question whether one can replace the Hamming gap $\gamma(n) = \eps n$ in Theorem~\ref{theorem:two-sided} by $\gamma(n) = o(n)$ while retaining constant space at the same time. We show that this is not the case:

\begin{restatable}{lemma-restatable}{needlineargap}
Let $L = a^* \subseteq \{a,b\}^*$. Every randomized sliding window tester with two-sided error for $L$ with Hamming gap $\gamma(n)$ needs space $\Omega(\log n - \log \gamma(n))$ for infinitely many $n$.
\end{restatable}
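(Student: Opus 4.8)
The plan is to reduce the \emph{augmented indexing} communication problem to the sliding window tester, and then invoke the standard $\Omega(m)$ lower bound on the one-way randomized communication complexity of augmented indexing on $m$-bit strings. Recall that in augmented indexing Alice holds a string $z \in \{0,1\}^m$, Bob holds an index $i \in \{1,\dots,m\}$ together with the prefix $z_1 \cdots z_{i-1}$, only Alice may send a message to Bob, and Bob must output $z_i$ with error at most $1/3$; any such protocol requires $\Omega(m)$ bits of communication. This is exactly the Alice-to-Bob direction that a sliding window algorithm induces when the stream is split into a prefix (processed by Alice) and a suffix (processed by Bob), with the transmitted memory state playing the role of the message.

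Fix a large window size $n$ and set $B = \lfloor \gamma(n)\rfloor + 1$, $N = \lfloor n/B\rfloor$, and $m = \lfloor \log_2 N\rfloor$, so that $m = \Omega(\log n - \log\gamma(n))$ (we may assume $\gamma(n)\ge 1$). Given a randomized sliding window tester $\mathcal{A} = (A_n)$ for $L = a^*$ with Hamming gap $\gamma(n)$ and two-sided error that uses space $s(n)$, I would turn $A_n$ into a one-way protocol as follows. Interpreting $z$ as the integer $x = \sum_{j} z_j 2^{m-j} \in \{0,\dots,N-1\}$ (so $z_1$ is the most significant bit), Alice builds the prefix $P_z = a^n b^B a^{Bx}$, runs $A_n$ on $P_z$, and sends the resulting memory state (which fits into $s(n)$ bits) to Bob. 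Bob, knowing $i$ and $z_1\cdots z_{i-1}$, computes the threshold $\theta_0 = \mathrm{val}(z_1 \cdots z_{i-1}\,1\,0^{m-i})$, continues the run of $A_n$ on $a^{s}$ with $s = n - B\theta_0 \ge 0$, and outputs $1$ iff $A_n$ accepts.

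The correctness of the reduction rests on a short window computation. After Alice's and Bob's symbols the trailing block of $a$'s has length $q = Bx + s$, so the active window of length $n$ contains exactly $c = \max(0, \min(B, n-q))$ of the $B$ letters $b$; because every other letter of the window is an $a$, we have $\dist(\last_n(P_z a^s), L) = c$. Hence the window lies in $L$ (so the tester must accept with probability $\ge 2/3$) exactly when $q \ge n$, i.e.\ $x \ge \theta_0$, and it is $\gamma(n)$-far from $L$ (so the tester must reject with probability $\ge 2/3$) exactly when $c = B$, i.e.\ $q \le n-B$, i.e.\ $x \le \theta_0 - 1$. Spacing the codeword positions by $B = \gamma(n)+1$ is precisely what guarantees that no integer $x$ falls into the width-$\gamma(n)$ ``don't care'' band strictly between these two regimes, so the tester's answer equals $[\,x \ge \theta_0\,] = z_i$ with probability $\ge 2/3$. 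Therefore $s(n) \ge \Omega(m) = \Omega(\log n - \log\gamma(n))$, and since $n$ was an arbitrary large window size this holds for infinitely many $n$.

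The main obstacle is exactly this alignment between the Hamming gap and the encoding: a single threshold query of the tester returns reliable information only when the queried block position avoids the unconstrained band, whose width is governed by $\gamma(n)$; choosing the block length and the spacing of codewords both equal to $B = \gamma(n)+1$ removes the ambiguity while keeping $N = \Theta(n/\gamma(n))$ distinct positions, which is what makes $m$ as large as $\log n - \log\gamma(n)$. A secondary point to get right is the direction and side information of the communication problem: plain indexing does not reduce, since Bob could not compute $\theta_0$ without the prefix $z_1\cdots z_{i-1}$, and it is the \emph{augmented} version that forces the full $\Omega(\log n)$ bound rather than the weaker $\Omega(\log\log n)$ that a rank or approximation argument (which would only capture the two-way, adaptive resolution of the threshold) would give.
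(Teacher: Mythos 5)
Your proposal is correct and is essentially the paper's proof: the same gadget (a block of $\lfloor\gamma(n)\rfloor+1$ letters $b$ placed at a position encoding Alice's number, with Bob appending $a$'s so that the block either leaves the window entirely or remains inside it at Hamming distance $>\gamma(n)$) turns the tester's memory state into the message of a one-way protocol for a threshold comparison. The only difference is the communication primitive invoked: the paper reduces directly from the greater-than function and cites its $\Omega(\log m)$ one-way randomized lower bound, whereas you reduce from augmented indexing on $m=\lfloor\log_2 N\rfloor$ bits --- but since your Bob is computing exactly the predicate $x\ge\theta_0$, this is the greater-than reduction in disguise, and both routes yield the same $\Omega(\log n-\log\gamma(n))$ bound.
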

\begin{proof}
We prove the lemma by a reduction from the randomized one-way communication complexity of the greater-than-function.\footnote{A similar reduction was used
in \cite{GanardiHL18}.} The setting is the following: Alice (resp.~Bob) holds a number $i \in \{1,\ldots,m\}$ (resp., $j \in \{1,\ldots,m\}$). Moreover, both parties receive a random string. Then Alice sends a message to Bob (depending on her input $i$ and her random string), and Bob has to decide whether $i > j$ or $i \le j$ holds. It is known that in every such one-way protocol, where Bob gives a correct answer with probability at least $2/3$, Alice has to send $\Omega(\log m)$ bits to Bob \cite[Theorem~3.8]{KremerNR99}.

Consider a 
randomized sliding window tester for $a^*$ with Hamming gap $\gamma(n)$ 
that uses space $s(n)$.
Fix a window size $n$, which is divisible by $k := \gamma(n)+1$. Let $m = n/k$.
We divide the window into $m$ 
blocks of length $k$. We then obtain a randomized one-way protocol for the greater-than-function on the interval $\{1,\ldots,m\}$: 
Alice produces from her input $i \in \{1,\ldots,m\}$ the word 
$w_i = a^{(i-1)k} b^{k} a^{(m-i)k}$.
She then runs the randomized sliding window tester on $w_i$ (using her random bits) 
and sends the final memory content ($s(n)$ bits)
to Bob. Bob continues the run of the randomized sliding window tester (starting from the transferred memory
content) with the input stream $a^{j k}$.
He obtains the memory content reached after the input 
$a^{(i-1)k} b^{k} a^{(m-i+j)k}$. Finally, Bob outputs the answer given by 
the randomized sliding window tester. If $i \le j$, then the window content at the end is $a^n$ and hence
belongs to $a^*$. On the other hand, if $i > j$, then the window content at the end contains the block 
$b^{k}$, hence, the Hamming distance between the window content and $a^*$ is at least
$\gamma(n)+1$. This implies that Bob will give a correct answer with probability at least $2/3$. 
It follows that $s(n) \in \Omega(\log m) = \Omega(\log n - \log \gamma(n))$. Note that for the case $\gamma(n) \leq n^{\epsilon}$ for a constant $\epsilon > 0$ we obtain
$s(n) \in \Omega(\log n)$.
\end{proof}

\subsection{Randomized loglogspace tester with one-sided error}\label{sec:one-sided_ub}
Let $L$ be a finite union of trivial regular languages and suffix-free regular languages. In this section, we present a randomized sliding window tester for $L$ with one-sided error and Hamming gap $\gamma(n) = \eps n$ that uses space $\O(\log \log n)$.
By Remark~\ref{rem-finite-union}  and Theorem~\ref{prop-trivial1}, it suffices to consider the case when $L$ is a suffix-free regular language. 
As in Section~\ref{sec-ps} we fix an rDFA $B= (Q,\Sigma,F,\delta,q_0)$ for $L$ such that $g(C)=g$ for all SCCs of $A$. 
Since $L$ is suffix-free, $B$ has the property that no final state can be reached from a final state by a non-empty run. We decompose $B$ into a finite union of {\em partial automata}, similar to \cite{GHKLM18}. 

\begin{defn}\label{def:path-descript}

We call a sequence
$$(q_k,a_k,p_{k-1}),C_{k-1},\dots ,(q_2,a_2,p_1),C_1,(q_1,a_1,p_0),C_0,q_0$$
a {\em path description}
if $C_{k-1}, \dots, C_0$ is a chain (read from right to left) in the SCC-ordering of $B$,
$p_i, q_i \in C_i$, $q_{i+1} \xleftarrow{a_{i+1}} p_i$ is a transition in $B$
for all $0 \le i \le k-1$, and $q_k \in F$.
\end{defn}

Each path description defines a {\em partial rDFA} $B_P = (Q_P,\Sigma,\{q_k\},\delta_P,q_0)$ by restricting $B$ to the state set $Q_P = \bigcup_{i=0}^{k-1}C_i \cup \{q_k\}$, restricting the transitions of $B$ to internal transitions from the SCCs $C_i$ and the transitions $q_{i+1} \xleftarrow{a_{i+1}} p_i$, and declaring $q_k$ to be the only final state. The rDFA is partial since for every state $p_i$ and every symbol $a \in \Sigma$ there exists at most one transition $q \xleftarrow{a} p_i$. Since the number of path descriptions $P$ is finite and $L(B) = \bigcup_P L(B_P)$, it suffices to provide a sliding window tester for $L(B_P)$ (we again use Remark~\ref{rem-finite-union} here).

From now on, we fix a path description $P$ from Definition~\ref{def:path-descript} and the partial automaton $B_P = (Q_P,\Sigma,\{q_k\},\delta_P,q_0)$ corresponding to it. 
The acceptance sets $\Acc(q)$ are defined with respect to $B_P$.
 If all $C_i$ are transient, then $L(B_P)$ is a singleton and we can use a trivial sliding window tester with space complexity $\O(1)$. Now assume the contrary and let $0 \le e \le k-1$ be maximal such that $C_e$ is nontransient. 

\begin{restatable}{lemma-restatable}{acceptinglengthsonesided} \label{lemma-si-ri}
There exist numbers $r_0, \ldots, r_{k-1}, s_0, \ldots, s_e \in \N$ such that the following hold:
\begin{itemize}
\item For all $e+1 \le i \le k$, the set $\Acc(q_i)$ is a singleton.
\item For all $0 \le i \le e$, $\Acc(q_i) =_{s_i} \sum_{j=i}^{k-1} r_j + g \N$.
\item Every run $\pi$ from $q_i$ to $q_{i+1}$ $(0 \le i \le k-1)$ satisfies $|\pi| \equiv r_i \pmod g$.
\end{itemize}
\end{restatable}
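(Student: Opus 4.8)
```latex
\textbf{Plan.}
The structure of the proof is dictated by the path description: we have a chain of SCCs
$C_0, \dots, C_{k-1}$ together with distinguished states $q_0, \dots, q_k$ where $q_k \in F$ is
the unique final state. The three claims are statements about the acceptance sets $\Acc(q_i)$
in the partial automaton $B_P$, read from the final state backwards. My plan is to prove them by
downward induction on $i$, starting at $i=k$ and working towards $i=0$, using the machinery on
periodic acceptance sets already established in Section~\ref{sec-rand} (in particular
Lemma~\ref{lem:acc-per}, Lemma~\ref{lem:con-run}, and Corollary~\ref{cor:gt}).

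\textbf{The transient tail.}
First I would establish the third bullet, that every run from $q_i$ to $q_{i+1}$ has length
congruent to $r_i$ modulo $g$. In $B_P$, any run from $q_i$ to $q_{i+1}$ decomposes as an internal
run inside $C_i$ from $q_i$ to some $p_i$, followed by the single connecting transition
$q_{i+1} \xleftarrow{a_{i+1}} p_i$. By Lemma~\ref{lemma-alon}, any internal run from $q_i$ to $p_i$
has length $\equiv \shift(q_i, p_i) \pmod g$ (this holds verbatim when $C_i$ is non-transient; when
$C_i$ is transient the internal run is forced to be trivial, so the length is just the single
transition). Hence all runs from $q_i$ to $q_{i+1}$ share a common residue $r_i \bmod g$, which I
take as the definition of $r_i$. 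Next, for the first bullet ($e+1 \le i \le k$): here every SCC
$C_{e+1}, \dots, C_{k-1}$ is transient by maximality of $e$, so the only run from $q_i$ to $q_k$ is
the unique path through the transient states and connecting transitions, of a single fixed length.
Therefore $\Acc(q_i)$, which consists precisely of lengths of accepting runs from $q_i$, is the
singleton containing that length. This gives the base case and the transient-tail case.

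\textbf{The inductive step across non-transient SCCs.}
The heart of the argument is the second bullet, for $0 \le i \le e$. The intended meaning is that,
above a threshold $s_i$, the acceptance set is exactly the arithmetic progression
$\sum_{j=i}^{k-1} r_j + g\N$, i.e.\ a single residue class $\sum_{j=i}^{k-1} r_j \bmod g$ populated
cofinitely. I would prove $\Acc(q_i) =_{s_i} \sum_{j=i}^{k-1} r_j + g\N$ by downward induction.
The base of this part is $i=e$: since $C_e$ is non-transient with period $g$, an accepting run from
$q_e$ must first reach $q_{e+1}$ (contributing residue $r_e$ and giving access to cycles of every
length $\equiv 0 \bmod g$ above the Alon threshold $m(C_e)$), then traverse the unique transient
tail to $q_k$ (contributing $\sum_{j=e+1}^{k-1} r_j$). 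By Lemma~\ref{lemma-alon} every sufficiently
large length in the residue class $\sum_{j=e}^{k-1} r_j \bmod g$ is realizable, and conversely by
the third bullet every accepting run has length in that class; this pins down $\Acc(q_e)$ up to a
threshold $s_e$. For the step $i+1 \to i$ (with $i < e$), I would relate $\Acc(q_i)$ to
$\Acc(q_{i+1})$: a long accepting run from $q_i$ consists of an internal $C_i$-run to some $p_i$,
the transition to $q_{i+1}$, and an accepting run from $q_{i+1}$. Using Corollary~\ref{cor:gt} to
shift between states of the same SCC and Lemma~\ref{lem:con-run} for almost-equality of shifted
acceptance sets, together with the cofiniteness of $\Acc(q_{i+1})$ on its residue class from the
induction hypothesis, I obtain $\Acc(q_i) =_{s_i} \Acc(q_{i+1}) + r_i$, and then
$\Acc(q_{i+1}) + r_i =_{s_i} \bigl(\sum_{j=i+1}^{k-1} r_j + g\N\bigr) + r_i = \sum_{j=i}^{k-1} r_j + g\N$,
with a suitably enlarged threshold $s_i$.

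\textbf{Main obstacle.}
The delicate point is bookkeeping the thresholds and ensuring the residue classes line up exactly:
I must verify that adding $r_i$ to a cofinite single residue class modulo $g$ again yields a single
residue class (which is immediate once $g(C_i)=g$ uniformly, thanks to Lemma~\ref{lem:uni-per}),
and that the shift constants from Corollary~\ref{cor:gt} can be absorbed into $s_i$ without
corrupting the arithmetic-progression form. The uniform period is essential here, since it is what
lets all the $C_i$ contribute cycles of the same modulus $g$ so that the whole chain collapses to a
single residue class rather than a union of several.
```
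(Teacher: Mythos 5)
Your proposal is correct and follows essentially the same route as the paper: both decompose $\Acc(q_i)$ along the chain of SCCs into the per-segment length sets $N_j$ (runs from $q_j$ to $q_{j+1}$), use Lemma~\ref{lemma-alon} to see that each $N_j$ equals $r_j + gD_j$ with $D_j$ cofinite (or $\{0\}$ for transient $C_j$), and conclude that the resulting sum is eventually a single full residue class modulo $g$. The paper simply writes the telescoping identity $\Acc(q_i)=\sum_{j=i}^{k-1}N_j$ directly instead of unrolling it as a downward induction, and it does not need Corollary~\ref{cor:gt} or Lemma~\ref{lem:con-run} for this step.
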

\begin{proof}
The first statement of the lemma follows immediately from the definition of transient SCCs. 

Let us now show the second and third statement of the lemma.
Let $0 \le i \le k-1$ and let $N_i$ be the set of lengths of runs of the form $q_{i+1} \xleftarrow{a_{i+1}} p_i \xleftarrow{w} q_i$ in $B_P$. If $C_i$ is transient, then $N_i = \{1\}$. Otherwise, by Lemma~\ref{lemma-alon} there exist a number $r_i \in \N$ and a cofinite set $D_i \subseteq \N$ such that $N_i = r_i + gD_i$. We can summarize both cases by saying that there exist a number $r_i \in \N$ and a set $D_i \subseteq \N$ which is either cofinite or $D_i = \{0\}$ such that $N_i = r_i + gD_i$. This yields the third statement. Moreover,
the acceptance sets in $B_P$ satisfy
\[
	\Acc(q_i) = \sum_{j=i}^{k-1} N_j = \sum_{j=i}^{k-1} (r_j + gD_j) = \sum_{j=i}^{k-1} r_j + g \sum_{j=i}^{k-1} D_j.
\]
For all $0 \le i \le e$ we get $\Acc(q_i) =_{s_i} \sum_{j=i}^{k-1} r_j + g \N$ for some threshold $s_i \in \N$ (note that a non-empty sum of cofinite subsets of $\N$ is again cofinite).
\end{proof}
Let us fix the numbers $r_i$ and $s_i$ from Lemma~\ref{lemma-si-ri}. 
Let $p$ be a random prime with $\Theta(\log \log n)$ bits. Define a threshold 
\[ s = \max \bigg\{k, \sum_{j=0}^{k-1} r_j, s_0, \dots, s_e \bigg\}\] 
and for a word $w \in \Sigma^*$ define the function
$\ell_w \colon Q \to \N \cup \{\infty\}$ where
\[
	\ell_w(q) = \inf \{ \ell \in \N \mid \delta_P(\last_\ell(w),q) = q_k  \}
\]
(we set $\inf \emptyset = \infty$). We now define an acceptance condition on $\ell_w(q)$. If $n \notin \Acc(q_0)$, we always reject. Otherwise, we accept $w$ iff $\ell_w(q_0) \equiv n$
modulo our randomly chosen prime $p$.

\begin{restatable}{lemma-restatable}{acceptingconditiononesided}
\label{lm:acceptingconditiononesided}
Let $n \in \Acc(q_0)$ be a window size with $n \ge s + |Q_P|$ and $w \in \Sigma^*$ with $|w| \ge n$. There exists a constant $c > 0$ such that:
\begin{enumerate}
\item if $\last_n(w) \in L(B_P)$, then $w$ is accepted with probability $1$;
\item if $\pdist(\last_n(w), L(B_P)) > c$, then $w$ is rejected with probability at least $2/3$.
\end{enumerate}
\end{restatable}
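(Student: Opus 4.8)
The plan is to analyze the quantity $\ell := \ell_w(q_0)$, which by construction is the length of the shortest suffix of $w$ lying in $L(B_P)$. The observation that drives everything is that $\ell$ takes at most one finite value: since $L$ is suffix-free, the final state $q_k$ has no outgoing transition in $B_P$, so once the right-to-left run from $q_0$ reaches $q_k$ it is stuck. Hence $\delta_P(\last_\ell(w),q_0)=q_k$ holds for at most one $\ell$, and therefore $\last_n(w)\in L(B_P)$ if and only if $\ell=n$. This immediately gives point (1): if $\last_n(w)\in L(B_P)$, then $\ell=n\equiv n\pmod p$ for every prime $p$ (and $n\in\Acc(q_0)$ by hypothesis), so $w$ is accepted with probability $1$.

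For point (2) I would prove the contrapositive: if $w$ is accepted with probability strictly greater than $1/3$ over the random prime $p$, then $\pdist(\last_n(w),L(B_P))\le c$ for a constant $c$ depending only on $B_P$. Since $n\in\Acc(q_0)$ is given, acceptance means $\ell\equiv n\pmod p$, so I split on the value of $\ell$. If $\ell=\infty$, the acceptance probability is $0$; if $\ell=n$, then $\last_n(w)\in L(B_P)$ and the prefix distance is $0$. The two remaining cases $\ell<n$ and $\ell>n$ are handled by a probabilistic and a combinatorial argument, respectively.

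The heart of the proof, and the step I expect to be hardest, is the case $\ell>n$, where I claim $\pdist(\last_n(w),L(B_P))\le c$ unconditionally, so that accepting here is harmless whatever the prime does. First, both $n$ and $\ell$ lie in $\Acc(q_0)$ and are at least the threshold $s$, so by Lemma~\ref{lemma-si-ri}, which identifies $\Acc(q_0)$ above $s$ with a single residue class modulo $g$, we obtain the crucial congruence $\ell\equiv n\pmod g$ for free. Now run $B_P$ on $\last_n(w)$ from $q_0$; since $\ell>n$, this run never reaches $q_k$ and coincides with the first $n$ steps of the accepting run on $\last_\ell(w)$. Writing $\sigma_j$ for the state reached after the rightmost $j$ symbols, the continuation of the length-$\ell$ run witnesses $\ell-j\in\Acc(\sigma_j)$ for every $j\le n$. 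Using $n\ge s+|Q_P|$, I then locate an index $j$ in the interval $[\,n-s-|Q_P|,\,n-s\,]$ with $\sigma_j$ in a nontransient SCC; such $j$ exists because transient SCCs are singletons visited at most once along a run, so any block of $|Q_P|+1$ consecutive positions meets a nontransient SCC. For this $j$, the set $\Acc(\sigma_j)$ is, above the threshold, a single residue class modulo $g$ (Lemma~\ref{lemma-si-ri} together with the shift-invariance within an SCC from Corollary~\ref{cor:gt}(2)); since $\ell-j\in\Acc(\sigma_j)$ pins that residue to $\ell-j$, while $n-j\equiv\ell-j\pmod g$ and $n-j\ge s$ is above threshold, I conclude $n-j\in\Acc(\sigma_j)$. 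Rerouting from $\sigma_j$ into an accepting run of length $n-j$, equivalently invoking the simulation Lemma~\ref{lem:sim}, produces a length-$n$ word of $L(B_P)$ sharing the rightmost $j$ symbols with $\last_n(w)$, whence $\pdist(\last_n(w),L(B_P))\le n-j\le s+|Q_P|=:c$.

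Finally, for $\ell<n$ we have $1\le n-\ell\le n$, and here the random prime does the work: a positive integer at most $n$ has at most $\log_2 n$ distinct prime divisors, whereas the number of primes with $\Theta(\log\log n)$ bits is $(\log n)^{\Theta(1)}/\Theta(\log\log n)$, which exceeds $3\log_2 n$ once the constant hidden in $\Theta(\log\log n)$ is chosen large enough. Hence $\Pr_p[\,p\mid n-\ell\,]\le 1/3$, so $w$ is accepted with probability at most $1/3$, contradicting the assumption in this case. Combining the four cases, acceptance probability above $1/3$ forces $\ell=n$ or $\ell>n$, and both yield $\pdist(\last_n(w),L(B_P))\le c$; contrapositively, if $\pdist(\last_n(w),L(B_P))>c$, then $w$ is rejected with probability at least $2/3$, which is point (2). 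The subtle point throughout is the interplay between the exact value $\ell$ and the window size $n$: suffix-freeness collapses membership to the single equation $\ell=n$, periodicity forces $\ell\equiv n\pmod g$ whenever $\ell\ge n$ (making those windows automatically close), and only the short discrepancies $0<n-\ell\le n$ need to be detected, which is precisely what a random $O(\log\log n)$-bit prime accomplishes in $O(\log\log n)$ space.
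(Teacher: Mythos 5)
Your proposal is correct and follows essentially the same route as the paper: point (1) from suffix-freeness, and point (2) by splitting on whether $\ell_w(q_0)$ is $\infty$, equal to $n$, less than $n$, or greater than $n$, with the random prime handling $\ell_w(q_0)<n$ and the periodicity of acceptance sets inside a nontransient SCC showing that $\ell_w(q_0)>n$ forces constant prefix distance. The only cosmetic difference is that you reroute at a state $\sigma_j$ sitting at controlled depth $n-j\in[s,s+|Q_P|]$ rather than at the entry state $q_{i_0}$ of an SCC as in the paper's SCC-factorization argument; this needs the single-residue-class description of $\Acc(\cdot)$ for \emph{arbitrary} states of a nontransient SCC (obtainable from Corollary~\ref{cor:gt}(2) applied to $B_P$, as you indicate) and a correspondingly enlarged threshold, which is harmless since it can be absorbed into the constant $c$.
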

\begin{proof}
Assume first that $\last_n(w) \in L(B_P)$. Since $L(B_P) \subseteq L$ is suffix-free, $\ell_w (q_0) = n \pmod p$ and $w$ is accepted with probability $1$.

Consider now the case when $\last_n(w) \notin L(B_P)$. By definition, in this case $\ell_w (q_0) \neq n$. In other words, only two cases are possible: either $\ell_w(q_0) < n$, or $\ell_w(q_0) > n$.  If $\ell_w(q_0) < n$, then by the choice of $p$  $\ell_w(q_0) \not \equiv n \pmod p$ with probability at least $2/3$. 

We finally consider the case $\ell_w(q_0) > n$. We will show that in this case the prefix distance between $\last_n(w)$ and $L(B_P)$ is bounded by a constant $c$, which means that we can either accept or reject. Let $\pi$ be the run of $B_P$ on $\last_n(w)$ starting from the initial state $q_0$, and let $\pi = \pi_m \tau_{m-1} \pi_{m-1} \cdots \tau_0 \pi_0$ be its SCC-factorization. 
We have $|\pi|=n$.
Since $\ell_w(q_0) > n$, the run $\pi$ can be strictly prolonged to a run to $q_k$ and hence we must have $m < k$. 
For all $0 \leq i \leq m$, the run $\pi_i$ is an internal run in the SCC $C_i$ from $q_i$ to $p_i$.
For all $0 \le i \le m-1$ we have $\tau_i = (q_{i+1} \xleftarrow{a_{i+1}} p_i)$  and $|\tau_i \pi_i| \equiv r_i \pmod g$, where the latter follows from the third
statement in Lemma~\ref{lemma-si-ri}.
We claim that there exists an index $0 \le i_0 \le m$ such that the following three properties hold:
\begin{enumerate}
\item \label{it:i} $q_{i_0}$ is nontransient,
\item \label{it:ii} $|\pi_m \tau_{m-1} \pi_{m-1} \cdots \tau_{i_0} \pi_{i_0}| \ge s$,
\item \label{it:iii} $|\pi_m \tau_{m-1} \pi_{m-1} \cdots \tau_{{i_0}+1} \pi_{{i_0}+1}| \le s + |Q_P|$. 
\end{enumerate}
Indeed, let $0 \le i \le m$ be the smallest integer such that $q_i$ is nontransient (recall that $n \ge |Q_P|$ and hence $\pi$ must traverse a nontransient SCC). Then $\tau_{i-1} \pi_{i-1} \cdots \tau_0 \pi_0$ only passes transient states and hence its length is bounded by $|Q_P|$. Therefore, 
\begin{eqnarray*}
|\pi_m \tau_{m-1} \pi_{m-1} \cdots \tau_i \pi_i| & = & n - |\tau_{i-1} \pi_{i-1} \cdots \tau_0 \pi_0| \\
& \ge & n-|Q_P| \ge s
\end{eqnarray*}
Now let $0 \le {i_0} \le m$ be the largest integer satisfying Properties~\ref{it:i} and~\ref{it:ii}. If $\pi_m \tau_{m-1} \pi_{m-1} \cdots \tau_{{i_0}+1} \pi_{{i_0}+1}$ only passes transient states, then its length is bounded by $m-{i_0} \le s + m$, and we are done. Otherwise, let ${i_0}+1 \le j \le m$ be the smallest integer such that $q_j$ is nontransient. The run $\tau_{j-1} \pi_{j-1} \cdots \tau_{{i_0}+1} \pi_{{i_0}+1}$ only passes transient states and therefore it has length $j-{i_0}-1$.
By maximality of ${i_0}$, we have $|\pi_m  \tau_{m-1} \pi_{m-1} \cdots \tau_j \pi_j| < s$
and hence Property~\ref{it:iii} holds:

\begin{align*}
	|\pi_m \tau_{m-1} \pi_{m-1} \cdots \tau_{{i_0}+1} \pi_{{i_0}+1}| &= |\pi_m \cdots \tau_j \pi_j| + |\tau_{j-1} \pi_{j-1} \cdots \tau_{{i_0}+1} \pi_{{i_0}+1}|\\
	&< s + j - {i_0} \\
	&\le s + m.
\end{align*}
Let $0 \le {i_0} \le m$ be the index satisfying Properties~\ref{it:i}-\ref{it:iii}. Since $q_{i_0}$ is nontransient, we have ${i_0} \le e$
and therefore $\Acc(q_{i_0}) =_s \sum_{j={i_0}}^{k-1} r_j + g \N$ by the second
statement in Lemma~\ref{lemma-si-ri}. We have $|\pi_m \tau_{m-1} \pi_{m-1} \cdots \tau_{i_0} \pi_{i_0}| \in \Acc(q_{i_0})$ because it is larger than $s$ (by Property~\ref{it:ii}) and
\begin{align*}
	|\pi_m \tau_{m-1} \pi_{m-1} \cdots \tau_{i_0} \pi_{i_0}|  &= n - |\tau_{{i_0}-1} \pi_{{i_0}-1} \cdots \tau_0 \pi_0|\\
	& \equiv n - \sum_{j=0}^{{i_0}-1} r_j \pmod g\\
	&\equiv \sum_{j={i_0}}^{k-1} r_j \pmod g
\end{align*}
where the last congruence follows from $n \in \Acc(q_0) =_{s} \sum_{j=0}^{k-1} r_j + g \N$.
By Lemma~\ref{lem:sim} there exists an accepting run $\pi'$ of length
$|\pi_m \tau_{m-1} \pi_{m-1} \cdots \tau_{i_0} \pi_{i_0}|$ which $t$-simulates $\pi_{i_0}$.
The prefix distance between $\pi' \tau_{i-1} \pi_{{i_0}-1} \cdots \tau_0 \pi_0$ 
and $\pi = \pi_m \tau_{m-1} \pi_{m-1} \cdots \tau_0 \pi_0$ is at most

\begin{align*}
	|\pi_m \tau_{m-1} \pi_{m-1} \cdots \tau_{i_0}| + t &= |\pi_m \tau_{m-1} \pi_{m-1} \cdots \tau_{{i_0}+1} \pi_{{i_0}+1}| + 1 + t\\
	&\le 1 + s + m + t
\end{align*}
by Property~\ref{it:iii}.
\end{proof}

\begin{proof}[Proof of Theorem~\ref{theorem:one-sided_ub}]
Let $n \in \N$ be the window size. From the discussion above, it suffices to show a tester for a fixed partial automaton $B_P$. Assume $n \ge s + |Q|$, otherwise a trivial tester can be used. If $n \notin \Acc(q_0)$, the tester always rejects. Otherwise, the tester picks a random prime $p$ with $\Theta(\log \log n)$ bits and maintains $\ell_w(q) \pmod p$ for all $q \in Q_P$, where $w$ is the stream read so far, which requires $\O(\log \log n)$ bits. When a symbol $a \in \Sigma$ is read, we can update $\ell_{wa}$ using $\ell_w$: 
If $q = q_k$, then $\ell_{wa}(q) = 0$, otherwise $\ell_{wa}(q) = 1 + \ell_w(\delta_P(a,q)) \pmod p$
where $1 + \infty = \infty$. The tester accepts if $\ell_w(q_0) \equiv n \pmod p$. 
Lemma~\ref{lm:acceptingconditiononesided} guarantees correctness of the tester in the one-sided error setting.
\end{proof}

\section{Lower bounds}\label{section-lower-bounds}


A sliding window algorithm can be naturally seen as a family of finite automata (see~\cite{GHKLM18,GanardiHL18}).
We make use of this viewpoint in order to prove the lower bounds of Theorem~\ref{theorem:deterministic_lb} and Theorem~\ref{theorem:one-sided_lb}.
To get the strongest possible statements, we prove those lower bounds for so-called nondeterministic and co-nondeterministic sliding window testers.

A {\em nondeterministic finite automaton} (NFA) is a tuple $A = (Q,\Sigma,I,\delta,F)$
consisting of a finite set of states $Q$, a finite alphabet $\Sigma$,
a set of initial states $I \subseteq Q$, a transition relation $\delta \subseteq Q \times \Sigma \times Q$ and a set of final states $F \subseteq Q$.
Runs in NFAs are  defined similarly to DFAs and rDFAs. Formally, a run in the NFA $A$ is a sequence 
$(q_0, a_1, q_1, a_2, q_2, \ldots, a_n, q_n)$ such that $(q_{i-1}, a_i, q_i) \in \delta$ for all $1 \leq i \leq n$.
A word $w$ is accepted by $A$ ($w \in L(A)$ for short) if it labels a run from an initial state to a final state.

\begin{defn} \label{def-nondet-test}
A {\em nondeterministic sliding window tester} $\mathcal{A}=(A_n)_{n \geq 0}$ for the language $L$ with Hamming gap $\gamma(n)$ is a family of NFAs $A_n$ 
such that for each window size $n\ge 0$ and each stream $w\in\Sigma^*$ the following holds:
\begin{enumerate}
\item if $\last_n(w) \in L$, then $w \in L(A_n)$; \label{nfa1}
\item if $\dist(\last_n(w),L) > \gamma(n)$, then $w \notin L(A_n)$. \label{nfa2}
\end{enumerate}
\end{defn}
One can view every $A_n$ as a nondeterministic streaming algorithm that updates its memory state nondeterministically depending on the current input symbol.
Note that in order to have $\last_n(w) \in L$, it is enough to have at least one run of $A_n$ on $w\in\Sigma^*$ from an initial state to an accepting state.
This is equivalent to require that the active window is accepted by the algorithm with some probability greater than $0$ (if we assign to every state $q$ and every
symbol $a$ a probability distribution on the outgoing $a$-transitions of $q$).
On the other hand, if $\dist(\last_n(w),L) > \gamma(n)$, then all runs of $A_n$ on $w\in\Sigma^*$ from an initial state end in non-accepting states, i.e. the active window is rejected with probability $1$.

A second concept we use in this section are {\em coNFA}s. The only difference to NFAs is that a word $w$ is accepted by a coNFA $A$ if all runs on $w$ that begin in an initial state have to end in an accepting state. In other words, a word $w$ is rejected by $A$ if and only if there is at least one run on $w$ from an initial state to a non-accepting state.
A co-nondeterministic sliding window tester $\mathcal{A}=(A_n)_{n \geq 0}$ for $L$ with Hamming gap $\gamma(n)$ 
is a family of coNFAs $A_n$ such that for each window size $n\ge 0$ and each stream $w\in\Sigma^*$ the properties \ref{nfa1} and \ref{nfa2} in Definition~\ref{def-nondet-test} hold.
So if $\last_n(w) \in L$, then all runs of $A_n$ on $w\in\Sigma^*$ that start in an initial state end in an accepting state.
In other words, the algorithm accepts with probability $1$.
If $\dist(\last_n(w),L) > \gamma(n)$, then there is at least one run of $A_n$ on $w\in\Sigma^*$ that starts in an initial state and ends in a non-accepting state, i.e. the algorithm rejects with probability strictly greater than $0$.

Let $\mathcal{A}=(A_n)_{n \geq 0}$ be a (co-)nondeterministic sliding window tester and let $Q_n$ be the state set of $A_n$. Then the {\em space consumption} of $\mathcal{A}$ is defined as $s_{\mathcal{A}}(n)=\lceil \log |Q_n| \rceil$. This reflects the fact that 
states from $Q_n$ can be encoded with $s_{\mathcal{A}}(n)$ many bits.

We can now state our general lower bounds. 

\subsection{Nondeterministic lower bound}\label{sec:deterministic_lb}

\begin{restatable}{theorem-restatable}{lbnondet}
 \label{theorem-lower-bound}
Let $L$ be regular and nontrivial.
Then there is a constant $\eps_0$, $0 < \eps_0 \le 1$,
such that for every $0 \le \eps < \eps_0$,
every nondeterministic sliding window tester for $L$ with Hamming gap $\eps n$
uses space at least $\log_2 n + \log_2 (1-\eps/\eps_0) - \O(1)$ on an infinite set of window sizes $n$ (that only depends on $L$).
\end{restatable}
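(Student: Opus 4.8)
The plan is to reduce the statement, via the structural results of Section~\ref{sec-trivial}, to a clean combinatorial situation and then argue directly about the automata $A_n$. First I would invoke Corollary~\ref{triv-alon} together with Lemma~\ref{cuts} and Proposition~\ref{prop-nontriv}: since $L$ is regular and nontrivial, $\cut_{i,j}(L)$ is not a length language for any $i,j$, so $L$ has an infinite restriction $L|_N$ to an arithmetic progression $N$ that excludes some factor $W$. Writing $p = |W|$, I would set $\eps_0 = 1/p$. The point of this choice is twofold: for $\eps < \eps_0$ the block $W^{\lfloor \eps n\rfloor + 1}$ has length $(\lfloor\eps n\rfloor+1)\,p \le (\eps/\eps_0)\,n + O(1)$, hence leaves roughly $(1-\eps/\eps_0)\,n$ free positions inside a window of length $n\in N$; and by the disjoint-occurrence observation preceding Proposition~\ref{prop-nontriv}, any window of length in $N$ that carries $\lfloor\eps n\rfloor+1$ disjoint copies of $W$ has distance $>\eps n$ from $L$. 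These free positions are precisely what the term $\log_2(1-\eps/\eps_0)$ records.

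Next I would fix a window size $n\in N$ (these form the promised infinite set) large enough that $L\cap\Sigma^n\neq\emptyset$, and view $A_n$ as an NFA over streams. The heart of the proof is to exhibit $\Omega\big((1-\eps/\eps_0)\,n\big)$ configurations that $A_n$ must keep apart. A first attempt is a fooling set $\{(x_r,y_r)\}$ with $\last_n(x_ry_r)\in L$ (forcing an accepting run by property~\ref{nfa1} of Definition~\ref{def-nondet-test}) and with $\last_n(x_ry_{r'})$ or $\last_n(x_{r'}y_r)$ far from $L$ (forbidding an accepting run by property~\ref{nfa2}); here each $x_r$ slides the far-making block $W^{\lfloor\eps n\rfloor+1}$ to a position encoding $r$ inside an $L$-word of length $n$ obtained by pumping a cycle of a nontransient SCC (Lemma~\ref{lemma-alon}), so that it can be shifted while still lying in $L$ and realizing a length in $N$.

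The main obstacle, which I expect to dominate the work, is that sliding a block changes its Hamming distance to $L$ only by $O(1)$ per shifted position, so a naive fooling set separates configurations only at resolution $\eps n$ and yields a mere $\Omega(1/\eps)$ bound, not the claimed $\Omega\big((1-\eps/\eps_0)\,n\big)$. To obtain the linear bound one must use that $A_n$ answers correctly at \emph{every} prefix of the stream, i.e. argue about the \emph{runs} rather than about final membership. Concretely, I would read a single structured stream and, at the $\Theta\big((1-\eps/\eps_0)\,n\big)$ instants where the active window lies in $L$, track the states occupied by accepting runs; if two such states coincided, one could splice the two runs and, using the excluded factor $W$, reroute the stream so that the resulting accepting run certifies a window carrying $\lfloor\eps n\rfloor+1$ disjoint copies of $W$, i.e. a window with $\dist(\cdot,L)>\eps n$, contradicting property~\ref{nfa2}. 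Making this splicing legal --- so that the spliced window has the correct length in $N$, genuinely contains enough disjoint copies of $W$, and stays a valid run of $A_n$ --- is the delicate step, and it is exactly where the hypothesis $\eps<\eps_0$ (enough room for the copies) and the counting giving $\log_2 n + \log_2(1-\eps/\eps_0) - \O(1)$ enter. Finally, running the same argument with the roles of accepting and rejecting runs exchanged handles co-nondeterministic testers, giving the stronger statement announced in Section~\ref{section-lower-bounds}.
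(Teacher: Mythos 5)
Your opening moves coincide with the paper's: Proposition~\ref{prop-nontriv} gives an infinite restriction $L|_N$ excluding a factor $W$, one sets $\eps_0=1/|W|$, and the streams to feed $A_n$ are of the form $W^n$ followed by a pumped word of $L|_N$. Your instinct that a static fooling set is not enough and that one must pigeonhole on the states of runs of the NFA along a stream is also the right one. However, the step you explicitly defer as ``delicate'' is the entire content of the proof, and the plan you sketch for it points in the wrong direction. You propose to pigeonhole ``at the $\Theta((1-\eps/\eps_0)n)$ instants where the active window lies in $L$'' and then to ``reroute the stream'' so as to manufacture a window containing many copies of $W$. In the argument that works there is only \emph{one} instant at which the window lies in $L$ (the very end of the stream), no rerouting is needed, and the copies of $W$ are already present in the stream prefix. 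Concretely: take a DFA for the regular language $L|_N$, use infiniteness to get $x,y,z$ with $y\neq\lambda$ and $xy^*z\subseteq L|_N$, set $d=|xz|$, $e=|y|$, and restrict to window sizes $n=d+\ell e$ (these lie in $N$). Feed the single stream $v_\ell=W^nxy^\ell z$, whose final window $xy^\ell z$ is in $L$, so $A_n$ has an accepting run $\pi$; let $p_i$ be the state of $\pi$ after the prefix $W^nxy^i$. For all $i\le r:=\lfloor((1-|W|\eps)n-|W|-d)/e\rfloor$ the window at that instant is \emph{far} from $L$ (it still contains more than $\eps n$ disjoint copies of $W$), and these are the $r+1$ instants to pigeonhole over: if $A_n$ had at most $r$ states, two of $p_0,\dots,p_r$ would coincide, and cutting the resulting cycle out of $\pi$ repeatedly yields an accepting run of $A_n$ on some $v_k=W^nxy^kz$ with $k\le r$, contradicting property~(2) of Definition~\ref{def-nondet-test} because $\dist(\last_n(v_k),L)>\eps n$. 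This gives $|Q_n|>r=\Omega((1-\eps/\eps_0)n)$ directly; the spliced stream automatically has length at least $n$ and the disjoint occurrences of $W$ need not be re-verified, which dissolves the difficulties you anticipate.

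Your closing claim that exchanging the roles of accepting and rejecting runs handles co-nondeterministic testers is also not correct as stated. For a coNFA one must start from a \emph{rejecting} run on a bad stream $v_k$ and pump it \emph{up} until the window enters $L$; for the window of the pumped-up stream to actually lie in $L$ one needs $y^*z\subseteq L|_N$ with no leading $x$, which requires the extra structural analysis of Lemma~\ref{lem:not-sf} and the stronger hypothesis that $L$ is not a finite union of trivial and suffix-free regular languages. Indeed, for merely nontrivial $L$ the co-nondeterministic bound drops to $\log\log n$ and this is optimal by Theorem~\ref{theorem:one-sided_ub}, so a purely symmetric argument cannot exist.
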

\begin{proof}
By Lemma~\ref{cuts}, $\cut_{i,j}(L)$ is not a length language for all $i, j \ge 0$.
Let $N$ be the set of lengths from Proposition~\ref{prop-nontriv}
such that $L|_N$ is infinite and excludes some factor $w_f$. Let $c = |w_f| >0$ and $\eps_0 = 1/c$.
Since $N$ is an arithmetic progression, $L|_N$ is regular. 
Recall that every word $v$ that contains $k$ disjoint occurrences of $w_f$ has Hamming distance
at least $k$ from any word in $L|_N$. 
Let $A = (Q,\Sigma,q_0,\delta,F)$ be a DFA for $L|_N$.
Since $L(A)$ is infinite, there must exist words $x,y,z$ such that $y \neq \lambda$ and 
for $\delta(q_0,x) = q$ we have $\delta(q, y) = q$ and $\delta(q,z) \in F$. Let $d = |xz|$ and $e = |y|>0$.

Consider a nondeterministic sliding window tester $\mathcal{A} = (A_n)_{n \geq 0}$ for $L$ with Hamming gap $\eps n$
for some $\eps < \eps_0$.
Fix a window length $n\in N$ and 
define for $k \geq 0$ the input streams $u_k = w_f^n x y^k$ and $v_k = u_k z = w_f^n x y^k z$.
 Let $\alpha = c \eps < 1$.
If $0 \le k \le \lfloor \frac{(1-\alpha) n - c - d}{e} \rfloor$, then the suffix of $v_k$ of length $n$ 
contains at least
$$
\bigg\lfloor\frac{n - d - e k}{c} \bigg\rfloor \geq 
\bigg\lfloor\frac{n - d - (1-\alpha) n + c + d}{c} \bigg\rfloor = 
\bigg\lfloor\frac{\alpha n + c}{c} \bigg\rfloor = \lfloor \eps n + 1 \rfloor > \eps n
$$
many disjoint occurrences of $w_f$.
Hence, after reading any of the input streams $v_k$ for 
$0 \le k \le \lfloor \frac{(1-\alpha) n - c - d}{e} \rfloor$, the NFA $A_n$ has to reject with 
probability one, i.e., every run of $A_n$ on $v_k$ that starts in an initial state has to end in a rejecting state.

Assume now that the window size $n$ satisfies $n \geq d$ and $n \equiv d \pmod e$. Write $n = d + l e$ for some $l \geq 0$.
Note that each $n$ with this property satisfies $n\in N$ since $xy^lz\in L|_N$.
We have $l > \lfloor \frac{(1-\alpha) n - c - d}{e} \rfloor$.
The suffix of $v_l = w_f^n x y^l z$ of length $n$ is $x y^l z \in L|_N$. Therefore $A_n$ accepts $v_l$, i.e.,
there exists a run $\pi$ of $A_n$ on $v_l$ that starts in an initial state and ends in an accepting state. 
Let $m$ be the number of states of $A_n$. For $0 \leq i \leq l$ let $p_i$ be the state on the run $\pi$ that is reached
after the prefix $w_f^n x y^i$ of $v_l$.

Assume now that $m \leq \lfloor \frac{(1-\alpha) n - c - d}{e} \rfloor$.
Then there must exist numbers $i$ and $j$ with $0 \leq i < j \leq \lfloor \frac{(1-\alpha) n - c - d}{e} \rfloor$
such that $p_i = p_j =: p$. By cutting off cycles at $p$ from the run $\pi$ and repeating this, we finally obtain a run of $A_n$ on 
an input stream  $v_k = w_f^n x y^k z$ with $k \leq \lfloor \frac{(1-\alpha) n - c - d}{e} \rfloor$.
This run still goes from an initial state to an accepting state. Hence, $A_n$ accepts with probability $>0$ an input stream 
$v_k$ with $k \leq \lfloor \frac{(1-\alpha) n - c - d}{e} \rfloor$.  This contradicts our previous observation.
Hence, for every $n \geq d$ with $n \equiv d \pmod e$, $A_n$ must have more than 
$\lfloor \frac{(1-\alpha) n - c - d}{e} \rfloor$ states.
This implies
$$
s_{\mathcal{A}}(n) \geq \log_2\bigg( \frac{(1-\alpha) n - c - d}{e} \bigg) \geq \log_2 n + \log_2 (1-\alpha) - \O(1) ,
$$
which proves the theorem.
\end{proof}

Theorem~\ref{theorem:deterministic_lb} is a direct corollary of Theorem~\ref{theorem-lower-bound} since every deterministic 
sliding window tester is also a nondeterministic sliding window tester.

\begin{ex} \label{ex-lower-bound}
For the lower bound $\log_2 n + \log_2 (1-\eps/\eps_0) - \O(1)$ 
in Theorem~\ref{theorem-lower-bound} the Hamming gap has to be strictly below $\eps_0 n$,
where $\eps_0$ is a constant that depends on $L$.
This is in general not avoidable. Consider for instance the language $L_c = (\{a,b\}^{c-1} a)^*$. 
It is nontrivial, since for any $k$, the word $w_k=b^{c\cdot k}$ has Hamming distance $\dist(w_k,L_c)=k$ from $L_c$.
On the other hand this is also the worst-case, i.e., any word $w$ of length $n = ck$ has Hamming distance $\dist(w,L_c)\le k = n/c$ from $L_c$.
Hence, with constant space one can achieve a Hamming gap of $n/c$ using the algorithm that always accepts.
%
\end{ex}

\subsection{Co-nondeterministic lower bounds}\label{sec:one-sided_lb}

Using a power set construction presented in the following Lemma~\ref{lemma-powerset}, 
one directly obtains from Theorem~\ref{theorem-lower-bound} a lower bound for co-nondeterministic sliding window testers:

\begin{lem}\label{lemma-powerset}
If there exists a co-nondeterministic sliding window tester $\mathcal{A}=(A_n)_{n\ge 0}$ for $L$ with Hamming gap $\gamma(n)$ that uses space $s(n)$, then there is a deterministic sliding window tester for $L$ with Hamming gap $\gamma(n)$ that uses space $2^{s(n)}$.
\end{lem}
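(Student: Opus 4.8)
The plan is to apply the standard subset (power set) construction to each coNFA $A_n$ and to choose the acceptance condition that matches the co-nondeterministic semantics. Fix a window size $n$ and write $A_n = (Q_n, \Sigma, I_n, \delta_n, F_n)$. I would define a DFA $D_n$ whose states are the subsets of $Q_n$: the initial state is $I_n$, and the transition function is $\hat\delta(S,a) = \{ q' \mid \exists q \in S : (q,a,q') \in \delta_n \}$. Extending $\hat\delta$ to words in the usual way, after reading a stream prefix $a_1 \cdots a_t$ the state of $D_n$ is exactly the set of states of $A_n$ reachable from some initial state along that prefix. I declare a subset $S$ to be accepting in $D_n$ precisely when $S \subseteq F_n$.

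Next I would verify that $D_n$ accepts a stream $w$ if and only if $A_n$ accepts $w$ in the co-nondeterministic sense. By the coNFA semantics recalled before Definition~\ref{def-nondet-test}, $w \in L(A_n)$ iff every run of $A_n$ on $w$ starting in an initial state ends in $F_n$, which holds iff the reachable set $\hat\delta(I_n, w)$ is contained in $F_n$, i.e.\ iff $D_n$ accepts. Because $D_n$ computes this reachable set incrementally, reading the stream symbol by symbol from left to right, it is a legitimate deterministic sliding window algorithm, and its accept/reject output after each prefix coincides with the coNFA's verdict on the prefix read so far. Hence the two tester conditions of Definition~\ref{def-nondet-test} transfer verbatim: if $\last_n(w) \in L$, then $w \in L(A_n) = L(D_n)$ and $D_n$ accepts; if $\dist(\last_n(w),L) > \gamma(n)$, then $w \notin L(A_n) = L(D_n)$ and $D_n$ rejects. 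Thus $(D_n)_{n \ge 0}$ is a deterministic sliding window tester for $L$ with Hamming gap $\gamma(n)$.

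It remains to account for the space. The DFA $D_n$ has $2^{|Q_n|}$ states, so storing a current state requires $|Q_n|$ bits. Since the space consumption of $\mathcal{A}$ is $s(n) = \lceil \log_2 |Q_n| \rceil$, we have $|Q_n| \le 2^{s(n)}$, whence $D_n$ uses at most $2^{s(n)}$ bits, as claimed. There is essentially no hard step here; the only points requiring care are to fix the acceptance condition ($S \subseteq F_n$) consistently with the co-nondeterministic reading of ``$w$ is accepted,'' and to keep the two distinct uses of ``space'' (number of bits versus number of states) straight in the final counting.
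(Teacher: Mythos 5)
Your proposal is correct and matches the paper's own proof: the same powerset construction with the modified acceptance condition $S \subseteq F_n$, the same observation that $L(A_n) = L(A_n')$, and the same space count $|Q_n| \le 2^{s(n)}$ bits. The paper's version is merely terser, leaving the verification of $L(A_n)=L(A_n')$ as "straightforward."
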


\begin{proof}
Let $A_n=(Q_n,\Sigma,I_n,\delta,F_n)$. We apply the powerset construction and transform every coNFA $A_n$ into a DFA $A_n'$ with state set $\mathcal{P}(Q_n)$ (the power set of $Q_n$). The only difference to the powerset construction for NFAs is the following: a state $Q\subseteq Q_n$ of $A_n'$ is final if and only if $Q\subseteq F_n$ (for NFAs it is only required that $Q\cap F_n\neq \emptyset$). It is straightforward to see that $L(A_n)=L(A_n')$. Moreover, $A_n'$ has $2^{|Q_n|}$ many states.
\end{proof}

\begin{thm} \label{theorem-lower-bound-one-sided}
For every non-trivial regular language $L$ there is a constant $\eps_0$, $0 < \eps_0 \le 1$,
such that for every $0 \le \eps < \eps_0$, every co-nondeterministic sliding window tester for $L$ with Hamming gap $\eps n$
uses space  at least $\log_2 \log_2 n-\O(1)$ on an infinite set of window sizes $n$ (that only depends on $L$). 
\end{thm}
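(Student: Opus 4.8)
The plan is to obtain this statement as an immediate corollary of the nondeterministic lower bound (Theorem~\ref{theorem-lower-bound}) by routing a co-nondeterministic tester through the power set construction of Lemma~\ref{lemma-powerset}. First I would fix $\eps_0$ to be exactly the constant produced by Theorem~\ref{theorem-lower-bound} for the non-trivial regular language $L$, and fix an arbitrary $\eps$ with $0 \le \eps < \eps_0$. The infinite set of window sizes witnessing the lower bound will be inherited verbatim from Theorem~\ref{theorem-lower-bound}, so it depends only on $L$.

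Next I would start from a hypothetical co-nondeterministic sliding window tester $\mathcal{A} = (A_n)_{n \ge 0}$ for $L$ with Hamming gap $\eps n$ and space $s(n) = \lceil \log_2 |Q_n| \rceil$. Applying Lemma~\ref{lemma-powerset} turns $\mathcal{A}$ into a \emph{deterministic} sliding window tester $\mathcal{A}'$ for $L$ with the same Hamming gap $\eps n$ and space $2^{s(n)}$. Since every deterministic sliding window tester is in particular a nondeterministic one, $\mathcal{A}'$ qualifies as a nondeterministic sliding window tester for $L$ with Hamming gap $\eps n$, and Theorem~\ref{theorem-lower-bound} applies to it.

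Finally I would combine the two bounds. By Theorem~\ref{theorem-lower-bound}, on an infinite set of window sizes $n$ (depending only on $L$) the space of $\mathcal{A}'$ satisfies
\[
2^{s(n)} \ge \log_2 n + \log_2\!\left(1 - \eps/\eps_0\right) - \O(1).
\]
Because $\eps < \eps_0$ is fixed, the term $\log_2(1 - \eps/\eps_0)$ is a constant and is absorbed into $\O(1)$. Taking logarithms of both sides on this same infinite set of window sizes yields
\[
s(n) \ge \log_2\!\left(\log_2 n - \O(1)\right) \ge \log_2 \log_2 n - \O(1),
\]
which is the claimed bound.

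I do not expect a genuine obstacle here: the whole argument is a one-step reduction, and all quantifiers (the constant $\eps_0$, the range of $\eps$, and the infinite family of window sizes) are carried over directly from Theorem~\ref{theorem-lower-bound}. The only conceptual point worth flagging is that the exponential state blow-up of the power set construction is precisely what downgrades the $\Omega(\log n)$ nondeterministic bound to an $\Omega(\log \log n)$ co-nondeterministic bound; the remaining work is the routine bookkeeping of absorbing the $\eps$-dependent constant into $\O(1)$.
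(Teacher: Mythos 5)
Your proposal is correct and is exactly the paper's argument: the paper also obtains this theorem as an immediate corollary of Theorem~\ref{theorem-lower-bound} via the power set construction of Lemma~\ref{lemma-powerset}, with the $\eps$-dependent term and the loss from $\log_2(\log_2 n - \O(1))$ absorbed into the $\O(1)$. Nothing is missing.
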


Note that a randomized sliding window tester for $L$ with one-sided error is
also a co-nondeterministic sliding window tester for $L$. Hence, the doubly logarithmic space lower bound for non-trivial regular languages from
Theorem~\ref{theorem:one-sided_lb} is a direct corollary of Theorem~\ref{theorem-lower-bound-one-sided}.
Finally, for the logarithmic space lower bound in Theorem~\ref{theorem:one-sided_lb} we need the following two lemmas:

\begin{lem}
	\label{lem:sf-ex}
	Every regular suffix-free language excludes a factor.
\end{lem}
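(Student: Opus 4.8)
The plan is to reduce the statement to its mirror image and then argue by a monovariant on a subset construction. Since $L$ is suffix-free if and only if its reversal $L^{\rev} = \{w^{\rev} : w \in L\}$ is \emph{prefix-free} (a word $u$ is a proper suffix of $uy$ exactly when $u^{\rev}$ is a proper prefix of $(uy)^{\rev}$), and since $u$ is a factor of some word in $L$ iff $u^{\rev}$ is a factor of some word in $L^{\rev}$, it suffices to prove that every regular prefix-free language $P := L^{\rev}$ excludes a factor. I would fix a \emph{trimmed} DFA $A = (Q,\Sigma,q_0,\delta,F)$ for $P$ (every state reachable from $q_0$ and co-reachable to $F$); the cases $P = \emptyset$ and $P$ finite are immediate. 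The first key observation is that prefix-freeness forces every final state to be a \emph{dead end} in the trimmed automaton: if $f \in F$ had a transition $f \xrightarrow{a} q'$ inside the trimmed automaton, then $q'$ reaches some $f' \in F$ via a word $y$, and with $q_0 \xrightarrow{x} f$ we would get $x \in P$ and $xay \in P$ with $x$ a proper prefix of $xay$, a contradiction.

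The core of the argument is to show that $P$ cannot have \emph{every} word as a factor. I would assume the contrary and track, for each word $u$, the set $R(u) \subseteq Q$ of states reachable by reading $u$ starting from an arbitrary state of the trimmed automaton; note $u$ is a factor of $P$ iff $R(u) \neq \emptyset$. Because final states are dead ends, a final state in $R(u)$ contributes nothing to $R(ua)$, so if the non-final part $R'(u) := R(u) \setminus F$ were empty while $R(u) \neq \emptyset$, then $ua$ would fail to be a factor; hence under the assumption $R'(u) \neq \emptyset$ for all $u$. The decisive point is \emph{determinism}: since $\delta(\cdot,a)$ is a partial function, we have $|R'(ua)| \le |R'(u)|$, so the sizes $|R'(u)|$ are non-increasing along extensions and bounded below by $1$. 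I would therefore pick a word $u_0$ attaining the global minimum $s^* := \min_u |R'(u)| \ge 1$, so that $|R'(u_0 v)| = s^*$ for \emph{every} word $v$.

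To reach the contradiction I would exploit co-reachability against this minimality. Choose any $q \in R'(u_0)$; since $q$ is non-final but co-reachable, there is a nonempty word $z$ with $\delta(q,z) \in F$, and because final states are dead ends the entire run of $z$ from $q$ stays among non-final states until its very last step, where it exits to $F$. Consequently $q$ contributes no non-final state to $R'(u_0 z)$, and by determinism the remaining $s^*-1$ states of $R'(u_0)$ contribute at most $s^*-1$ states, giving $|R'(u_0 z)| \le s^* - 1$ and contradicting the minimality of $s^*$. This forces some word not to be a factor of $P$, and after reversing we obtain a factor excluded by $L$. The main obstacle, and the step deserving the most care, is isolating this size monovariant and verifying that the dead-end structure plus determinism make it strictly decrease; it is worth emphasizing that determinism is essential here, as the analogous claim fails for nondeterministic transition graphs.
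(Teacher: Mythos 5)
Your proof is correct. It is, however, organized quite differently from the paper's. The paper stays with the rDFA $B$ for $L$ (reading words right to left), observes that every maximal SCC must consist of non-final states (otherwise a final state could be revisited, contradicting suffix-freeness), merges these into a single absorbing state $q_{\mathit{fail}}$, and then \emph{explicitly constructs} the excluded factor as a synchronizing word $w_f$ with $\delta(p,w_f)=q_{\mathit{fail}}$ for every state $p$, by handling the states one at a time. You instead pass to the reversal, work with a trimmed (partial) forward DFA for the prefix-free language $L^{\rev}$, show that final states are dead ends there, and argue by contradiction via the monovariant $u \mapsto |R'(u)|$ on non-final reachable-state sets, killing one state at a time at a word of minimal rank. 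The two arguments share the same combinatorial engine --- accepting behaviour is ``absorbing/dead-end,'' and one extends a candidate word to eliminate states one by one, so both implicitly produce an excluded factor of length $O(|Q|^2)$ --- but your version trades the paper's explicit construction and its (slightly glossed-over) normalization of the maximal SCCs for the reversal bookkeeping plus a clean counting argument; your observation that determinism is what makes $|R'(\cdot)|$ non-increasing is exactly the right point to emphasize. Both proofs are complete and of comparable length; the paper's has the small advantage of exhibiting $w_f$ directly in the automaton model ($B$) used throughout Section~6, which is why it is phrased for rDFAs.
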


\begin{proof}
Let $B=(Q,\Sigma,F,\delta,q_0)$ be an rDFA for $L$. Since $L$ is suffix-free,
we can assume that there is a single maximal SCC that consists of a single state $q_{\mathit{fail}} \notin F$
(if a maximal SCC would contain a final state, then $L$ would not be suffix-free). We have
$\delta(a,q_{\mathit{fail}})=q_{\mathit{fail}}$ for all $a \in \Sigma$.
We construct a word $w_f \in \Sigma^*$ such that $\delta(p, w_f) = q_{\mathit{fail}}$ for all $p \in Q$.
Let $p_1, \ldots, p_m$ be an enumeration of all states in $Q \setminus \{q_{\mathit{fail}}\}$.
We then construct inductively words $w_0,w_1, \ldots, w_m \in \Sigma^*$ such that 
for all $0 \leq i \leq m$:  $\delta(w_i, p) = q_{\mathit{fail}}$ for all $p \in \{p_1, \ldots, p_i\}$. We start
with $w_0 = \lambda$. Assume that $w_i$ has been constructed for some $i < m$.
There is a word $x$ such that that $\delta( x, \delta(w_i, p_{i+1})) = q_{\mathit{fail}}$. We set $w_{i+1} = xw_i$.
Then $\delta(w_{i+1}, p_{i+1}) = \delta(xw_i, p_{i+1})= q_{\mathit{fail}}$ and 
$\delta(w_{i+1},p_j) = \delta(w_ix,p_j)= \delta(x,q_{\mathit{fail}}) = q_{\mathit{fail}}$ for $1 \leq j \leq i$.
We finally define $w_f = w_m$.
\end{proof}

\begin{lem}
\label{lem:not-sf}
Every regular language $L$ satisfies 
one of the following properties:
\begin{itemize}
\item $L$ is a finite union of regular trivial languages and regular suffix-free languages.
\item $L$ has a restriction $L|_N$ which excludes some factor
and contains $y^*z$ for some $y,z \in \Sigma^*$, $|y| > 0$.
\end{itemize}
\end{lem}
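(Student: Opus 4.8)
The plan is to establish the dichotomy by analyzing the structure of a DFA for $L$ through the lens of the cut-languages machinery developed in Section~\ref{sec-trivial}. The key dividing line should be whether all cut-languages $\cut_{i,j}(L)$ are length languages or not. If \emph{every} $\cut_{i,j}(L)$ is a length language, then by Lemma~\ref{cuts} the language $L$ is trivial, so the first alternative holds (with $L$ itself being a single trivial language). The interesting case is when some $\cut_{i,j}(L)$ fails to be a length language, and here I expect to have to decompose $L$ into pieces indexed by the behavior on each SCC of the automaton.

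\textbf{Main approach.} First I would fix a DFA (or rDFA) $A = (Q,\Sigma,q_0,\delta,F)$ for $L$ and consider the natural decomposition of $L$ according to which SCCs the accepting run passes through. Following the path-description idea from Definition~\ref{def:path-descript}, one can write $L$ as a finite union $L = \bigcup_P L_P$, where each $L_P$ collects the words whose run realizes a fixed chain of SCCs $C_{k-1}, \dots, C_0$. The plan is to argue that each such piece $L_P$ is \emph{either} suffix-free \emph{or} forces the existence of the second alternative. Specifically, a piece $L_P$ fails to be suffix-free exactly when some nontransient SCC in its chain can contribute accepting words of more than one length while remaining inside the accepting structure — this is the situation that lets one pump and produce the pattern $y^*z$. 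So the dichotomy should emerge componentwise.

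\textbf{Key steps in order.} I would (1) handle the all-length-languages case via Lemma~\ref{cuts} to land in the trivial alternative; (2) otherwise invoke Proposition~\ref{prop-nontriv} to obtain an infinite restriction $L|_N$ on an arithmetic progression $N$ that excludes some factor $w_f$, and by Lemma~\ref{lem:sf-ex} note that factor-exclusion is the hallmark of suffix-free behavior; (3) examine whether $L|_N$ (or the relevant piece) contains arbitrarily long words sharing a common suffix, which by a pumping argument in a nontransient SCC yields a word of the form $y^*z$ with $|y| > 0$ lying inside $L|_N$; (4) conversely, when no such pumping is possible, argue that after removing the finitely many short exceptional lengths, each surviving piece is genuinely suffix-free, so $L$ is a finite union of trivial and suffix-free regular languages. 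The restriction to the arithmetic progression $N$ is harmless because $L|_N$ remaining regular (as noted in the proof of Theorem~\ref{theorem-lower-bound}) lets me apply the pumping constant $g$ from Lemma~\ref{lemma-alon} uniformly.

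\textbf{The hard part} will be step (3)–(4): cleanly separating, at the level of individual SCC-chains, the \emph{suffix-free} pieces from the pieces that admit a pumpable cycle producing the $y^*z$ pattern, and verifying that the factor-exclusion property (which Proposition~\ref{prop-nontriv} gives only for the whole restricted language) descends to the specific piece containing $y^*z$. Concretely, I expect the subtlety to be ensuring that the \emph{same} restriction $L|_N$ simultaneously excludes a factor \emph{and} contains $y^*z$; one must choose $y$ as a cycle label in a nontransient SCC whose period is compatible with the progression $N$, and $z$ as a label of a path from the cycle into $F$, then check that the factor $w_f$ excluded by $L|_N$ is genuinely absent from all of $y^*z$ (which holds because $y^*z \subseteq L|_N$). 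Balancing these two requirements is the crux, and I would resolve it by reading off $y$ and $z$ directly from the infinite accepting runs guaranteed by the non-length-language hypothesis.
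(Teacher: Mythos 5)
Your overall instinct---decompose $L$ via the automaton, separate suffix-free pieces from pumpable ones, and use Lemma~\ref{cuts}/Proposition~\ref{prop-nontriv} to get the excluded factor---is in the right spirit, but the case analysis is organized around the wrong object, and this creates a genuine gap at your steps (3)--(4). The paper's proof splits $L$ (via an rDFA $B$) as $L = L(B_r) \cup \bigcup_q L(B_q)$, where $B_r$ keeps only the \emph{non-transient} final states and each $B_q$ has a single \emph{transient} final state $q$. Each $L(B_q)$ is automatically suffix-free (no run into $q$ extends to another run into $q$), so the whole dichotomy reduces to one question: is $L_r = L(B_r)$ trivial? If yes, the first alternative holds outright; if no, Proposition~\ref{prop-nontriv} is applied \emph{to $L_r$}, which guarantees $L_r|_N$ is infinite, hence nonempty, hence pumpable at a non-transient final state, giving $y^*z \subseteq L_r|_N$ after adjusting $|y|$ to the progression's step. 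You instead apply Proposition~\ref{prop-nontriv} to $L$ itself and then ask whether the resulting $L|_N$ happens to contain $y^*z$. The infinitude of that $L|_N$ may come entirely from the suffix-free pieces (consider $L$ containing a nontrivial suffix-free part like $ba^*b$), in which case no $y^*z$ exists in your $L|_N$; your step (4) then has to conclude the first alternative, but nothing you have established implies that $L_r$ is trivial, so that conclusion does not follow. Your stated criterion for when a piece fails to be suffix-free (``some nontransient SCC in its chain'') is also not the right one: pumping a cycle strictly before the final state preserves membership but does not violate suffix-freeness; only a cycle through (or past) a final state does, which is exactly why the transient/non-transient status of the \emph{final} state is the correct dividing line.

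A second, smaller gap: in the second alternative you must show that $L|_N$ --- the \emph{whole} language restricted to $N$, including the suffix-free pieces --- excludes a factor, not just that the piece containing $y^*z$ does. The paper handles this by invoking Lemma~\ref{lem:sf-ex} to get an excluded factor $w_q$ for each suffix-free piece $L(B_q)$ and then taking the concatenation of $w$ (excluded by $L_r|_N$) with all the $w_q$ as the factor excluded by $L|_N$. You cite Lemma~\ref{lem:sf-ex} but use it in the opposite direction (``factor-exclusion is the hallmark of suffix-free behavior''), and your ``hard part'' worries about the exclusion descending to the piece containing $y^*z$, which is the wrong direction of containment: the work is in lifting exclusion from the individual pieces up to their union. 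Reorganizing your argument around the triviality of $L_r$ and adding the concatenated-factor step would close both gaps.
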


\begin{proof}
	Let $B=(Q,\Sigma,F,\delta,q_0)$ be an rDFA for $L$.
	Let $B_r=(Q,\Sigma,F_r,\delta,q_0)$ where $F_r$ is the set of non-transient final states
	and $B_q = (Q,\Sigma,\{q\},\delta,q_0)$ for $q \in Q$.
	We can decompose $L$ as a union of $L_r = L(B_r)$ and all languages $L(B_q)$
	over all transient states $q \in F$.
	Notice that $L(B_q)$ is suffix-free for all transient $q \in F$
	since any run to $q$ cannot be prolonged to another run to $q$.
	If $L_r$ is trivial, then $L$ satisfies the first property.
	If $L_r$ is nontrivial, then by Lemma~\ref{cuts} and Proposition~\ref{prop-nontriv}
	there exists an arithmetic progression $N = \{a+bn \mid n \in \N\}$ such that
	$L_r|_N$ is infinite and excludes some word $w \in \Sigma^*$ as a factor.
	Let $z \in L_r|_N$ be any word.
	Since $B_r$ reaches some non-transient final state $p$ on input $z$
	there exists a word $y$ which leads from $p$ back to $p$.
	We can ensure that $|y|$ is a multiple of $b$ by replacing $y$ by a suitable power $y^i$.
	Then $y^*z \subseteq L_r|_N \subseteq L|_N$.
	Furthermore since each language $L(B_q)$ excludes some factor $w_q$ by Lemma~\ref{lem:sf-ex}
	the language $L|_N \subseteq L_r|_N \cup \bigcup_q L(B_q)$
	excludes any concatenation of $w$ and all words $w_q$ as a factor.
\end{proof}

\begin{restatable}{theorem-restatable}{lbconondet}
Let $L$ be a regular language that is not a finite union of regular trivial languages and regular suffix-free languages. Then there is a constant $\eps_0$, $0 < \eps_0 \le 1$,
such that for every $0 \le \eps < \eps_0$, every co-nondeterministic sliding window tester for $L$ with Hamming gap $\eps n$
uses space at least $\log_2 n + \log_2 (1-\eps/\eps_0) - \O(1)$ on an infinite set of window sizes $n$ (that only depends on $L$). 
\end{restatable}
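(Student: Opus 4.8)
The plan is to mirror the nondeterministic lower bound of Theorem~\ref{theorem-lower-bound}, but to \emph{dualize} its pumping argument: instead of shrinking an accepting run, I will enlarge a rejecting run. First I would invoke Lemma~\ref{lem:not-sf}: since $L$ is not a finite union of regular trivial and regular suffix-free languages, it has a restriction $L|_N$ which excludes some factor $w_f$ and contains $y^*z$ for some $y,z$ with $|y|>0$. I set $c=|w_f|$ and $\eps_0 = 1/c$. Because $y^mz \in L|_N$ for every $m$, the set $N$ contains the arithmetic progression $\{\,|z|+m|y| : m \in \N\,\}$, and these will be the infinitely many window sizes on which I prove the bound. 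As in the nondeterministic proof, any length-$n$ word of $L$ lies in $L|_N$ (as $n \in N$) and hence excludes $w_f$, so a window containing $k$ disjoint occurrences of $w_f$ has distance at least $k$ from $L$.

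Fix a window size $n = |z|+m|y|$ with $m$ large and consider the streams $\sigma_k = w_f^n\, y^k z$. The key is the behaviour of the active window as $k$ varies. For $k \ge m$ we have $\last_n(\sigma_k) = y^m z \in L$, since the window lies inside $y^kz = y^{k-m}(y^mz)$ and stripping whole copies of $y$ from the front of $y^kz$ leaves exactly $y^mz$. For small $k$, the window is a length-$(m-k)|y|$ suffix of $w_f^n$ followed by $y^kz$; that suffix is right-aligned with a copy boundary of $w_f^n$ and hence contains $\lfloor (m-k)|y|/c\rfloor$ disjoint copies of $w_f$, so the window is $\eps n$-far from $L$ as long as $k$ stays below a threshold $k^\ast \approx n(1-\eps/\eps_0)/|y|$. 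Here $\eps < \eps_0$ guarantees $1-\eps c > 0$, so this range of $k$ is nonempty and $k^\ast < m$. By the co-nondeterministic tester property, $\sigma_{k^\ast}$ admits a run $\rho$ from an initial state to a non-accepting state, whereas every run on any $\sigma_{k'}$ with $k' \ge m$ must be accepting.

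The contradiction comes from pigeonholing the states of $\rho$ reached after the prefixes $w_f^n y^i$ for $0 \le i \le k^\ast$. If $A_n$ had at most $k^\ast$ states, two of these would coincide, exhibiting a $y$-labelled cycle inside $\rho$; inserting enough copies of it yields a rejecting run on some $\sigma_{k'}$ with $k' \ge m$, whose active window is $y^mz \in L$, contradicting that all runs on $\sigma_{k'}$ accept. Hence $|Q_n| > k^\ast$, which gives $s_{\mathcal{A}}(n) \ge \log_2 k^\ast = \log_2 n + \log_2(1-\eps/\eps_0) - \O(1)$ on the infinitely many admissible $n$. The main subtlety — and the precise reason mere nontriviality (the hypothesis of Theorem~\ref{theorem-lower-bound}) does not already give this bound in the co-nondeterministic model — is that the \emph{pumped-up} window must land back in $L$. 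This is exactly what the prefix-free family $y^*z \subseteq L|_N$ supplied by Lemma~\ref{lem:not-sf} provides: had we only a family $xy^*z$ with a nontrivial prefix $x$, pumping $y$'s would eventually erase $x$ from the window and leave it outside $L$, so the argument would break.
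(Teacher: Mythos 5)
Your proposal is correct and follows essentially the same route as the paper's proof: it invokes Lemma~\ref{lem:not-sf} to obtain the restriction $L|_N$ excluding a factor $w_f$ and containing $y^*z$, uses the streams $w_f^n y^k z$, and pigeonholes the states along a rejecting run to pump it up to a stream whose window $y^m z$ lies in $L$, contradicting the coNFA acceptance condition. The closing observation about why $y^*z$ (rather than $xy^*z$) is needed matches the role this plays in the paper's argument.
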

\begin{proof}
By Lemma~\ref{lem:not-sf}, $L$ has a restriction $L|_N$ which excludes some factor $w_f$
and contains $y^*z$ for some $y,z \in \Sigma^*$, $|y| > 0$.
Let $c = |w_f| \geq 1$. We set $\eps_0 = 1/c$.
Let $d = |z|$ and $e = |y|$.
Fix a window length $n\in N$ and 
define for $k \geq 0$ the input streams $u_k = w_f^n y^k$ and $v_k = u_k z = w_f^n y^k z$.
Consider a co-nondeterministic sliding window tester $\mathcal{A} = (A_n)_{n \geq 0}$  for $L$ with Hamming gap $\eps n$
for some $\eps < \eps_0$. Let $\alpha = c \eps < 1$ and $r= \lfloor \frac{(1-\alpha) n - c - d}{e} \rfloor$.
If $0 \le k \le r$, then the suffix of $v_k$ of length $n$ 
contains at least
$$
\bigg\lfloor\frac{n - d - e k}{c} \bigg\rfloor \geq 
\bigg\lfloor\frac{n - d - (1-\alpha) n + c + d}{c} \bigg\rfloor = 
\bigg\lfloor\frac{\alpha n + c}{c} \bigg\rfloor = \lfloor \eps n + 1 \rfloor  > \eps n
$$
many disjoint occurrences of $w_f$.
Hence, after reading any of the input streams $v_k$ for 
$0 \le k \le r$, the coNFA $A_n$ has to reject, i.e., there is an $A_n$-run on $v_k$ that starts in an initial state and ends in a non-accepting state.
Consider an $A_n$-run $\pi$ on $v_r$ that goes from an initial state to a non-accepting state.
For $0 \leq i \leq r$ let $p_i$ be the state in $\pi$ that is reached after the prefix $w_f^n y^i$ of $v_r$.
Let now $m$ be the number of states of $A_n$ and assume $m\le r$. There must exist numbers $i$ and $j$ with $0 \leq i < j \leq r$
such that $p_i = p_j =: p$.
It follows that there is an $A_n$-run on $y^{j-i}$ that starts and ends in state $p$.
Using that cycle we can now prolong the run $\pi$, i.e., for all $t\ge 0$ there is an $A_n$-run on $v_{r+(j-i)\cdot t}=w_f^n y^{r+(j-i)\cdot t}z$ 
that starts in an initial state and ends in a non-accepting state.

Assume now that the window size satisfies $n \geq d$ and $n \equiv d \pmod e$. Write $n = d + l e$ for some $l \geq 0$.
Note again that each $n$ with this property satisfies $n\in N$ since the word $y^lz$ belongs to $L|_N$.
We have $l > \lfloor \frac{(1-\alpha) n - c - d}{e} \rfloor = r$. For every $k \ge l$,
the suffix of $v_k = w_f^n y^k z$ of length $n$ is $y^l z \in L$. Therefore $A_n$ accepts $v_k$, i.e., for all $k\ge l$,
every $A_n$-run on $v_k$ that starts in an initial state has to end in an accepting state.
This contradicts our observation that for all $t\ge 0$ there is an $A_n$-run on $v_{r+(j-i)\cdot t}$ that goes from an initial state to a 
non-accepting state. Hence, $A_n$ has at least $r+1 \geq \frac{(1-\alpha) n - c - d}{e}$ states.
It follows that 
$$
s_{\mathcal{A}}(n) \geq  \log_2\bigg( \frac{(1-\alpha) n - c - d}{e} \bigg) \geq \log_2 n + \log_2 (1-\eps/\eps_0) - \O(1).
$$
This proves the theorem.
\end{proof}

\section{Further research}

We gave a complete characterization of the space complexity of sliding window  testers for regular 
languages. A natural open research problem is, whether similar results can be shown for context-free languages:
\begin{itemize}
\item Does every context-free language $L$ has a deterministic sliding window tester with Hamming gap $\eps n$ (or even $\mathcal{O}(1)$) that uses space $\mathcal{O}(\log n)$ (or at least 
space $o(n)$)?
\item Does every context-free language $L$ has a randomized sliding window tester with Hamming gap $\eps n$ (or even $\mathcal{O}(1)$) that uses space $\mathcal{O}(1)$ (or at least 
space $o(n)$)?
\end{itemize}
If the answers to these questions turn out be negative, then one might look at 
deterministic context-free languages or visibly pushdown languages.


\end{document}